\documentclass{llncs}

\newcommand{\dia}{\hfill{$\diamond$}}

\usepackage{amssymb,amsmath,graphicx}
\usepackage{pgf}
\usepackage{color,enumerate,comment,boxedminipage}
\usepackage{float}
\usepackage{hyperref}
\usepackage{comment}
\usepackage{tikz} 
\usepackage[capitalise,noabbrev]{cleveref} 
\usetikzlibrary{shapes,snakes}
\usepackage[T1]{fontenc}

\newcommand{\problemdef}[3]{
	\begin{center}
		\begin{boxedminipage}{.99\textwidth}
			\textsc{{#1}}\\[2pt]
			\begin{tabular}{ r p{0.8\textwidth}}
				\textit{~~~~Instance:} & {#2}\\
				\textit{Question:} & {#3}
			\end{tabular}
		\end{boxedminipage}
	\end{center}
}

\newtheorem{open}[example]{Open Problem}

\newcommand{\ssi}{\subseteq_i}
\newcommand{\si}{\supseteq_i}

\newcommand{\SFVS}{\textsc{Subset Feedback Vertex Set}}
\newcommand{\SOCT}{\textsc{Subset Odd Cycle Transversal}}

\newcommand{\NP}{{\sf NP}}
\newcommand{\XP}{{\sf XP}}

\bibliographystyle{abbrv}
\pagestyle{plain}

\oddsidemargin=1.4cm
\evensidemargin=1.4cm
\textwidth=14.1cm
\textheight=22.2cm
\topmargin=-1cm

\newcounter{daggerfootnote}
\newcommand*{\daggerfootnote}[1]{
    \setcounter{daggerfootnote}{\value{footnote}}
    \renewcommand*{\thefootnote}{\ensuremath{\dagger}}
    \footnote[2]{#1}
    \setcounter{footnote}{\value{daggerfootnote}}
    \renewcommand*{\thefootnote}{\arabic{footnote}}
    }

\title{Computing Subset Transversals in $H$-Free Graphs\protect\daggerfootnote{This paper received support from the Leverhulme Trust (RPG-2016-258). An extended abstract of it appeared in the proceedings of WG 2020~\cite{BJPP20}.}}

\author{Nick Brettell\inst{1}
\and
Matthew Johnson\inst{2}
\and
Giacomo Paesani\inst{2}
\and
Dani\"el Paulusma\inst{2}}

\institute{School of Mathematics and Statistics, Victoria University of Wellington, New Zealand\\
\texttt{nick.brettell@vuw.ac.nz}
\and
Department of Computer Science, Durham University, UK\\
\texttt{\{matthew.johnson2,giacomo.paesani,daniel.paulusma\}@durham.ac.uk}
}

\begin{document}

\maketitle

\begin{abstract}
We study the computational complexity of two well-known graph transversal problems, namely {\sc Subset Feedback Vertex Set} and {\sc Subset Odd Cycle Transversal}, by restricting the input to $H$-free graphs, that is, to graphs that do not contain some fixed graph~$H$ as an induced subgraph. By combining known and new results, we determine the computational complexity of both problems on $H$-free graphs for every graph $H$ except when $H=sP_1+P_4$ for some $s\geq 1$. As part of our approach, we introduce the {\sc Subset Vertex Cover} problem and prove that it is polynomial-time solvable for $(sP_1+P_4)$-free graphs for every $s\geq 1$.

\medskip
\noindent
{\bf Keywords.} feedback vertex; odd cycle transversal; hereditary graph class; $H$-free, complexity dichotomy
\end{abstract}

\section{Introduction}

The central question in Graph Modification is whether or not a graph~$G$ can be modified into a graph from a prescribed class~${\cal G}$ via at most $k$ graph operations from a prescribed set $S$ of permitted operations
such as vertex or edge deletion.
The \emph{transversal} problems {\sc Vertex Cover}, {\sc Feedback Vertex Set} and {\sc Odd Cycle Transversal} are classical problems 
of this kind.
For example, the {\sc Vertex Cover} problem
is equivalent to asking if one can delete at most~$k$ vertices to turn~$G$ into a member of the class of edgeless graphs. The problems {\sc Feedback Vertex Set} and {\sc Odd Cycle Transversal} ask if a graph~$G$
can be turned into, respectively,  a forest or a bipartite graph by deleting vertices.

We can relax the condition on belonging to a prescribed class to obtain some related \emph{subset transversal} problems.  We state these formally after some definitions.
For a graph~$G=(V,E)$ and a set $T \subseteq V$, a {\it $T$-cycle} is a (not necessarily induced) cycle of $G$ that intersects $T$.
A $T$-cycle is {\it odd} if it has an odd number of vertices.
A set $S_T\subseteq V$~is a {\it $T$-vertex cover} of $G$ if $S_T$ contains at least one of the two end-vertices for every edge incident to a vertex of $T$. A set $S_T\subseteq V$~is a {\it $T$-feedback vertex set} or an {\it odd $T$-cycle transversal} of $G$ if $S_T$ contains at least one vertex of every $T$-cycle, or every odd $T$-cycle, respectively. 
For example, let $G$ be a star with center vertex~$c$, whose leaves form the set $T$. Then, both $\{c\}=V\setminus T$ and~$T$ are $T$-vertex covers of $G$ but the first is considerably smaller than the second.  
See Figures~\ref{subset-house} and~\ref{f-example} for some more examples. 

\problemdef{{\sc Subset Vertex Cover}}{a graph $G=(V,E)$, a subset $T\subseteq V$ and a positive integer $k$.}{does $G$ have a $T$-vertex cover $S_T$ with $|S_T|\leq k$?} 

\problemdef{{\sc Subset Feedback Vertex Set}}{a graph $G=(V,E)$, a subset $T\subseteq V$ and a positive integer $k$.}{does $G$ have a $T$-feedback vertex set $S_T$ with $|S_T|\leq k$?} 

\problemdef{{\sc Subset Odd Cycle Transversal}}{a graph $G=(V,E)$, a subset $T\subseteq V$ and a positive integer $k$.}{does $G$ have an odd $T$-cycle transversal $S_T$ with $|S_T|\leq k$?} 

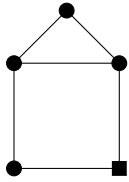
\begin{figure}
\begin{center}
\begin{tikzpicture}[scale=0.7]
\draw (-1,1)--(0,2)--(1,1)--(-1,1)--(-1,-1)--(1,-1)--(1,1);
\draw[fill=black] (-1,1) circle [radius=4pt] (0,2) circle [radius=4pt] (1,1) circle [radius=4pt] (-1,-1) circle [radius=4pt]  
(1,-1) node[regular polygon,regular polygon sides=4,draw,fill=black,scale=0.6pt] {};
\node[below] at (0,1) {};
\end{tikzpicture}
\caption{The house in which we let $T$ consist of the square vertex. The empty set is not an odd $T$-cycle transversal, as there exists a non-induced odd cycle (containing all the vertices of the graph) passing through the square vertex. Hence, the size of a minimum odd $T$-cycle transversal is $1$, and in particular, this example shows that looking for a minimum odd $T$-cycle transversal (the problem we consider) or a minimum odd {\it induced} $T$-cycle transversal are two different problems.} \label{subset-house}
\end{center} 
\end{figure}

\begin{figure}
\begin{center}
\begin{minipage}{0.45\textwidth}
\centering
\begin{tikzpicture}[xscale=0.65, yscale=0.65]
\draw (0,2)--(1.16,-1.6)--(-1.9,0.6)--(1.9,0.6)--(-1.16,-1.6)--(0,2) (-2.85,0.9)--(-1.74,-2.4)--(1.74,-2.4)--(2.85,0.9)--(0,3)--(-2.85,0.9) 
(-1.9,0.6)--(-2.85,0.9) (-1.16,-1.6)--(-1.74,-2.4) (1.16,-1.6)--(1.74,-2.4) (1.9,0.6)--(2.85,0.9) (0,2)--(0,3);
\draw[fill=white](-1.9,0.6) circle [radius=5pt] (-1.16,-1.6) circle [radius=5pt] (1.16,-1.6) circle [radius=5pt] (1.9,0.6) circle [radius=5pt] 
(0,2) node[regular polygon,regular polygon sides=4,draw,fill=black,scale=0.7pt] {} (-2.85,0.9) node[regular polygon,regular polygon sides=4,draw,fill=white,scale=0.7pt] {} 
(-1.74,-2.4) node[regular polygon,regular polygon sides=4,draw,fill=black,scale=0.7pt] {} (1.74,-2.4) node[regular polygon,regular polygon sides=4,draw,fill=white,scale=0.7pt] {} (0,3) circle [radius=5pt];
\draw[fill=black](2.85,0.9) circle [radius=5pt];
\end{tikzpicture}
\end{minipage}
\qquad
\begin{minipage}{0.45\textwidth}
\centering
\begin{tikzpicture}[xscale=0.65, yscale=0.65]
\draw (0,2)--(1.16,-1.6)--(-1.9,0.6)--(1.9,0.6)--(-1.16,-1.6)--(0,2) (-2.85,0.9)--(-1.74,-2.4)--(1.74,-2.4)--(2.85,0.9)--(0,3)--(-2.85,0.9)
(-1.9,0.6)--(-2.85,0.9) (-1.16,-1.6)--(-1.74,-2.4) (1.16,-1.6)--(1.74,-2.4) (1.9,0.6)--(2.85,0.9) (0,2)--(0,3);
\draw[fill=white] (-1.9,0.6) circle [radius=5pt] (-1.16,-1.6) circle [radius=5pt] (1.16,-1.6) circle [radius=5pt] (1.9,0.6) circle [radius=5pt]
(2.85,0.9) circle [radius=5pt] (0,2) node[regular polygon,regular polygon sides=4,draw,fill=black,scale=0.7pt] {}
(-2.85,0.9) node[regular polygon,regular polygon sides=4,draw,fill=black,scale=0.7pt] {} (-1.74,-2.4) node[regular polygon,regular polygon sides=4,draw,fill=white,scale=0.7pt] {}
(1.74,-2.4) node[regular polygon,regular polygon sides=4,draw,fill=black,scale=0.7pt] {}(0,3) circle [radius=5pt];
\end{tikzpicture}
\end{minipage}
\caption{In both examples, the square vertices of the Petersen graph form a set $T$ and the black vertices form an odd $T$-cycle transversal $S_T$, which is also a $T$-feedback vertex set. In the left example, $S_T\cap (V\setminus T)\neq \emptyset$, and in the right example, $S_T\subseteq T$.}\label{f-example}
\end{center}
\vspace*{-0.5cm}
\end{figure}
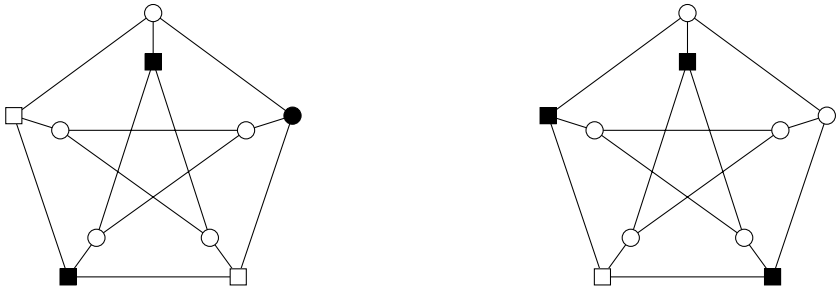

\noindent
The  
{\sc Subset Feedback Vertex Set} and {\sc Subset Odd Cycle Transversal} problems are well known. The {\sc Subset Vertex Cover} problem is introduced in this paper, and we are not aware of past work on this problem (see Section~\ref{s-con} for open problems). 
On general graphs, {\sc Subset Vertex Cover} is polynomially equivalent to {\sc Vertex Cover}: to solve {\sc Subset Vertex Cover} remove edges in the input graph that are not incident to any vertex of $T$ to yield an equivalent instance of {\sc Vertex~Cover}.
However, this equivalence no longer holds for graph classes that are {\it not} closed under edge deletion. 

As the three problems are \NP-complete, we consider the restriction of the input to special graph classes in order to better understand which graph properties cause the computational hardness. 
Instead of classes closed under edge deletion, we focus on classes of graphs closed under vertex deletion. Such classes are called {\it hereditary}.
The reasons for this choice are threefold. First, hereditary graph classes capture many well-studied graph classes. Second, every hereditary graph class~${\cal G}$ can be characterized by a (possibly infinite) set~${\cal F}_{\cal G}$ of forbidden induced subgraphs. This enables us to initiate a {\it systematic} study, starting from the case where
 $|{\cal F}_{\cal G}|=1$. Third, we aim to extend and strengthen existing complexity results (that are for hereditary graph classes).
If ${\cal F}_{\cal G}=\{H\}$ for some graph $H$, then 
${\cal G}$ is \emph{monogenic}, and 
 every $G\in {\cal G}$ is \emph{$H$-free}. Our research question is:

\medskip
\noindent
{\it How does the structure of a graph~$H$ influence the computational complexity of a subset transversal problem for input graphs that are $H$-free?}

\medskip
\noindent
As a general strategy one might first try to prove that the restriction to $H$-free graphs is \NP-complete if~$H$ contains a cycle or an induced claw (the 4-vertex star). 
This is usually done by showing, respectively, that the problem is \NP-complete on graphs of arbitrarily large girth (the length of a shortest cycle) and on line graphs, which form a subclass of claw-free graphs. If this is the case, then it remains to consider the case where~$H$ has no cycle, and has no claw either. So~$H$ is a \emph{linear forest}, that is, the disjoint union of one or more paths.

\smallskip
\noindent
{\bf Existing Results.} As \NP-completeness results for transversal problems carry over to subset transversal problems, we first discuss results on {\sc Feedback Vertex Set} and {\sc Odd Cycle Transversal} for $H$-free graphs.  By Poljak's construction~\cite{Po74}, {\sc Feedback Vertex Set}  is \NP-complete for graphs of girth at least~$g$ for every integer~$g\geq 3$. The same holds for {\sc Odd Cycle Transversal}~\cite{CHJMP18}. Moreover, {\sc Feedback Vertex Set}~\cite{Mu17b} and {\sc Odd Cycle Transversal}~\cite{CHJMP18} are \NP-complete for line graphs and thus for claw-free graphs. Hence, both problems are \NP-complete for $H$-free graphs if $H$ has a cycle or claw. Both problems are polynomial-time solvable for $P_4$-free graphs~\cite{BK85}, for $sP_2$-free graphs for every $s\geq 1$~\cite{CHJMP18} and for $(sP_1+P_3)$-free graphs for every $s\geq1$~\cite{DFJPPP19}. In addition, {\sc Odd Cycle Transversal} is \NP-complete for $(P_2+P_5,P_6)$-free graphs~\cite{DFJPPP19}.
Very recently, Abrishami et al. showed that {\sc Feedback Vertex Set} is polynomial-time solvable for $P_5$-free graphs~\cite{ACPRS20}.
We summarize as follows 
($F\ssi G$ means that $F$ is an induced subgraph of $G$;
see Section~\ref{s-pre} for the other notation used).

\begin{theorem}\label{t-known}
For a graph $H$, {\sc Feedback Vertex Set} on $H$-free graphs is 
polynomial-time solvable if 
$H\ssi P_5$,  $H\ssi sP_1+P_3$ or $H\ssi sP_2$ for some $s\geq 1$, and is \NP-complete if $H\si C_r$ for some $r\geq 3$ or $H\si K_{1,3}$.
\end{theorem}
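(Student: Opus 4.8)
The plan is to obtain Theorem~\ref{t-known} by assembling the results surveyed above, glued together by a single elementary monotonicity fact. That fact is: if $F\ssi F'$, then every $F$-free graph is also $F'$-free, so the class of $F$-free graphs is contained in the class of $F'$-free graphs; hence a polynomial-time algorithm for {\sc Feedback Vertex Set} on $F'$-free graphs restricts to one on $F$-free graphs, while an \NP-hardness construction on $F$-free graphs carries over to $F'$-free graphs. I would state and verify this observation first (the contrapositive: an induced copy of $F'$ in a graph yields an induced copy of $F$).

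For the polynomial-time claims I would argue case by case. If $H\ssi P_5$, apply the algorithm of Abrishami et al.~\cite{ACPRS20} for $P_5$-free graphs together with the monotonicity fact; this also recovers the $P_4$-free case of~\cite{BK85}, since $P_4\ssi P_5$. If $H\ssi sP_1+P_3$ for some $s\geq1$, apply the algorithm of~\cite{DFJPPP19}; if $H\ssi sP_2$ for some $s\geq1$, apply the algorithm of~\cite{CHJMP18}; in both cases monotonicity finishes the job. For the \NP-completeness claims, {\sc Feedback Vertex Set} is in \NP, so only hardness is needed. If $H\si C_r$ with $r\geq3$, then taking $g=r+1$ in Poljak's construction~\cite{Po74} gives \NP-hardness on graphs of girth at least $r+1$, which contain no cycle on at most $r$ vertices and are in particular $C_r$-free; since $C_r\ssi H$, monotonicity lifts this to $H$-free graphs. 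If $H\si K_{1,3}$, then the \NP-hardness of {\sc Feedback Vertex Set} on line graphs~\cite{Sp83}, together with the facts that line graphs are claw-free and $K_{1,3}\ssi H$, gives \NP-hardness on $H$-free graphs.

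I do not expect a genuine obstacle here: the whole argument is bookkeeping on top of cited results. The only points needing care are getting the direction of the monotonicity fact right, and, on the hardness side, checking that the instances produced by~\cite{Po74} and~\cite{Sp83} really do avoid an induced $C_r$ and an induced $K_{1,3}$ respectively --- the former because large girth forbids short cycles, the latter because line graphs are claw-free. All the substantive content, in particular the $P_5$-free algorithm, lies in the cited papers rather than in this theorem.
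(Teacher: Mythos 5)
Your proposal is correct and matches the paper's treatment: Theorem~\ref{t-known} is stated there purely as a summary of the cited results (Poljak's large-girth construction, the line-graph hardness of~\cite{Sp83}, and the algorithms of~\cite{ACPRS20,BK85,CHJMP18,DFJPPP19}), glued together by exactly the induced-subgraph monotonicity observation you spell out. No further argument is given or needed.
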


\begin{theorem}\label{t-known2}
For a graph $H$, 
 {\sc Odd Cycle Transversal} on $H$-free~graphs is polynomial-time solvable if 
$H=P_4$, $H\ssi sP_1+P_3$ or $H\ssi sP_2$ for some~$s\geq 1$, and is \NP-complete  if $H\si C_r$ for some $r\geq 3$, $H\si K_{1,3}$, $H\si P_6$ or $H\si P_2+P_5$.
\end{theorem}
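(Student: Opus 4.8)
The plan is to prove the polynomial-time half and the \NP-completeness half separately, in each case assembling the results listed in the ``Existing Results'' paragraph together with one elementary fact about the induced-subgraph order: if $H'\ssi H$ then, by transitivity of the induced-subgraph relation, every $H'$-free graph is $H$-free, so the class of $H'$-free graphs is contained in the class of $H$-free graphs. Hence polynomial-time solvability of {\sc Odd Cycle Transversal} transfers \emph{downward} along this order --- from $H$-free graphs to $H'$-free graphs for every $H'\ssi H$ --- and \NP-hardness transfers \emph{upward} --- from $H$-free graphs to $H''$-free graphs for every $H''\si H$. So it suffices to handle the extremal graphs named in the theorem. Membership in \NP\ is routine: to check a candidate $S_T$ one decides whether $G-S_T$ has an odd $T$-cycle, equivalently whether some vertex of $T\setminus S_T$ lies in a non-bipartite block of $G-S_T$ (a cycle lies inside a single block, and in a $2$-connected non-bipartite graph every vertex is on an odd cycle), and this is polynomial; each hardness claim below therefore yields \NP-completeness.

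For the polynomial-time cases I would simply invoke the known algorithms: for $H=P_4$, the algorithm for $P_4$-free graphs~\cite{BK85}; for $H\ssi sP_1+P_3$ with $s\geq 1$, the algorithm for $(sP_1+P_3)$-free graphs~\cite{DFJPPP19}; and for $H\ssi sP_2$ with $s\geq 1$, the algorithm for $sP_2$-free graphs~\cite{CHJMP18}. Downward transfer then covers every induced subgraph of $P_4$, of $sP_1+P_3$, and of $sP_2$, respectively. I would add the remark that $P_4$ must be listed on its own because it is not an induced subgraph of any $sP_1+P_3$ or $sP_2$, whereas all of its \emph{proper} induced subgraphs are covered by those two families.

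For the \NP-completeness cases I would argue as follows. If $C_r\ssi H$ for some $r\geq 3$: {\sc Odd Cycle Transversal} is \NP-complete on graphs of girth at least $g$ for every $g\geq 3$~\cite{CHJMP18}, and a graph of girth at least $r+1$ is $C_r$-free, so the problem is \NP-complete on $C_r$-free graphs, hence on $H$-free graphs by upward transfer. If $K_{1,3}\ssi H$: {\sc Odd Cycle Transversal} is \NP-complete on line graphs~\cite{CHJMP18}, which are claw-free, so the problem is \NP-complete on $K_{1,3}$-free graphs, hence on $H$-free graphs. If $P_6\ssi H$ or $P_2+P_5\ssi H$: {\sc Odd Cycle Transversal} is \NP-complete on $(P_2+P_5,P_6)$-free graphs~\cite{DFJPPP19}, and this class is contained both in the class of $P_6$-free graphs and in the class of $(P_2+P_5)$-free graphs, so the problem is \NP-complete on each, hence on $H$-free graphs for every $H$ with $P_6\ssi H$ or $P_2+P_5\ssi H$.

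The step I expect to be ``the main obstacle'' is, honestly, not a mathematical obstacle at all: the theorem is essentially bookkeeping over previously published results, so the only things to be careful about are getting the direction of propagation along the induced-subgraph order right, and not over-claiming on the polynomial side. In particular, for {\sc Odd Cycle Transversal} one cannot replace ``$H=P_4$'' by ``$H\ssi P_5$'': unlike for {\sc Feedback Vertex Set} (compare Theorem~\ref{t-known} and~\cite{ACPRS20}), no polynomial-time algorithm for {\sc Odd Cycle Transversal} on $P_5$-free graphs is known, which is exactly why the two summary theorems differ on this point.
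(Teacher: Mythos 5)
Your proposal is correct and matches the paper's (implicit) justification of this theorem: the paper likewise presents it as a compilation of the cited results --- polynomial-time algorithms for $P_4$-free~\cite{BK85}, $(sP_1+P_3)$-free~\cite{DFJPPP19} and $sP_2$-free graphs~\cite{CHJMP18}, and \NP-completeness on graphs of large girth and on line graphs~\cite{CHJMP18} and on $(P_2+P_5,P_6)$-free graphs~\cite{DFJPPP19} --- combined with exactly the upward/downward transfer along the induced-subgraph order that you describe. Your added remarks on \NP-membership and on why $P_4$ is listed separately are consistent with the paper and introduce no gap.
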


We note that no integer~$r$ is known such that {\sc Feedback Vertex Set} is \NP-complete for $P_r$-free graphs. This situation changes for {\sc Subset Feedback Vertex Set} which is, 
unlike {\sc Feedback Vertex Set}, 
\NP-complete for split graphs (that is, $(2P_2,C_4,C_5)$-free graphs), as shown by Fomin et al.~\cite{FHKPV14}.
Papadopoulos and Tzimas~\cite{PT19,PT20} proved that {\sc Subset Feedback Vertex Set} is polynomial-time solvable for $sP_1$-free graphs for any~$s\geq 1$, co-bipartite graphs, interval graphs and permutation graphs, and thus $P_4$-free graphs.
Some of these results were generalized by Bergougnoux et al.~\cite{BPT19}, who solved an open problem of Jaffke et al.~\cite{JKT20} by giving an $n^{O(w^2)}$-time algorithm for {\sc Subset Feedback Vertex Set} given a graph and a decomposition of this graph of mim-width~$w$. This does not lead to new results for $H$-free graphs: 
a class of $H$-free graphs has bounded mim-width if and only if $H\ssi P_4$~\cite{BHMPP}.

We are not aware of any results on {\sc Subset Odd Cycle Transversal} for $H$-free graphs, but note that 
this problem generalizes 
{\sc Odd Multiway Cut}, just as {\sc Subset Feedback Vertex Set} generalizes 
{\sc Node Multiway Cut}, another well-studied problem.
We refer to a large body of literature~\cite{BBBK20,CFLMRS17,CPPW13,FHKPV14,GHKS14,HK18,IWY16,KKK12,KK12,KW12,LMRS17} for further details, in particular for parameterized and exact algorithms for {\sc Subset Feedback Vertex Set} and {\sc Subset Odd Cycle Transversal}. These algorithms are beyond the scope of this paper.

\medskip
\noindent
{\bf Our Results.}
We significantly extend the known results for {\sc Subset Feedback Vertex Set} and {\sc Subset Odd Cycle Transversal} on $H$-free graphs. These new results lead us to the following two almost-complete dichotomies:

\begin{theorem}\label{t-main}
Let $H$ be a graph with $H\neq sP_1+P_4$ for all $s\geq 1$. 
Then {\sc Subset Feedback Vertex Set} on $H$-free graphs is polynomial-time solvable if 
$H=P_4$ or $H\ssi sP_1+P_3$ for some $s\geq 1$, and is \NP-complete otherwise.
\end{theorem}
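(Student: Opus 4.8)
The plan is to treat the polynomial-time cases and the \NP-complete cases separately.

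\smallskip\noindent\emph{Polynomial-time cases.}
There are two situations. If $H=P_4$, then the $H$-free graphs are exactly the $P_4$-free graphs (cographs), and I would simply invoke the known algorithm of Papadopoulos and Tzimas~\cite{PT19,PT20} for \SFVS{} on $P_4$-free graphs. If $H\ssi sP_1+P_3$ for some $s\geq 1$, then every $H$-free graph is $(sP_1+P_3)$-free (an induced subgraph of an induced subgraph of $G$ is an induced subgraph of $G$), so it suffices to prove the new result that \SFVS{} is polynomial-time solvable on $(sP_1+P_3)$-free graphs. I expect this algorithmic step to be the main obstacle of the theorem, since this algorithm (and the structural analysis of $(sP_1+P_3)$-free graphs it relies on) is the genuinely new part of the argument. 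Here I would argue as follows: if a $(sP_1+P_3)$-free graph $G$ has no independent set of size $s$, then $G$ is $sP_1$-free and we are done again by~\cite{PT19,PT20}; otherwise $G$ has an independent set $I$ of size $s$, and then $G-N[I]$ has no induced $P_3$ (such a $P_3$ together with $I$ would form an induced $sP_1+P_3$) and hence is a disjoint union of cliques. In the latter case I would analyse how an optimal $T$-feedback vertex set behaves on $N[I]$ and on the cliques of $G-N[I]$, using that a $T$-feedback vertex set is precisely a vertex set whose removal leaves every block meeting $T$ trivial; after a bounded amount of branching on the interaction of the solution with a suitably chosen bounded-size substructure, each resulting instance should be solvable in polynomial time by a reduction to a tractable auxiliary problem, for instance an instance of {\sc Subset Vertex Cover} or a weighted matching/covering problem on the cliques.

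\smallskip\noindent\emph{\NP-complete cases.}
Suppose $H\neq P_4$, $H\not\ssi sP_1+P_3$ for all $s\geq 1$, and $H\neq sP_1+P_4$ for all $s\geq 1$. The key observation is that \SFVS{} generalizes {\sc Feedback Vertex Set}: for $T=V$ the $T$-cycles are all cycles and the $T$-feedback vertex sets are all feedback vertex sets. Hence \SFVS{} on $H$-free graphs is at least as hard as {\sc Feedback Vertex Set} on $H$-free graphs, so \Cref{t-known} yields \NP-completeness whenever $H\si C_r$ for some $r\geq 3$ or $H\si K_{1,3}$; that is, whenever $H$ contains a cycle or an induced claw. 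It remains to handle the case where $H$ is a linear forest.

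\smallskip\noindent In that case I would first prove the short combinatorial claim that $H$ contains an induced $2P_2$. The induced subgraphs of the graphs $sP_1+P_3$ are exactly the graphs of the form $aP_1$, $aP_1+P_2$, and $aP_1+P_3$; so a linear forest $H$ that is none of these and is also not of the form $sP_1+P_4$ (the case $s=0$ being $P_4$ itself) must either have a component on at least five vertices or have at least two components on at least two vertices each, and in both cases $2P_2\ssi H$. Since split graphs are $(2P_2,C_4,C_5)$-free, they are in particular $2P_2$-free and hence $H$-free, so the result of Fomin et al.~\cite{FHKPV14} that \SFVS{} is \NP-complete on split graphs gives \NP-completeness on $H$-free graphs. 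As membership in \NP{} is clear throughout, this completes the proof.
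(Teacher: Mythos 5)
Your classification skeleton is sound and matches the paper's: \NP-completeness when $H$ has a cycle or claw follows from Theorem~\ref{t-known} because {\sc Feedback Vertex Set} is the special case $T=V$; the remaining hard linear forests contain an induced $2P_2$ (your case analysis here is correct given the exclusions in the statement), so hardness on split graphs~\cite{FHKPV14} applies; and $H=P_4$ is covered by~\cite{PT19}. However, there is a genuine gap in the one place you yourself identify as the main obstacle: the polynomial-time algorithm for $(sP_1+P_3)$-free graphs is not proved, only sketched, and the sketch as given does not work. You fix an arbitrary independent set $I$ of size $s$ and observe that $G-N[I]$ is a disjoint union of cliques; that much is true, but then you assert that ``a bounded amount of branching on the interaction of the solution with a suitably chosen bounded-size substructure'' reduces everything to a tractable auxiliary problem. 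With $I$ chosen arbitrarily, $N[I]$ can contain almost the whole graph and the optimal $T$-feedback vertex set can intersect it in an arbitrary, unbounded way, so there is no bounded-size substructure on which to branch, and no reduction to {\sc Subset Vertex Cover} or a matching problem is identified or justified. This is precisely the difficulty the paper's Theorem~\ref{sfvs-sp1p3} is designed to overcome.

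The paper's resolution is structurally different from your sketch in a way that matters: the independent set is not chosen in $G$ at large but inside $F_T\cap T$, where $F_T=V\setminus S_T$ is the $T$-forest left by an (unknown) optimal solution. Because $G[F_T\cap T]$ is a $P_3$-free forest, it is a union of $P_1$'s and $P_2$'s, and the algorithm splits into cases by the number and size of its components: if it has at least $2s$ components, one guesses $2s$ of them (a set $A$ of at most $4s$ vertices), and the $(sP_1+P_3)$-freeness forces $|N(A)\cap F_T|\le 2s+1$ (Claims~1--2 there), which is exactly what makes the branching bounded before applying the extension routine of \cref{sfvs-solvep3free} on the clique part $V\setminus N[A]$; if it has few components all of bounded size, one enumerates $F_T\cap T$ and uses \cref{bound} to bound $|S_T\setminus T|$; and if it has a large component, \cref{tree-sp1p3} (the structure of large $(sP_1+P_3)$-free trees) pins down that component up to deletable leaves, again enabling bounded guessing. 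None of this machinery, nor any substitute for it, appears in your proposal, so the polynomial-time half of the theorem for $H\ssi sP_1+P_3$ remains unproven.
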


\begin{theorem}\label{t-main2}
Let $H$ be a graph with $H\neq sP_1+P_4$ for all $s\geq 1$. 
Then {\sc Subset Odd Cycle Transversal} on $H$-free graphs is polynomial-time solvable if 
$H=P_4$ or $H\ssi sP_1+P_3$ for some $s\geq 1$ and  \NP-complete otherwise.
\end{theorem}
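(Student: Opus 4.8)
We treat the hardness and the polynomial-time parts in turn; the argument mirrors the proof of Theorem~\ref{t-main} for \SFVS, the genuinely new element being a hardness reduction for \SOCT\ on split graphs.

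\emph{Hardness.} Every \NP-completeness result for \textsc{Odd Cycle Transversal} transfers to \SOCT, since the empty set is an odd cycle transversal of a graph exactly when it is an odd $T$-cycle transversal for every choice of $T$. By Theorem~\ref{t-known2} we may therefore assume that $H$ is a linear forest (no induced cycle, no induced claw $K_{1,3}$) with, additionally, $H\neq P_4$, $H\neq sP_1+P_4$ for all $s\geq 1$, and $H\not\ssi sP_1+P_3$ for all $s\geq 1$. If every component of $H$ had at most four vertices and at most one component of $H$ had at least two vertices, then $H$ would be an induced subgraph of $sP_1+P_3$ for some $s\geq 1$, or else equal to $sP_1+P_4$ for some $s\geq 0$; both are excluded, so some component of $H$ has at least five vertices (whence $P_5\ssi H$) or at least two components of $H$ have at least two vertices (whence $2P_2\ssi H$). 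As split graphs are $(2P_2,C_4,C_5)$-free they are $2P_2$-free, and since $P_5\si 2P_2$ they are $P_5$-free too; hence in either case split graphs form a subclass of the $H$-free graphs, and it suffices to prove that \SOCT\ is \NP-complete on split graphs.

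Membership in \NP\ is clear: given $S_T$, one can test in polynomial time whether $G-S_T$ has a non-bipartite block containing a vertex of $T\setminus S_T$. For hardness we reduce from \textsc{Vertex Cover} on a connected graph $G$. Let $G'$ be the split graph on vertex set $V(G)\cup E(G)$ in which $V(G)$ induces a clique, $E(G)$ induces an independent set, and each $e=uv\in E(G)$ is adjacent precisely to $u$ and $v$; put $T=E(G)$. Since the only neighbours of $e=uv$ in $G'$ are $u$ and $v$, the vertex $e$ lies on an odd cycle of $G'-S_T$ if and only if $u,v\notin S_T$, in which case the triangle $\{u,e,v\}$ is an odd $T$-cycle. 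As every odd $T$-cycle of $G'$ passes through a vertex of $E(G)$, it follows that $S_T$ is an odd $T$-cycle transversal of $G'$ if and only if $\{e,u,v\}\cap S_T\neq\emptyset$ for every $e=uv\in E(G)$. A routine exchange argument (swap any chosen edge-vertex for one of its endpoints) then shows that the minimum size of such a set equals the vertex cover number of $G$, completing the reduction.

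\emph{Polynomial cases.} As $H\ssi sP_1+P_3$ for some $s$ implies that every $H$-free graph is $(sP_1+P_3)$-free, it suffices to solve \SOCT\ in polynomial time on $P_4$-free graphs and, for each fixed $s\geq 1$, on $(sP_1+P_3)$-free graphs. On $P_4$-free graphs we run a dynamic program over the cotree, storing for each subtree a bounded signature --- the numbers of undeleted vertices and of undeleted $T$-vertices (each capped at a small constant) and a bit recording whether the graph on the undeleted vertices has an odd $T$-cycle --- and one checks that these signatures compose correctly under disjoint union and join. On $(sP_1+P_3)$-free graphs we first reduce to the connected case: such a graph has at most $s$ connected components that are not complete --- otherwise an induced $P_3$ in one of them together with one vertex from each of $s$ further non-complete components would induce $sP_1+P_3$ --- while on a complete component $K$ an optimal partial solution, when $K\cap T\neq\emptyset$ and $|K|\geq 3$, deletes the cheaper of $K\cap T$ and an arbitrary set of $|K|-2$ vertices, and deletes nothing otherwise. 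It then remains to solve \SOCT\ on a connected, non-complete, $(sP_1+P_3)$-free graph $G$. Here we use that such graphs are ``dense'': we identify a bounded-size set of vertices such that, after fixing the solution's intersection with it (polynomially many choices), the remaining problem reduces to \SOCT\ on an $sP_1$-free graph --- which is polynomial-time solvable by a separate argument exploiting bounded independence number --- or to \textsc{Subset Vertex Cover} on an $(sP_1+P_4)$-free graph, which is polynomial-time solvable.

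\emph{Main obstacle.} The technical heart is this connected $(sP_1+P_3)$-free case: pinning down the right bounded branching and showing that every residual instance reduces to \SOCT\ on an $sP_1$-free graph or to \textsc{Subset Vertex Cover} on an $(sP_1+P_4)$-free graph, all while keeping the exponent of the running time independent of $s$. The two auxiliary results this rests on --- polynomiality of \SOCT\ on $sP_1$-free graphs, and of \textsc{Subset Vertex Cover} on $(sP_1+P_4)$-free graphs (the latter a contribution of this paper) --- have to be established along the way.
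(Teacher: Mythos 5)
Your hardness half is correct and is essentially the paper's argument: reduce the remaining linear-forest cases to $2P_2\ssi H$ (so split graphs are $H$-free) and give the split-graph reduction from {\sc Vertex Cover} with $T=E$, which is exactly Theorem~\ref{soct-split}. The polynomial half, however, has a genuine gap. The heart of Theorem~\ref{t-main2} is the $(sP_1+P_3)$-free case, and your plan for it is only a placeholder: you assert that after guessing the solution's intersection with some bounded set, the residual instance ``reduces to \SOCT\ on an $sP_1$-free graph or to {\sc Subset Vertex Cover} on an $(sP_1+P_4)$-free graph,'' but no such reduction is exhibited, and it does not match the structure that actually arises. In the paper's proof (Theorem~\ref{soct-sp1p3}), once a maximum $T$-bipartite set $B_T$ has many $T$-vertices one guesses a bounded independent set $A\subseteq T$ and a bounded portion of $N(A)$; what is left is $Z=V\setminus N[A]$, which is $P_3$-free, i.e.\ an \emph{unbounded} disjoint union of cliques attached to the guessed core. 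This is neither $sP_1$-free nor a {\sc Subset Vertex Cover} instance; the extension into $Z$ requires per-clique parity arguments and proofs that cliques can be processed independently (Claims~1--11 of that proof), including the separate subcase where $|N(A)\cap B_T|$ is large, which your sketch does not anticipate. (Your auxiliary claim that \SOCT\ is polynomial on $sP_1$-free graphs is itself provable by the bounded-$|B_T\cap T|$ argument plus Lemma~\ref{bound}, but it is only the easy half of the case analysis.) Notably, the paper never uses {\sc Subset Vertex Cover} on $(sP_1+P_4)$-free graphs inside this case; that subroutine is used only for $P_4$-free inputs. So the main new algorithm of the theorem is missing from your proposal, as you yourself concede.

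A smaller but real flaw: your cotree dynamic program for $P_4$-free graphs stores, per subtree, capped counts of kept vertices and kept $T$-vertices plus an ``odd $T$-cycle'' bit. That signature does not compose at join nodes: whether the join creates an odd $T$-cycle depends on adjacency inside the kept sets, e.g.\ two kept adjacent non-$T$ vertices on one side plus one kept $T$-vertex on the other yields a $T$-triangle, while the non-adjacent configuration is harmless, and both have identical signatures under your scheme. The approach can be repaired by enriching the signature with bits such as ``kept set contains an edge'' and ``kept set contains an edge incident to a $T$-vertex'' (and kept size capped at~$2$), after which a finite-signature DP does work and is a genuine alternative to the paper's route, which instead argues that in the surviving join case the two sides must be hit by $T$-vertex covers and invokes the {\sc Subset Vertex Cover} algorithm for cographs (Lemma~\ref{svc-p4}, Theorem~\ref{soct-p4}). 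As written, though, both polynomial-time components of your plan are incomplete, and the $(sP_1+P_3)$-free case is an essential missing proof rather than a routine verification.
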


\begin{figure}
\centering
\begin{minipage}[b]{0.04\textwidth}
\begin{tikzpicture}[xscale=0.5, yscale=0.5] \draw (0,2.5)--(0,-0.5);
\draw[fill=black] (0,0.5) circle [radius=3pt] (0,1.5) circle [radius=3pt] (0,-0.5) circle [radius=3pt] (0,2.5) circle [radius=3pt];
\node at (0,-3) {$P_4$};
\end{tikzpicture}
\end{minipage}
\qquad
\begin{minipage}[b]{0.04\textwidth}
\begin{tikzpicture}[xscale=0.5, yscale=0.5] \draw (0,2)--(0,-2);
\draw[fill=black] (0,2) circle [radius=3pt] (0,1) circle [radius=3pt] (0,0) circle [radius=3pt] (0,-1) circle [radius=3pt] (0,-2) circle [radius=3pt];
\node at (0,-4) {$P_5$};
\end{tikzpicture}
\end{minipage}
\qquad
\begin{minipage}[b]{0.02\textwidth}
\begin{tikzpicture}[xscale=0.5, yscale=0.5] \draw (0,2.5)--(0,-2.5);
\draw[fill=black] (0,2.5) circle [radius=3pt] (0,1.5) circle [radius=3pt] (0,0.5) circle [radius=3pt] (0,-0.5) circle [radius=3pt] (0,-1.5) circle [radius=3pt] (0,-2.5) circle [radius=3pt];
\node at (0,-4) {$P_6$};
\end{tikzpicture}
\end{minipage}
\qquad
\begin{minipage}[b]{0.08\textwidth}
\begin{tikzpicture}[xscale=0.5, yscale=0.5] \draw (1,2.5)--(1,-1.5) (0,1.3)--(0,-0.3);
\draw[fill=black] (1,2.5) circle [radius=3pt] (1,1.5) circle [radius=3pt] (1,0.5) circle [radius=3pt] (1,-0.5) circle [radius=3pt] (1,-1.5) circle [radius=3pt] (0,1.3) circle [radius=3pt] (0,-0.3) circle [radius=3pt];
\node at (0,-3.5) {$P_2+P_5$};
\end{tikzpicture}
\end{minipage}
\qquad
\begin{minipage}[b]{0.08\textwidth}
\begin{tikzpicture}[xscale=0.5, yscale=0.5] \draw (0.7,2)--(0.7,-0.4) (-0.8,2.5)--(-1,2.5)--(-1,-0.9)--(-0.8,-0.9); \draw[fill=black] (0.7,2) circle [radius=3pt] 
(0.7,-0.4) circle [radius=3pt] (0.7,0.8) circle [radius=3pt] (-0.5,2.3) circle [radius=2.5pt] (-0.5,-0.7) circle [radius=2.5pt]  (-0.5,1.8) circle [radius=1.5pt] (-0.5,1.3) circle [radius=1.5pt] (-0.5,0.8) circle [radius=1.5pt] (-0.5,0.3) circle [radius=1.5pt] (-0.5,-0.2) circle [radius=1.5pt];
\node[left] at (-1,0.8) {$s$}; \node at (0,-3) {$sP_1+P_3$};
\end{tikzpicture}
\end{minipage}
\qquad
\begin{minipage}[b]{0.08\textwidth}
\begin{tikzpicture}[xscale=0.5, yscale=0.5] \draw (-0.5,1.5)--(0.5,1.5) 
(-0.5,-0.75)--(0.5,-0.75) (-0.5,2.25)--(0.5,2.25) (-0.8,2.45)--(-1,2.45)--(-1,-0.95)--(-0.8,-0.95);
\draw[fill=black] (-0.5,2.25) circle [radius=3pt] (-0.5,1.5) circle [radius=3pt] (-0.5,-0.75) circle [radius=3pt] 
(0.5,2.25) circle [radius=3pt] (0.5,1.5) circle [radius=3pt] (0.5,-0.75) circle [radius=3pt] 
(0,1.05) circle [radius=1.5pt] (0,0.6) circle [radius=1.5pt] (0,0.15) circle [radius=1.5pt] (0,-0.3) circle [radius=1.5pt];
\node[left] at (-1,0.65) {$s$};
\node at (0,-3) {$sP_2$};
\end{tikzpicture}
\end{minipage}
\qquad
\begin{minipage}[b]{0.13\textwidth}
\begin{tikzpicture}[xscale=0.4, yscale=0.4]
\draw (0,2)--(1.9,0.6)--(1.16,-1.6)--(-1.16,-1.6)--(-1.9,0.6)--(0,2); \draw[fill=black] (-1.9,0.6) circle [radius=3.6pt] 
(-1.16,-1.6) circle [radius=3.6pt] (1.16,-1.6) circle [radius=3.6pt] (1.9,0.6) circle [radius=3.6pt] (0,2) circle [radius=3.6pt];
\node at (0,-3.5) {$C_5$};
\end{tikzpicture}
\end{minipage}
\qquad
\begin{minipage}[b]{0.06\textwidth}
\begin{tikzpicture}[xscale=0.5, yscale=0.5]
\draw (-0.7,0)--(0.7,1.2) (-0.7,0)--(0.7,0) (-0.7,0)--(0.7,-1.2);
\draw[fill=black] (0.7,1.2) circle [radius=3pt] (0.7,-1.2) circle [radius=3pt] (0.7,0) circle [radius=3pt] (-0.7,0) circle [radius=3pt];
\node at (0,-3) {$K_{1,3}$};
\end{tikzpicture}
\end{minipage}
\caption{The forbidden graphs of Theorems~\ref{t-known}--\ref{t-main2}.}\label{notables}
\end{figure}
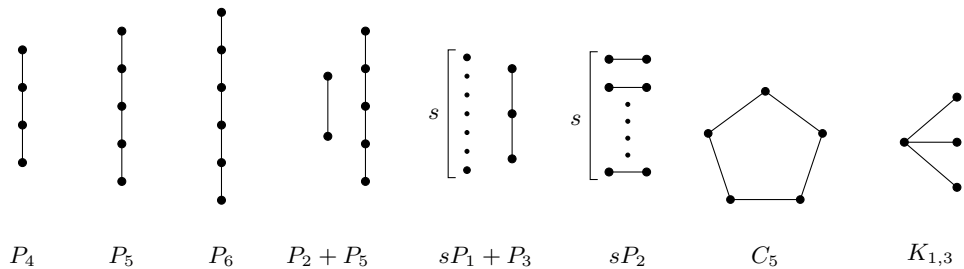

Though the proved complexities of {\sc Subset Feedback Vertex Set} and {\sc Subset Odd Cycle Transversal} are the same on $H$-free graphs, the algorithm that we present for \SOCT\ on $(sP_1+P_3)$-free graphs is more technical compared to the algorithm for \SFVS, and considerably generalizes the transversal algorithms for $(sP_1+P_3)$-free graphs of~\cite{DFJPPP19}.
There is further evidence that \SOCT\ is a more challenging problem than \SFVS.
For example, the best-known parameterized algorithm for {\sc Subset Feedback Vertex Set} runs in $O^*(4^k)$ time~\cite{IWY16}, but the best-known run-time for {\sc Subset Odd Cycle Transversal} is $O^*(2^{O(k^3 \log k)})$~\cite{LMRS17}.
Moreover, it is not known if there is an \XP\ algorithm for \SOCT\ in terms of mim-width in contrast to the known \XP\ algorithm for \SFVS~\cite{BPT19}.

In Section~\ref{s-pre} we introduce our terminology.
In Section~\ref{s-svc} we present some results for {\sc Subset Vertex Cover}: the first result shows that {\sc Subset Vertex Cover} is polynomial-time solvable for $(sP_1+P_4)$-free graphs for every $s\geq 1$, and we later use this as a subroutine to obtain a polynomial-time algorithm for \SOCT\ on $P_4$-free graphs.   
We present our results on {\sc Subset Feedback Vertex Set} and {\sc Subset Odd Cycle Transversal} in Sections~\ref{s-sfvs} and~\ref{s-soct}, respectively.
In~Section~\ref{s-con} on future work we discuss {\sc Subset Vertex Cover} in more detail.

\section{Preliminaries}
\label{s-pre}

We consider undirected, finite graphs with no self-loops and no multiple edges.
Let $G=(V,E)$ be a graph, and let $S\subseteq V$.
The graph~$G[S]$ is the subgraph~of~$G$ induced by~$S$. We write~$G-S$ to denote the graph $G[V\setminus S]$.
Recall that for a graph $F$, we write $F\ssi G$ if $F$ is an induced subgraph of $G$.
The cycle and path on $r$ vertices are denoted $C_r$ and $P_r$, respectively.
We say that~$S$ is {\it independent} if $G[S]$ is edgeless, and that $S$ is a {\it clique} if $G[S]$ is {\it complete}, that is, contains every possible edge between two vertices.
We let $K_r$ denote the complete graph on $r$ vertices, and $sP_1$ denote the graph whose vertices form an independent set of size~$s$. 
A {\it  (connected) component} of~$G$ is a maximal connected subgraph of $G$.
The graph $\overline{G}=(V,\{uv\; |\; uv\not \in E\; \mbox{and}\; u\neq v\})$ is the \emph{complement} of~$G$.
The \emph{neighbourhood} of a vertex $u\in V$ is the set $N_G(u)=\{v\; |\; uv\in E\}$. For $U\subseteq V$, we let $N_G(U)=\bigcup_{u\in U}N(u)\setminus U$. The {\it closed} neighbourhoods of $u$ and~$U$ are denoted
by $N_G[u]=N_G(u)\cup \{u\}$ and $N_G[U]=N_G(U)\cup U$, respectively.
We omit subscripts when there is no ambiguity.

Let $T\subseteq V$ be such that $S\cap T=\emptyset$.
Then~$S$ is \emph{complete} to~$T$ if every vertex of~$S$ is adjacent to every vertex of~$T$, and~$S$ is \emph{anti-complete} to~$T$ if there are no edges between~$S$ and~$T$.
In the first case, $S$ is also said to be \emph{complete} to~$G[T]$, and in the second case we say it is \emph{anti-complete} to~$G[T]$.

We say that $G$ is a \emph{forest} if it has no cycles, and, furthermore, that $G$ is a \emph{linear forest} if it is the disjoint union of one or more paths.
The graph~$G$ is \emph{bipartite} if $V$ can be partitioned into at most two independent sets.
A  graph is \emph{complete bipartite} if its vertex set can be partitioned into two independent sets~$X$ and~$Y$ such that
$X$ is complete to~$Y$. We denote such a graph by $K_{|X|,|Y|}$. If $X$ or $Y$ has size~$1$, the complete bipartite graph is a {\it star}; recall that $K_{1,3}$ is also called a claw.
A graph $G$ is a {\it split graph} if it has a bipartition $(V_1,V_2)$ such that $G[V_1]$ is a clique and $G[V_2]$ is an independent set.
A graph is split if and only if it is $(C_4, C_5, 2P_2)$-free~\cite{FH77}.

Let~$G_1$ and~$G_2$ be two vertex-disjoint graphs.
The \emph{union} operation $+$ creates the disjoint union $G_1+\nobreak G_2$ of~$G_1$ and~$G_2$ (recall that $G_1+G_2$ is the graph with vertex set $V(G_1)\cup V(G_2)$ and edge set $E(G_1)\cup E(G_2)$). 
The \emph{join} operation adds an edge between every vertex of~$G_1$ and every vertex of~$G_2$.
The graph~$G$ is a \emph{cograph} if~$G$ can be generated from~$K_1$ by a sequence of join and union operations.
A graph is a cograph if and only if it is $P_4$-free (see, for example,~\cite{BLS99}).
It is also well known~\cite{CLS81} that a graph $G$ is a cograph if and only if $G$ allows a unique tree decomposition called the {\it cotree} $T_G$ of $G$, which has the following properties:
\begin{itemize}
\item [1.]  The root $r$ of $T_G$ corresponds to the graph $G_r=G$.
\item [2.]   Each leaf $x$ of $T_G$ corresponds to exactly one vertex of $G$, and vice versa. Hence $x$ corresponds to a unique single-vertex graph $G_x$.
\item [3.]  Each internal node $x$ of $T_G$ has at least two children, is  labelled $\oplus$ or $\otimes$, and corresponds to an induced subgraph $G_x$ of $G$ defined as follows:
\begin{itemize}
\item if $x$ is a $\oplus$-node, then $G_x$ is the disjoint union of all graphs $G_y$ where $y$ is a child of~$x$;
\item if $x$ is a $\otimes$-node, then $G_x$ is the join of all graphs $G_y$ where $y$ is a child of~$x$.
\end{itemize}
\item [4.] Labels of internal nodes on the (unique) path from any leaf to $r$ alternate between $\oplus$ and~$\otimes$.
\end{itemize}
Note that $T_G$ has $O(n)$ vertices. We modify $T_G$ into a {\it modified cotree} $T_G'$ in which each internal node has exactly two children by applying the following well-known procedure (see for example~\cite{BM93}). If an internal node~$x$ of $T_G$ has more than two children $y_1$ and $y_2$, remove the edges $xy_1$ and $xy_2$ and add a new vertex $x'$ with 
edges $xx'$, $x'y_1$ and $x'y_2$. If~$x$ is a $\oplus$-node, then $x'$ is a $\oplus$-node. If $x$ is a $\otimes$-node, then $x'$ is a 
$\otimes$-node. Applying this rule exhaustively yields $T_G'$. As~$T_G$ has $O(n)$ vertices, constructing $T_G'$ from $T_G$ takes linear time. This leads to the following result, due to Corneil, Perl and Stewart, who proved it for cotrees.

\begin{lemma}[\cite{CPS85}]\label{l-cotree}
Let $G$ be a graph with $n$ vertices and $m$ edges.  Then deciding whether or not $G$ is a cograph, and constructing a modified cotree $T_G'$ (if it exists) takes time $O(n+m)$.
\end{lemma}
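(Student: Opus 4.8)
The plan is to reduce the statement to the classical incremental cograph-recognition algorithm of Corneil, Perl and Stewart and then to account for the post-processing described just before the lemma. First I would note that it suffices to produce, in time $O(n+m)$, either a correct negative answer or the (standard) cotree $T_G$ of $G$: by the procedure described above, converting $T_G$ into the modified cotree $T_G'$ replaces each internal node with $d$ children by $d-1$ internal nodes of exactly two children each, so, since $T_G$ has $O(n)$ nodes, this conversion costs only $O(n)$ extra time, and it can be carried out after the recognition step has succeeded.

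For the core claim I would follow~\cite{CPS85} and build $T_G$ incrementally. Fix an arbitrary order $v_1,\dots,v_n$ of the vertices and, for $i=1,\dots,n$, maintain the cotree $T_i$ of $G_i:=G[\{v_1,\dots,v_i\}]$, starting from the single leaf $T_1$. To pass from $T_{i-1}$ to $T_i$, insert $v_i$ by first marking every leaf of $T_{i-1}$ corresponding to a neighbour of $v_i$, then propagating the marks upward and classifying each internal node as unmarked, fully marked (every leaf below it is a neighbour of $v_i$), or partially marked. A case analysis of the marking pattern on the relevant root-to-node path either locates the unique node of $T_{i-1}$ at which $v_i$ must be attached --- creating only a constant number of new internal nodes --- or exhibits an induced $P_4$ on four already-processed vertices together with $v_i$, certifying that $G$ is not a cograph, in which case the algorithm halts with a negative answer. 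Correctness rests on the recursive structure of cographs: $G_i$ is a cograph precisely when $G_{i-1}$ is and $N_{G_{i-1}}(v_i)$ is ``uniform'' with respect to the $\oplus/\otimes$-decomposition at exactly one node of $T_{i-1}$.

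The step I expect to be the main obstacle is the running-time analysis: showing the total cost is $O(n+m)$ rather than, say, $O(nm)$. The delicate points are that the insertion of $v_i$ must touch only the part of $T_{i-1}$ ``seen'' through the marking, with the marking and unmarking work charged to the $|N_G(v_i)|$ edges at $v_i$ (summing to $O(m)$) plus the internal nodes visited; that internal nodes visited but not on the final attachment path must be paid for by nodes that are subsequently destroyed, so an amortized potential argument bounds the total number of node creations and deletions over the whole run by $O(n)$; and that the marks must be cleaned up after each insertion in time proportional to the work just done, which requires recording exactly which nodes were marked. Executing this amortized analysis, with a suitable representation of $T_i$ (parent pointers and per-node counters, and bounded degree after the modification), is exactly the technical content of~\cite{CPS85}; I would invoke it for the $O(n+m)$ bound on constructing $T_G$ and combine it with the $O(n)$ post-processing above to obtain the lemma.
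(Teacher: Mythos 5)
Your proposal is correct and matches the paper's treatment: the paper likewise attributes the linear-time cotree construction to Corneil, Perl and Stewart~\cite{CPS85} and justifies the modified cotree $T_G'$ by the $O(n)$-time child-splitting conversion described just before the lemma. Your additional sketch of the incremental marking algorithm and its amortized analysis is fine but ultimately defers to~\cite{CPS85}, exactly as the paper does.
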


We  also consider optimization versions of
subset transversal problems, in which case
we have  instances~$(G,T)$  (instead of instances $(G,T,k)$).
We say that a set $S\subseteq V(G)$ is a {\it solution} for an instance $(G,T)$ if  $S$ is a $T$-transversal (of whichever kind we are concerned with).  A solution $S$ is {\it smaller} than a solution $S'$ if $|S|<|S'|$, and a solution $S$ is 
{\it minimum} if $(G,T)$ does not have a solution smaller than~$S$, and it is {\it maximum} if there is no larger solution.
We will use the following general lemma, which was implicitly used in~\cite{PT20}.

\begin{lemma}\label{bound}
Let $S$ be a minimum solution for an instance $(G,T)$ of a subset transversal problem.
Then $|S \setminus T| \le |T \setminus S|$.
\end{lemma}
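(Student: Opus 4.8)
The plan is to exploit the simple observation that the set $T$ itself is always a feasible solution for $(G,T)$, whichever of the three subset transversal notions we are working with, and then to compare $S$ with $T$ using minimality.

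First I would check that $T$ is a $T$-transversal of the required type. If an edge $e$ of $G$ is incident to a vertex $t\in T$, then $t$ is an endpoint of $e$ and $t\in T$, so $T$ meets $e$; hence $T$ is a $T$-vertex cover. Similarly, if $D$ is a $T$-cycle (respectively an odd $T$-cycle), then by definition $D$ intersects $T$, so $T$ contains a vertex of $D$; hence $T$ is a $T$-feedback vertex set (respectively an odd $T$-cycle transversal). In every case $T$ is a solution for $(G,T)$, because each of the three feasibility conditions only constrains objects — edges or cycles — that already touch $T$.

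Next, since $S$ is a \emph{minimum} solution and $T$ is a solution, we have $|S|\le |T|$. Writing $|S|=|S\cap T|+|S\setminus T|$ and $|T|=|S\cap T|+|T\setminus S|$ and cancelling the common term $|S\cap T|$ from the inequality $|S|\le|T|$ gives $|S\setminus T|\le |T\setminus S|$, which is exactly the claimed bound.

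I do not expect any real obstacle here: the only step needing attention is verifying that $T$ is always feasible, and that is immediate from the definitions. (One could equivalently argue by a direct exchange: if $|S\setminus T|>|T\setminus S|$, then replacing $S$ by $(S\cap T)\cup(T\setminus S)=T$ yields a strictly smaller solution, contradicting minimality — but this is just the same computation phrased contrapositively.)
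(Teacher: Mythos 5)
Your proposal is correct and follows essentially the same approach as the paper: the paper's proof is exactly your contrapositive phrasing, deducing from $|S\setminus T|>|T\setminus S|$ that $|T|<|S|$ and hence that $T$ would be a smaller solution, contradicting minimality. The only difference is that you spell out explicitly why $T$ is always a feasible solution for each of the three problems, which the paper leaves implicit.
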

\begin{proof}
For contradiction, assume that $|S \setminus T| > |T \setminus S|$. Then $|T|<|S|$
(see also \cref{scheme}).
This means that $T$ is a smaller solution than $S$, a contradiction. \qed
\end{proof}

\begin{figure}
\begin{center}
\begin{tikzpicture}[scale=0.4]
\draw (-10,5)--(10,5)--(10,-5)--(-10,-5)--(-10,5) (2,5)--(2,-5) (-10,-1)--(10,-1);
\node[above] at (-4.5,5) {$V\setminus S$};
\node[above] at (5.5,5) {$S$};
\node[left] at (-10,2) {$T$};
\node[left] at (-10,-3) {$V\setminus T$};
\node at (-4.5,2) {$T\setminus S$};
\node at (5.5,-3) {$S\setminus T$};
\end{tikzpicture}
\end{center}
\caption{For a minimum solution $S$ for an instance $(G,T)$ of a subset transversal problem, it must hold that $|S \setminus T| \le |T \setminus S|$  (see the proof of \cref{bound}).}
\label{scheme}
\end{figure}
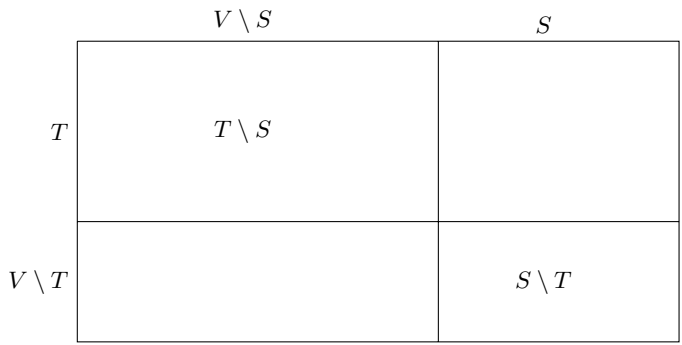

Let $T\subseteq V$ be a vertex subset of a graph $G=(V,E)$. 
Recall that a cycle is a $T$-cycle if it contains a vertex of $T$. 
A subgraph of $G$ is a {\it $T$-forest} if it has no $T$-cycles.
Recall also that a cycle is odd if it has an odd number of edges.
A subgraph of $G$ is {\it $T$-bipartite} if it has no odd $T$-cycles. 
We recall that a set $S_T\subseteq V$~is a {$T$-vertex cover} of $G$ if $S_T$ has at least one vertex of every edge incident to a vertex of $T$.
We also recall that a set $S_T\subseteq V$~is a {$T$-feedback vertex set} or an {odd $T$-cycle transversal} of $G$ if $S_T$ has at least one vertex of every $T$-cycle or every odd $T$-cycle, respectively.
Note that $S_T$ is a $T$-feedback vertex set if and only if $G[V\setminus S_T]$ is a $T$-forest, and 
 $S_T$ is an odd $T$-cycle transversal if and only if $G[V\setminus S_T]$ is $T$-bipartite.
A {\it $T$-path} is a path that contains a vertex of $T$.  A $T$-path is {\it odd} (or {\it even}) if the number of edges in the path is odd (or even, respectively).

We will use the following easy lemma, which proves that $T$-forests and $T$-bipartite graphs can be recognized in polynomial time.
It combines results claimed but not proved in~\cite{LMRS17,PT20}.

\begin{lemma}\label{st-test}
Let $G=(V,E)$ be a graph and $T\subseteq V$. Then deciding whether or not $G$ is a $T$-forest or $T$-bipartite takes $O(n+m)$ time.
\end{lemma}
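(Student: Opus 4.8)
The plan is to reduce both problems to standard reachability/bipartiteness questions by a single auxiliary-graph construction. Given $(G,T)$, first check connectivity and test for cycles in the usual way in $O(n+m)$ time; from now on work component by component, so assume $G$ is connected. For the $T$-forest question: $G$ contains a $T$-cycle if and only if $G$ has a cycle through some vertex of $T$. Since $G$ is connected and has a cycle precisely when $m \ge n$, the only interesting case is when $G$ is not a forest. Here I would argue that $G$ has a $T$-cycle if and only if there is a vertex $t \in T$ lying on some cycle, which is equivalent to $t$ not being a cut vertex whose removal separates $G$ into pieces each inducing a forest — more cleanly, $G$ has a $T$-cycle iff some block (2-connected component, or an edge) of $G$ that contains at least one edge-cycle also contains a vertex of $T$. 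Computing the block-cut tree takes $O(n+m)$ time, and then scanning the blocks that are not single edges for membership of a $T$-vertex is $O(n+m)$. This settles the $T$-forest part.

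For the $T$-bipartite question I would use the standard ``parity of walks'' device. Build the bipartite double cover $G^\times$ with vertex set $V \times \{0,1\}$ and edges $\{(u,0),(v,1)\}$, $\{(u,1),(v,0)\}$ for each $uv \in E$; a closed walk in $G$ is odd exactly when it lifts to a walk in $G^\times$ joining $(w,0)$ to $(w,1)$ for its basepoint $w$. Thus $G$ has an odd $T$-cycle if and only if there is $t \in T$ such that $(t,0)$ and $(t,1)$ lie in the same connected component of $G^\times$ \emph{and} this is witnessed by a closed \emph{cycle} (not merely a walk) through $t$ in $G$. The subtlety is exactly the cycle-versus-walk distinction highlighted by \cref{subset-house}: an odd closed walk need not contain an odd cycle through a prescribed vertex. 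I would handle this with the block-cut tree again: an odd $T$-cycle exists iff some block $B$ of $G$ is non-bipartite \emph{and} $B$ contains a vertex of $T$ — using the fact that in a $2$-connected non-bipartite graph, every vertex lies on an odd cycle (a consequence of the ear-decomposition / Menger-type argument that any two vertices of a $2$-connected graph lie on a common cycle, combined with parity bookkeeping over the ear decomposition). Testing each block for bipartiteness is a single BFS, all blocks together costing $O(n+m)$; scanning for $T$-membership is $O(n+m)$.

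The main obstacle is this last structural claim: \emph{in a $2$-connected non-bipartite graph $B$, every vertex $v$ lies on an odd cycle.} I would prove it by taking any odd cycle $C$ in $B$; if $v \in V(C)$ we are done, otherwise use $2$-connectedness to find two vertex-disjoint paths $P_1,P_2$ from $v$ to $C$ meeting $C$ only at their endpoints $a,b$ (by Menger / the fan lemma). These paths split $C$ into two arcs $C_1,C_2$ with $C_1 \cup C_2 = C$; then $P_1 \cup C_1 \cup P_2$ and $P_1 \cup C_2 \cup P_2$ are two cycles through $v$ whose lengths sum to $|C| + 2|P_1| + 2|P_2|$, which is odd, so one of them is an odd cycle through $v$. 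Conversely, if every block is bipartite then $G$ is ``locally bipartite'' along the block structure and one checks directly that any odd closed walk, being confined within a single block (since a closed walk cannot cross a cut vertex oddly... — more precisely, the parity of a closed walk is the sum of parities of the closed subwalks it induces in each block it visits), would force a non-bipartite block; hence no odd $T$-cycle. Assembling: the algorithm computes the block-cut tree, runs BFS-bipartiteness (resp. a cycle test) on each block, checks $T$-intersection, all in $O(n+m)$, which is the claimed bound.
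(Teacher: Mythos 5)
Your proposal is correct and follows essentially the same route as the paper: compute the block decomposition in $O(n+m)$ time, characterize a $T$-forest by the absence of $T$-vertices in non-trivial blocks, characterize $T$-bipartiteness by the absence of $T$-vertices in non-bipartite blocks, and justify the latter with the same fan-lemma/parity argument showing every vertex of a $2$-connected non-bipartite block lies on an odd cycle. The double-cover detour you mention is ultimately discarded, so the working argument coincides with the paper's proof.
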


\begin{proof}
A {\it block} of $G$ is a maximal 2-connected subgraph of $G$ and is {\it non-trivial} if it contains a cycle, or, equivalently, at least three vertices.
Suppose that we have a block decomposition of $G$; it is well known that this can be found in $O(n+m)$ time (see for example~\cite{HT73}).
It is clear that $G$ is a $T$-forest if and only if no non-trivial block contains a vertex of $T$.
We claim that $G$ is $T$-bipartite if and only if no non-bipartite block contains a vertex of $T$.  
To see this note first that the sufficiency is obvious.  We will show that if a vertex $t$ of $T$ belongs to a block $B$ that contains an odd cycle $C$, then $t$ belongs to an odd cycle.  If $t$ is in $C$, we are done.  Otherwise find two paths $P$ and $P'$ from $t$ to, respectively, distinct vertices $u$ and $u'$ in $C$.  We can assume that the paths contain no other vertex of $C$ (else we truncate them) and that, as $B$ is 2-connected, they contain no common vertex other than $t$.  We can form two cycles that contain $t$ by adding to $P+P'$ each of the two paths between $u$ and $u'$ in $C$.  As $C$ is an odd cycle, the lengths of these two paths, and therefore the lengths of the two cycles, have distinct parity.  Thus $t$ belongs to an odd cycle. 
Finally we note that the checks of the block decomposition needed to decide whether or not $G$ is a $T$-forest or $T$-bipartite can be done in $O(n+m)$ time.
\qed
\end{proof}

\section{Subset Vertex Cover}\label{s-svc}

In this section we present some results on {\sc Subset Vertex Cover}, some of which we will need later on.

\begin{lemma}\label{svc-p4}
{\sc Subset Vertex Cover} can be solved in polynomial time for $P_4$-free graphs.
\end{lemma}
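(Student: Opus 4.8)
The plan is to exploit the cotree structure of $P_4$-free graphs and solve \textsc{Subset Vertex Cover} by bottom-up dynamic programming on the modified cotree $T_G'$ provided by \cref{l-cotree}. Since we want a minimum $T$-vertex cover, it suffices to compute, for every node $x$ of $T_G'$, the size of a minimum $T_x$-vertex cover of the induced subgraph $G_x$, where $T_x = T \cap V(G_x)$; the answer at the root $r$ is then compared against $k$. I will actually track a slightly richer quantity: for each node $x$, I would compute the minimum size of a $T_x$-vertex cover $S$ of $G_x$ \emph{together with the information of whether $V(G_x)\setminus S$ still contains a vertex of $T_x$ and whether it still contains a vertex outside $T_x$} — or, more simply, just the minimum $T_x$-vertex cover size, if that turns out to suffice. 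The reason some extra bookkeeping may be needed is the join operation, discussed below.

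First, the leaf case: if $x$ is a leaf, $G_x$ is a single vertex, which has no edges, so the empty set is a $T_x$-vertex cover and the value at $x$ is $0$. Next, the union case: if $x$ is a $\oplus$-node with children $y_1,y_2$, then $G_x$ is the disjoint union of $G_{y_1}$ and $G_{y_2}$, every edge of $G_x$ lies entirely in one of the two parts, and a set is a $T_x$-vertex cover of $G_x$ if and only if its restriction to each part is a $T_{y_i}$-vertex cover of $G_{y_i}$; hence the value at $x$ is the sum of the values at $y_1$ and $y_2$. The interesting case is the join: if $x$ is a $\otimes$-node with children $y_1,y_2$, every vertex of $G_{y_1}$ is adjacent to every vertex of $G_{y_2}$. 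I would argue that in any $T_x$-vertex cover $S$ of $G_x$, for each of the "cross" edges $uv$ with $u\in V(G_{y_1})$, $v\in V(G_{y_2})$: if either endpoint lies in $T_x$, the edge must be covered. So if, say, $V(G_{y_1})\setminus S$ contains a vertex of $T_{y_1}$ and $V(G_{y_2})\setminus S$ contains any vertex at all, that vertex of $T_{y_1}$ is incident to an uncovered edge — contradiction. This forces a clean dichotomy: either $S$ contains all of $V(G_{y_1})$, or $S$ contains all of $V(G_{y_2})$, or $T_{y_1}\subseteq S$ and $T_{y_2}\subseteq S$ (covering every cross edge via a $T$-endpoint). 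Within each of these three regimes, what remains is independent per part and reduces to the child subproblems (for the third regime, one must still cover the internal $T$-edges of each $G_{y_i}$, which is exactly the child value again, but now with the constraint $T_{y_i}\subseteq S$; this is why I would carry, alongside the plain minimum, the minimum over $T_x$-vertex covers that happen to contain all of $T_x$ — note $G_x-(V(G_x)\setminus T_x)$ has no edges incident to $T_x$ only if $G[T_x]$ is edgeless, so in general "cover everything in $T_x$" forces including $T_x$ and then the cost is just $|T_x|$ plus the cost of covering the rest, i.e. $|T_x|$ alone suffices within $G_x$ since all remaining edges touch only $V\setminus T_x$ vertices... wait, internal edges of $G_{y_i}$ incident to $T_{y_i}$ are covered by $T_{y_i}\subseteq S$, so the third regime costs exactly $|T_{y_1}|+|T_{y_2}| = |T_x|$). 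I would then take the minimum over the three regimes, each expressed via $|V(G_{y_i})|$, the child values, and $|T_{y_i}|$, all of which are available.

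Finally I would observe that $T_G'$ has $O(n)$ nodes, each DP step is $O(1)$ given the children's values and the precomputed sizes $|V(G_x)|$, $|T_x|$, so the whole algorithm runs in $O(n+m)$ time after constructing the modified cotree, which is also $O(n+m)$ by \cref{l-cotree}. The main obstacle I anticipate is getting the join case exactly right: one must be careful that the three regimes are genuinely exhaustive and that the "third regime" correctly accounts for edges internal to each $G_{y_i}$ versus cross edges, and that the auxiliary quantity (minimum $T_x$-vertex cover among those containing all of $T_x$) composes correctly through both union and join nodes. Everything else — leaves, union, and the final comparison with $k$ — is routine.
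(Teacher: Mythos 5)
Your proposal is correct and follows essentially the same route as the paper: a bottom-up dynamic program over the modified cotree with the sum rule at union nodes and, at join nodes, exactly the paper's three candidates ($V(G_{y_1})$ plus an optimal cover of $G_{y_2}$, the symmetric option, and $T\cap V(G_x)$), giving the same $O(n+m)$ algorithm. The extra bookkeeping you worried about is indeed unnecessary, as you yourself resolve: once $T_x\subseteq S$ all edges incident to $T_x$ are covered, so the third regime costs exactly $|T_x|$.
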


\begin{proof}
Let $G$ be a cograph with $n$ vertices and $m$ edges. First construct a modified cotree~$T_G'$ and then consider each node of $T_G'$  starting at the leaves of $T_G'$ and ending at the root $r$. Let $x$ be a node of $T_G'$. We let $S_x$ denote a minimum $(T\cap V(G_x))$-vertex cover of $G_x$. 

If $x$ is a leaf, then 
$G_x$ is a $1$-vertex graph. Hence, we can let $S_x=\emptyset$.
Now suppose that~$x$ is a $\oplus$-node.
Let $y$ and $z$ be the two children of $x$. Then, as $G_x$ is the disjoint union of $G_y$ and $G_z$, we can let $S_x=S_y\cup S_z$.
Finally suppose that  $x$ is a $\otimes$-node.
Let $y$ and $z$ be the two children of $x$. As $G_x$ is the join of $G_y$ and $G_z$ we observe the following:
if $V(G_x)\setminus S_x$ contains a vertex of $T\cap V(G_y)$, then $V(G_z)\subseteq S_x$. 
Similarly, if $V(G_x)\setminus S_x$ contains a vertex of $T\cap V(G_z)$, then $V(G_y)\subseteq S_x$. 
Hence, we  let $S_x$ be the smallest set of $S_y\cup V(G_z)$, $S_z\cup V(G_y)$ and 
$T\cap V(G_x)$.

Constructing $T_G'$ takes $O(n+m)$ time by Lemma~\ref{l-cotree}.
As~$T_{G'}$ has~$O(n)$ nodes and processing a node takes $O(1)$ time, the total running time is $O(n+m)$.
\qed
\end{proof}
The following lemma generalizes a corresponding well-known observation for {\sc Vertex Cover}.

\begin{lemma}\label{svc-ext}
Let $H$ be a graph. If {\sc Subset Vertex Cover} is polynomial-time solvable for $H$-free graphs, then it is for $(P_1+H)$-free graphs as well. 
\end{lemma}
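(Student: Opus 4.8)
The plan is to mimic the classical reduction for \textsc{Vertex Cover} that handles an extra isolated vertex by branching on whether that vertex lies in the solution. Suppose \textsc{Subset Vertex Cover} is polynomial-time solvable on $H$-free graphs, and let $(G,T,k)$ be an instance where $G$ is $(P_1+H)$-free. First I would observe that if $G$ is already $H$-free, we are done by assumption. Otherwise $G$ contains an induced copy of $H$; fix one, say on vertex set $W$. Since $G$ is $(P_1+H)$-free, every vertex of $G$ outside $W$ must have a neighbour in $W$ — equivalently, $N_G[W]=V(G)$, so $W$ is a dominating set of constant size $|V(H)|$.

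The key step is then a branching argument over the dominating set $W$. For a $T$-vertex cover $S_T$, look at $S_T\cap W$: there are at most $2^{|V(H)|}$ possibilities, a constant number. For each subset $W'\subseteq W$ I would try to build a minimum $T$-vertex cover $S_T$ with $S_T\cap W = W'$. Having fixed $W'$, the vertices of $W\setminus W'$ are forced to be outside the cover, so every edge incident to a vertex of $T\cap(W\setminus W')$ must be covered by its other endpoint; this forces $N_G(W\setminus W')\cap \big((T\cap\text{anything})\big)$-type vertices into the solution. Concretely, let $F$ be the set of vertices forced in: $F = W' \cup \big(N_G(W\setminus W') \cap N_G[T]\big)$ restricted appropriately so that all edges incident to $T$ with an endpoint in $W\setminus W'$ are covered. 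If $F$ itself fails to cover all such edges (e.g. an edge of $G[W\setminus W']$ incident to $T$), this branch is infeasible and we discard it. Otherwise, we delete $F$ from $G$; the remaining graph $G-F$ is an induced subgraph of $G$, and since $W\setminus W'$ is now completely non-adjacent to everything of interest, I would argue the residual instance on $G-F$ is equivalent to one on an $H$-free graph — indeed, $G-F$ no longer contains the copy of $H$ on $W$, but to be safe we should recurse or re-apply: since $W \subseteq F$ whenever $W' = W$, and more generally $G-F$ has strictly fewer vertices, a polynomial recursion in $|V(G)|$ closes the argument, or one simply notes $G - F$ need not be $H$-free so we repeat the whole procedure, each iteration removing at least one vertex, for a total of at most $n$ iterations each branching into a constant $2^{|V(H)|}$ subcases — but the branches do not compound across iterations if we are careful, giving an overall polynomial bound.

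The main obstacle is ensuring the branching does not blow up exponentially: naively recursing with constant-factor branching at each of up to $n$ levels gives $2^{O(n)}$ subproblems. The cleanest fix, which I would adopt, is to avoid recursion entirely: since $G$ is $(P_1+H)$-free, \emph{every} induced copy of $H$ in $G$ is dominating, so after the single branching step over one fixed dominating $W$ and the forced deletions, the residual graph $G-F$ — while possibly still containing $H$ — can be handled by observing that deleting $F \supseteq W'$ and isolating $W \setminus W'$ means $G - F$ is $(P_1+H)$-free with a \emph{strictly smaller} minimum dominating copy structure; more simply, one argues directly that $G-F$ restricted to the components meeting $T$ is $H$-free, because any $H$-copy there together with a vertex of $W\setminus W'$ (which is now isolated in $G-F$) would reconstitute $P_1+H$ in the original $G$. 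So the residual instance, possibly after removing the now-isolated vertices $W\setminus W'$ which are irrelevant to any $T$-vertex cover, is genuinely $H$-free and solved by the hypothesised algorithm. We return the smallest $S_T = F \cup S'$ over all constantly-many branches $W'$, compare $|S_T|$ with $k$, and the total running time is $2^{|V(H)|}$ times the polynomial running time of the $H$-free algorithm plus $O(n^{|V(H)|})$ for finding $W$, hence polynomial. \dia
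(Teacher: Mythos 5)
There is a genuine gap: your claim that the residual graph $G-F$ (even restricted to components meeting $T$) is $H$-free does not hold. The argument ``any $H$-copy there together with a vertex of $W\setminus W'$, which is now isolated in $G-F$, would reconstitute $P_1+H$ in $G$'' fails in two ways. First, a vertex $w\in W\setminus W'$ is only forced to become isolated when $w\in T$: in that case every edge at $w$ must be covered by its other endpoint, so $N(w)\subseteq F$ and $w$ is indeed anticomplete in $G$ to $V(G)\setminus F$. But if $w\notin T$, only $N(w)\cap T$ is forced into the cover, $w$ keeps all its neighbours outside $T$, and an induced $H$ in $G-F$ need not be anticomplete to $w$ in $G$ --- so no induced $P_1+H$ arises. (You cannot repair this by deleting the now-irrelevant edges at $w$, since $H$-freeness is not preserved under edge deletion, which is exactly the caveat the setting is built around.) Second, in the branch $W'=W$ there is no excluded vertex at all, $F=W'$, and $G-W$ may perfectly well contain induced copies of $H$. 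Your fallback --- recurse on $G-F$ --- is not shown to be polynomial (branching factor $2^{|V(H)|}$ at up to $n$ levels), and it may not even terminate: if $W'=\emptyset$ and $W$ has no $T$-vertices and no forced $T$-neighbours, then $F=\emptyset$ and nothing shrinks. So the proposal as written does not establish the lemma.

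The paper's proof branches on a different object, and this is the missing idea: instead of guessing the trace of the solution on one induced copy of $H$, guess a single vertex $u\in T$ that lies \emph{outside} the solution (there are only $n$ such guesses, plus the one remaining case in which $T\subseteq S_T$, for which $T$ itself is an optimal solution). Once $u\in T$ is excluded, all of $N(u)$ is forced into the cover, and the residual graph $G-N[u]$ is genuinely $H$-free, because $u$ is anticomplete to it and an induced $H$ there would give an induced $P_1+H$ in $G$. Each branch is then solved by a single call to the assumed algorithm for $H$-free graphs, and one returns the smallest of the at most $n+1$ candidate covers $N(u)\cup S_{T'}$ and $T$ --- no recursion, no guessing over $2^{|V(H)|}$ subsets, and no unproven structural claim about the residual graph. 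If you want to keep your dominating-copy idea, note that it only yields $H$-freeness of the residual graph when the excluded vertex you point to is in $T$ and is excluded from the cover; making that the object of the branching is precisely the paper's argument.
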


\begin{proof} 
Let $G=(V,E)$ be a $(P_1+H)$-free graph and let $T\subseteq V$. Let $S_T$ be a minimum $T$-vertex cover of $G$.
For each vertex $u\in T$ we consider the option that $u$ belongs to $V\setminus S_T$. If so, then $N(u)$ belongs to $S_T$. Let $G'=G-N[u]$ and let 
$T'=T\setminus N[u]$. As $G'$ is $H$-free, we find a minimum $T'$-vertex cover $S_{T'}$ of $G'$ in polynomial time. We remember the smallest set $S_{T'}\cup N(u)$ and compare it with the size of $T$ to find $S_T$ (or some other minimum solution for $(G,T)$). \qed
\end{proof}
Lemma~\ref{svc-p4}, combined with $s$ applications of Lemma~\ref{svc-ext}, yields the following result.

\begin{theorem}\label{c-svc}
For every integer $s\geq 1$, {\sc Subset Vertex Cover} can be solved in polynomial time for $(sP_1+P_4)$-free graphs.
\end{theorem}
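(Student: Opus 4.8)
The plan is to prove the statement by induction on $s$, using Lemma~\ref{svc-p4} as the base case and Lemma~\ref{svc-ext} as the inductive step. It is convenient to phrase the induction for all $s\geq 0$ with the convention $0P_1+P_4=P_4$; the claimed statement is then the case $s\geq 1$. For the base case $s=0$, the graph $sP_1+P_4$ is just $P_4$, and Lemma~\ref{svc-p4} already gives a polynomial-time (indeed $O(n+m)$-time) algorithm for {\sc Subset Vertex Cover} on $P_4$-free graphs.

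For the inductive step, I would fix $s\geq 1$ and assume {\sc Subset Vertex Cover} is polynomial-time solvable for $((s-1)P_1+P_4)$-free graphs. The key (trivial) observation is that $sP_1+P_4 = P_1 + \bigl((s-1)P_1+P_4\bigr)$, so Lemma~\ref{svc-ext}, applied with $H=(s-1)P_1+P_4$, immediately lifts the algorithm from $((s-1)P_1+P_4)$-free graphs to $(sP_1+P_4)$-free graphs. Iterating, we reach $(sP_1+P_4)$-free graphs after exactly $s$ applications of Lemma~\ref{svc-ext} starting from Lemma~\ref{svc-p4}.

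The only point requiring a remark is that the running time stays polynomial. A single invocation of Lemma~\ref{svc-ext} reduces an $H'$-free instance to at most $|T|+1\leq n+1$ subinstances on smaller graphs that are $(H'-P_1)$-free (one for each choice of a vertex $u\in T$ placed outside the cover, plus the comparison with the candidate solution $T$ itself). Unfolding the recursion to depth $s$ therefore yields at most $O(n^s)$ leaf subproblems, each a $P_4$-free instance solved in $O(n+m)$ time by Lemma~\ref{svc-p4}; overhead per node is linear. Since $s$ is a fixed constant, the total running time is $O\bigl(n^{s}(n+m)\bigr)$, which is polynomial. I do not expect any genuine obstacle: all the real work is contained in Lemmas~\ref{svc-p4} and~\ref{svc-ext}, and this theorem is simply their composition.
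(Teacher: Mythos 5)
Your proof is correct and follows exactly the paper's approach: the paper also obtains this theorem by combining Lemma~\ref{svc-p4} with $s$ applications of Lemma~\ref{svc-ext}, which is precisely your induction on $s$ (the paper simply states the composition without the explicit runtime bookkeeping you supply).
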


\section{Subset Feedback Vertex Set}\label{s-sfvs}

In this section we prove Theorem~\ref{t-main}.
Our contribution is \cref{sfvs-sp1p3}, which is the case where $H = sP_1+P_3$.
In the next section, we present an analogous result for {\sc Subset Odd Cycle Transversal}.  The proofs are similar in outline, but the latter requires additional insights.

We require  two lemmas.
In the first lemma, note that the bound of $4s-2$ is not necessarily tight, but is sufficient for our needs.

\begin{lemma}\label{tree-sp1p3}
Let $s$ be a non-negative integer, and let $R$ be an $(sP_1 + P_3)$-free tree.  Then either
\begin{enumerate}
\item[\rm{(i)}] $|V(R)| \le \max\{7,4s-2\}$, or
\item[\rm{(ii)}] $R$ has precisely one vertex $r$ of degree more than~$2$ and at most $s-1$ vertices of degree~$2$, each adjacent to $r$;
moreover, $r$ has at least $3s-1$ neighbours.
\end{enumerate}
\end{lemma}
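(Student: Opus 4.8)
The plan is to analyze the structure of an $(sP_1+P_3)$-free tree $R$ by looking at its vertices of degree at least $3$ ("branch vertices") and the paths between them. First I would dispose of the trivial cases $s=0$ and $s=1$: an $sP_1+P_3$ with $s\le 1$ contains $P_3$, so a $P_3$-free tree is a star or a single edge or vertex, which easily satisfies (ii) (with no degree-$2$ vertices) or (i). So assume $s\ge 2$. The key observation driving everything is that a tree $R$ is $(sP_1+P_3)$-free if and only if one cannot find a $P_3$ together with $s$ further vertices that are pairwise nonadjacent and nonadjacent to the $P_3$; equivalently, after deleting the closed neighborhood of any induced $P_3$, the remaining forest has independence number at most $s-1$, hence (being a forest) at most $s-1$ vertices in total only if it has no edges — more carefully, the remaining forest has at most $s-1$ vertices in each of its components' maximum independent sets, so its total vertex count is bounded in terms of $s$. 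I would make this precise: if $F$ is a forest with independence number $\alpha(F)\le s-1$, then $|V(F)|\le 2(s-1)$, since a forest on $n$ vertices has independence number at least $n/2$.

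Next I would locate a vertex $r$ of maximum degree $\Delta$. If $\Delta\le 2$, then $R$ is a path, and I would argue directly that a path $P_n$ is $(sP_1+P_3)$-free only for small $n$: delete the closed neighborhood (five vertices) of a middle $P_3$ and what remains is a union of two subpaths with $\alpha\ge (n-5)/2$, forcing $n-5\le 2(s-1)$, i.e.\ $n\le 2s+3\le\max\{7,4s-2\}$; this gives case (i). So assume $\Delta\ge 3$, and fix such an $r$. Now I would show there is a unique branch vertex. Suppose $r'$ is another vertex with $\deg(r')\ge 3$. Pick the $r$–$r'$ path in $R$; at $r$ choose a neighbor $a$ off this path, at $r'$ choose a neighbor $b$ off this path (possible since both have degree $\ge 3$). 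Then $\{a\}$ together with a $P_3$ near $r'$ not using $a$'s branch, plus leaves hanging elsewhere... — the cleaner route is: take the $P_3$ consisting of $r'$ and two of its neighbors not on the $r$–$r'$ path; delete its closed neighborhood; the component containing $r$ still contains a large star around $r$ (at least $\Delta-1$ leaves if $r\notin$ deleted set, or we argue $r$ is far enough), and a star $K_{1,t}$ has independence number $t$, so $\Delta-1\le s-1$, i.e.\ $\Delta\le s$. Combined with a symmetric bound this caps $|V(R)|$ and lands in case (i); otherwise (ii) holds with a unique branch vertex $r$. The main obstacle is precisely this step: carefully choosing the $P_3$ whose deleted neighborhood isolates a large independent set, while handling the boundary cases where $r$, $r'$, or the relevant leaves lie inside that deleted neighborhood — the degree thresholds $3s-1$ and $s-1$ in the statement suggest one must track exactly how many vertices can sit "close" to $r$ (in the deleted ball of radius $1$ around a $P_3$) versus "far".

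With uniqueness of the branch vertex $r$ established, the tree consists of $r$ with some pendant paths ("legs") attached. A leg of length $\ell\ge 3$ contains a $P_3$ disjoint from and anti-complete to (a star formed by) the short legs at $r$; counting independent vertices among the other legs then bounds their number, and a leg of length $\ge 2$ is the "degree-$2$ neighbour of $r$" in the statement. I would argue: at most $s-1$ legs have length $\ge 2$ — otherwise take a $P_3$ inside one long leg (a leg of length $\ge 3$ exists, else all legs have length $\le 2$ and $|V(R)|$ is small, landing in (i) unless $\Delta$ is huge in which case there are no length-$\ge2$ legs to worry about), delete its closed neighborhood, and the surviving forest contains one vertex per remaining long leg plus many leaves of $r$, giving $\alpha\ge s$, contradiction; similarly every length-$\ge2$ leg must be adjacent to $r$ by definition of leg. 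Finally, to force $r$ to have at least $3s-1$ neighbours in case (ii): if $\deg(r)\le 3s-2$, then since each leg has length $\le$ some bound (legs of length $\ge3$ are few by the same $P_3$-deletion argument, and each has bounded length), $|V(R)|$ is at most roughly $3(3s-2)\le 4s-2$ once $s$ is not too small — here I would just push the inequalities so that $\deg(r)\le 3s-2$ forces $|V(R)|\le\max\{7,4s-2\}$, so (i) holds; the constants $7$ and $4s-2$ are the slack absorbing the small-$s$ and short-leg cases, which is exactly why the lemma only claims "not necessarily tight."
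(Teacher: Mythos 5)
Your toolkit is the right one (a forest with no induced $sP_1$ has at most $2(s-1)$ vertices, and the path case is fine), but as written the plan does not deliver the specific bound $\max\{7,4s-2\}$ at the two places where it is needed. First, in the two-branch-vertex case you only derive $\Delta\le s$; that alone does not cap $|V(R)|$, and finishing by deleting the closed neighbourhood of a $P_3$ centred at $r'$ wastes that whole neighbourhood: $|N[\{b_1,r',b_2\}]|$ can be as large as $3\Delta-1\le 3s-1$, so you only get $|V(R)|\le 2(s-1)+(3s-1)=5s-3$, which exceeds $4s-2$ for $s\ge 2$. The paper avoids this loss with an exact partition rather than neighbourhood deletion: root the tree at a branch vertex $r$, and if some child subtree $T_x$ contains an induced $P_3$ (this covers both a second branch vertex and a leg of length at least $3$), then $R-(V(T_x)\cup\{r\})$ is anti-complete to that $P_3$, while $T_x-x$ is anti-complete to the $P_3$ induced by $r$ and two of its other neighbours; hence both parts are $sP_1$-free forests of size at most $2(s-1)$, and $|V(R)|\le 2(s-1)+2(s-1)+2=4s-2$ with no slack lost. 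Your "symmetric bound" remark gestures at something like this, but no concrete count that reaches $4s-2$ is given.

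Second, your bound on the number of length-$\ge 2$ legs begs the question in its main subcase: when every leg has length at most $2$ and $\deg(r)$ is large, you assert that "there are no length-$\ge 2$ legs to worry about", but that is exactly the claim to be proved. The correct argument is: if there are $s$ vertices at distance $2$ from $r$, then since $|V(R)|>\max\{7,4s-2\}\ge 2s+3$ and legs have length at most $2$, we get $\deg(r)\ge s+2$, so $r$ together with two neighbours whose legs are trivial induces a $P_3$, and the $s$ distance-$2$ vertices form an independent set anti-complete to it, giving an induced $sP_1+P_3$. Third, your closing arithmetic fails: $3(3s-2)=9s-6$ is not at most $4s-2$ for any $s\ge 1$; the degree bound instead falls out of the structure once legs of length $\ge 3$ are excluded, namely $|V(R)|\le 1+\deg(r)+(s-1)$ combined with $|V(R)|\ge 4s-1$ yields $\deg(r)\ge 3s-1$. (A minor point: for $s=1$, $(P_1+P_3)$-freeness does not imply $P_3$-freeness -- $P_4$ is a counterexample -- so your disposal of $s\le 1$ has the implication backwards, though those cases are in any event absorbed by the general argument.)
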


\begin{proof}
If $R$ has no vertices of degree more than~$2$, then $R$ is a path, so $|V(R)|\leq 2s+2\leq\max\{7,4s-2\}$, otherwise $R$ has an induced $sP_1+P_3$ subgraph. Now let $r$ be a vertex of degree more than~$2$, and let $x$, $y$ and $z$ be distinct neighbours of $r$. We view $r$ as the root of the tree, and for $v\in V(R)$ we use $T_v$ to denote the subtree rooted at~$v$.

Suppose that $T_x$ has a vertex of degree at least~$2$. Then $T_x$ has an induced $P_3$, so $R - (V(T_x) \cup \{r\})$ is $sP_1$-free. As a tree is bipartite, this means that
this subtree consists of at most $2(s-1)$ vertices. Likewise, $R[\{y,r,z\}] \cong P_3$, so $T_x -x$ is $sP_1$-free, and hence consists of at most $2(s-1)$ vertices. Thus $|V(R)| \le 2(s-1) + 2(s-1) + 2 = 4s-2$.

We may now assume that for each $v \in N(r)$, the subtree $T_v$ has no vertices of degree at least~$2$; that is, either $T_v \cong P_1$ or $T_v \cong P_2$. It remains to show that when (i) does not hold, at most $s-1$ of the $T_v$ subgraphs are isomorphic to $P_2$. Towards a contradiction, suppose that $R$ has $s$ vertices at distance~$2$ from $r$, and $|V(R)| > \max\{7,4s-2\}$. Since $|V(R)| > 2(s+1) + 1$ for any non-negative integer $s$, the vertex $r$ has at least $s+2$ neighbours. Without loss of generality, label the neighbours of $r$ as $v_1, v_2, \dotsc, v_{deg(r)}$ such that $T_{v_i} \cong P_2$ for each $i \in \{1,\dotsc,s\}$.  Then $R[v_{s+1},r,v_{s+2}] \cong P_3$, and $T_{v_i} - \{v_i\} \cong P_1$ for each $i \in \{1,\dotsc,s\}$; a contradiction.

Finally, $|N_R(r)|+(s-1)+1\geq |V(R)|\geq 4s-1$, so $|N_R(r)|\geq 3s-1$. \qed
\end{proof}

\begin{figure}
\begin{center}
\begin{tikzpicture}[scale=0.7]
\filldraw [black] (-1,0) circle [radius=3pt] (1,3) circle [radius=3pt] (1,2) circle [radius=3pt] (1,1) circle [radius=3pt]
(1,0) circle [radius=3pt] (1,-3) circle [radius=3pt] (1,-0.6) circle [radius=2pt] (1,-1.2) circle [radius=2pt] (1,-1.8) circle [radius=2pt]
(1,-2.4) circle [radius=2pt] (2.5,3) circle [radius=3pt] (2.5,2) circle [radius=3pt] (2.5,1) circle [radius=3pt];
\draw (-1,0)--(1,3)--(2.5,3) (-1,0)--(1,2)--(2.5,2) (-1,0)--(1,1)--(2.5,1) (-1,0)--(1,0) (-1,0)--(1,-0.6) (-1,0)--(1,-1.2) (-1,0)--(1,-1.8) (-1,0)--(1,-2.4) (-1,0)--(1,-3)
(2.7,3.5)--(3,3.5)--(3,0.5)--(2.7,0.5) (1.2,3.5)--(1.5,3.5)--(1.5,-3.5)--(1.2,-3.5);
\node[left] at (-1,0) {$r$};
\node[right] at (3,2) {$\leq s-1$};
\node[right] at (1.5,-1.5) {$\geq 3s-1$};
\end{tikzpicture}
\caption{The structure of an $(sP_1+P_3)$-free tree, as given by Lemma~\ref{tree-sp1p3}, when (i) does not hold.}
\label{figbigtree}
\end{center}
\end{figure}
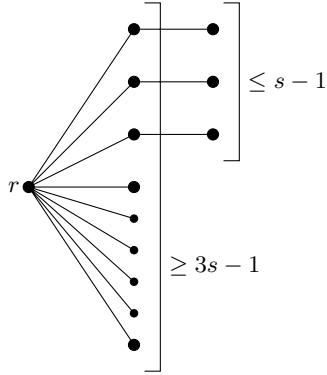

\noindent
Recall that for a graph $G=(V,E)$ and a set $S\subseteq V$, we let $G-S$ denote the graph $G[V\setminus S]$.
We can extend ``partial'' solutions to full solutions in polynomial time as follows.

\begin{lemma}\label{sfvs-solvep3free}
Let $G=(V,E)$ be a graph with a set 
$T \subseteq V$.
Let $V' \subseteq V$ and $S'_T \subseteq V'$ such that $S'_T$ is a $T$-feedback vertex set of $G[V']$, and let $Z = V \setminus V'$.
Suppose that $G[Z]$ is $P_3$-free, and $|N_{G-S'_T}(Z)| \le 1$.
Then there is a polynomial-time algorithm that finds a minimum $T$-feedback vertex set $S_T$ of $G$ such that $S'_T \subseteq S_T$ and $V' \setminus S'_T \subseteq V \setminus S_T$.
\end{lemma}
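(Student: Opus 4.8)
The plan is to observe that, with $S'_T$ fixed, the only decision left is which vertices of $Z$ to place in $S_T$, and that the two hypotheses on $Z$ force this decision to split into independent, easily-solved subproblems. First I would unpack the hypotheses. Since $G[Z]$ is $P_3$-free, it is a disjoint union of cliques. Since $|N_{G-S'_T}(Z)|\le 1$, there is at most one vertex $w$ among the vertices of $G-S'_T$ outside $Z$ having a neighbour in $Z$, and necessarily $w\in V'\setminus S'_T$; since every admissible $S_T$ satisfies $V'\setminus S'_T\subseteq V\setminus S_T$, such an $S_T$ keeps $w$, so $w$ is a fixed ``portal'' through which $Z$ can attach to the rest of the graph (if no such $w$ exists, $Z$ is anticomplete to $V'\setminus S'_T$ in $G-S'_T$, which is the degenerate case of what follows).

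Next I would pin down which $T$-cycles can survive in $G-S_T$ for an admissible $S_T=S'_T\cup(S_T\cap Z)$. Put $Z_{\mathrm{out}}=Z\setminus S_T$; each connected component of $G[Z_{\mathrm{out}}]$ is a clique meeting the rest of $G-S_T$ only in $w$, so $w$ is a cut vertex separating $Z_{\mathrm{out}}$ from $V'\setminus(S'_T\cup\{w\})$. As a cycle visits $w$ at most once, every cycle of $G-S_T$ lies either inside $G[V'\setminus S'_T]$ or inside $G[K_{\mathrm{out}}\cup\{w\}]$ for a single clique $K$ of $G[Z]$, where $K_{\mathrm{out}}=K\setminus S_T$. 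The former is never a $T$-cycle, because $S'_T$ is a $T$-feedback vertex set of $G[V']$. Hence $S_T$ is a $T$-feedback vertex set of $G$ if and only if $G[K_{\mathrm{out}}\cup\{w\}]$ is a $T$-forest for every clique $K$ of $G[Z]$ (reading $\{w\}$ as $\emptyset$ when $w$ does not exist).

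Since the conditions for different cliques are independent, a minimum admissible $S_T$ is obtained by deleting, for each clique $K$ separately, a minimum number of vertices so that the remainder $K_{\mathrm{out}}$ induces, together with $w$, a $T$-forest. As $G[K_{\mathrm{out}}\cup\{w\}]$ is a clique plus one extra vertex, it contains a $T$-cycle exactly when $|K_{\mathrm{out}}|\ge 3$ and $K_{\mathrm{out}}\cap T\ne\emptyset$, or $|K_{\mathrm{out}}\cap N(w)|\ge 2$ and ($w\in T$ or $K_{\mathrm{out}}\cap T\ne\emptyset$); a short check then shows that a maximum safe $K_{\mathrm{out}}$ is the largest of a constant number of explicit candidates (keep $K\setminus T$; keep $K\setminus T$ with all but one of its neighbours of $w$ removed; keep two suitably chosen vertices; keep one vertex; keep none). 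Taking the union of $S'_T$ with the chosen deletions over all cliques yields $S_T$, which by the previous paragraph is a $T$-feedback vertex set of $G$ with $S'_T\subseteq S_T$ and $V'\setminus S'_T\subseteq V\setminus S_T$, and is minimum among such sets by per-clique optimality and independence; the whole procedure runs in $O(n+m)$ time, and one may verify each $T$-forest condition using \cref{st-test} if preferred.

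The step I expect to be the crux is the cycle analysis in the second paragraph: showing cleanly that no $T$-cycle can thread $w$ together with both a clique of $Z$ and some vertex of $V'\setminus(S'_T\cup\{w\})$, or with two distinct cliques of $Z$. This is exactly where the single-portal hypothesis $|N_{G-S'_T}(Z)|\le 1$ is used, and it should follow from the fact that $w$ appears at most once on a cycle while being the unique separator between $Z_{\mathrm{out}}$ and the rest of $G-S_T$. The per-clique case analysis is then routine, though one must check that the listed candidate sets are exhaustive.
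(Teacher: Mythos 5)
Your proposal is correct and follows essentially the same route as the paper: exploit that $G[Z]$ is a disjoint union of cliques and that the single retained portal vertex is a cut vertex, so the problem decomposes into independent per-clique subproblems, each solved by a constant-size case analysis of which vertices of the clique can be kept. Your per-clique analysis is organized via an explicit characterization of when a $T$-cycle appears plus a candidate list, rather than the paper's split on whether the portal lies in $T$ and its adjacencies, but this is only a cosmetic difference and your candidate list is indeed exhaustive.
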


\begin{proof}
Since $G[Z]$ is $P_3$-free, it is a disjoint union of complete graphs. Let $G' = G-S'_T$. 
Suppose that $C$ is a $T$-cycle in $G'$. 
Then $C$ contains at least one vertex of $Z$. If $N_{G'}(Z) = \emptyset$, then $C$ is contained in a component of $G[Z]$. On the other hand, if $N_{G'}(Z) = \{y\}$, say,
then $y$ is a cutvertex of $G'$, so there exists a component $G[U]$ of $G[Z]$ such that $C$ is contained in $G[U \cup \{y\}]$. Hence, we can consider each component of $G[Z]$ independently: for each component $G[U]$ it suffices to find the maximum subset $U'$ of $U$ such that $G[U'\cup N_{G'}(U)]$ contains no $T$-cycles. Then $U' \subseteq F_T$ and $U \setminus U' \subseteq S_T$, where $F_T = V \setminus S_T$. 
So, $S_T$ will be the union of $S'_T$  and the vertex sets $U\setminus U'$, for every component $G[U]$ of $G[Z]$. Hence, it remains to prove how to find the sets $U'$ in polynomial time; we show this below.

Let $U \subseteq Z$ such that $G[U]$ is a component of $G[Z]$. Either $N_{G'}(U) \cap T = \emptyset$, or $N_{G'}(U) = \{y\}$ for some $y \in T$. First, consider the case where $N_{G'}(U) \cap T = \emptyset$. We find a set $U'$ that is a maximum subset of $U$ such that $G[U'\cup N_{G'}(U)]$ has no $T$-cycles. Clearly if $|U|=1$, then we can set $U'=U$. If $|U'| \ge 3$, then, since $U'$ is a clique, $U' \subseteq V \setminus T$. Thus, if $|U\setminus T|\geq 2$, then we set $U' = U\setminus T$. So it remains to consider when $|U|\geq 2$ but $|U\setminus T|\leq 1$. If there is some $u \in U$ that is anti-complete to $N_{G'}(U)$, then we can set $U'$ to be any $2$-element subset of $U$ containing $u$. Otherwise $N_{G'}(U) = \{y\}$ and $y$ is complete to $U$. In this case, for any $u \in U$, we set $U'=\{u\}$.

Now we may assume that $N_{G'}(U) = \{y\}$ and $y \in T$. Again, we find a set $U'$ that is a maximum subset of $U$ such that $G[U'\cup \{y\}]$ has no $T$-cycles. Partition $U$ into $\{U_0,U_1\}$ where $u \in U_1$ if and only if $u$ is a neighbour of $y$. Since $y \in V' \setminus S_T'$, observe that $U'$ contains at most one vertex of $U_1$, otherwise $G[U' \cup \{y\}]$ has a $T$-cycle. Since $U'$ is a clique, if $|U'| \ge 3$ then $U' \subseteq U \setminus T$. So if $|U_0\setminus T|\geq 2$ and there is an element $u\in U_1\setminus T$, then we can set $U'=\{u\}\cup (U_0\setminus T)$. If $|U_0\setminus T|\geq 2$ but $U_1\setminus T=\emptyset$, then we can set $U'=U_0\setminus T$. So we may now assume that $|U_0\setminus T|\leq 1$. If $U_0\neq \emptyset$ and $|U|\geq 2$, then we set $U'$ to any $2$-element subset of $U$ containing some $u \in U_0$. Clearly if $|U| = 1$, then we can set $U'=U$. So it remains to consider when $U_0 = \emptyset$ and $|U_1| \ge 2$.  In this case, we set $U' = \{u\}$ for an arbitrary $u \in U_1$. \qed
\end{proof}

We now prove the main result of this section.

\begin{theorem}\label{sfvs-sp1p3}
For every integer $s\geq 0$, {\sc Subset Feedback Vertex Set} can be solved in polynomial time for $(sP_1+P_3)$-free graphs.
\end{theorem}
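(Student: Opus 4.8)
The plan is to argue by induction on $s$, using Lemma~\ref{sfvs-solvep3free} to complete any partially-determined solution and Lemma~\ref{tree-sp1p3} to keep the part that has to be guessed small. For the base case $s=0$ the graph $G$ is $P_3$-free, hence a disjoint union of cliques with $N_G(V)=\emptyset$, so I would simply apply Lemma~\ref{sfvs-solvep3free} with $V'=S'_T=\emptyset$ and $Z=V$ to obtain a minimum $T$-feedback vertex set in polynomial time.

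For the inductive step, fix $s\geq 1$ and an instance $(G,T)$ with $G$ being $(sP_1+P_3)$-free, and let me reason about a hypothetical minimum $T$-feedback vertex set $S_T$, writing $F_T=V\setminus S_T$ and $T'=T\cap F_T$. I would first extract structure from the fact that $G[F_T]$ is an $(sP_1+P_3)$-free $T$-forest. Any connected non-clique graph contains an induced $P_3$, so if $G[F_T]$ had a non-clique component and at least $s+1$ components in total, picking that $P_3$ and one vertex from each of $s$ other components would produce an induced $sP_1+P_3$; hence $G[F_T]$ has at most $s$ components whenever any of them is a non-clique, and in all cases at most $s$ of its components are non-cliques. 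Moreover a clique component that meets $T$ has at most two vertices (a triangle through a vertex of $T$ is a $T$-cycle), and a component meeting $T$ cannot contain a vertex of $T$ in a non-trivial block, so after contracting its non-trivial blocks it becomes a tree that is still $(sP_1+P_3)$-free and still displays all of its $T'$-vertices. By Lemma~\ref{tree-sp1p3} each such contracted tree either has at most $\max\{7,4s-2\}$ vertices or has a unique high-degree centre $r$ with only $O(s)$ vertices at distance greater than~$1$ from $r$.

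The algorithm I would then run enumerates, in polynomially many ways, a bounded-size ``core'' $V^{\star}\subseteq V$ consisting of: all the small complicated components of $G[F_T]$; and, for each large complicated component, its centre $r$ together with its $O(s)$ exceptional vertices and their attachment points. For each choice of core I would also guess $S_T\cap N_G[V^{\star}]$, which is of bounded size once the core is fixed. The point is that for the correct guess the undecided vertices $Z$ induce a $P_3$-free graph (a disjoint union of cliques), and --- because a spider centre and the endpoints of $K_1/K_2$-components behave as cut-vertices of $G[F_T]$ --- each component of $G[Z]$ has at most one neighbour in $G$ minus the already-decided deletions; then Lemma~\ref{sfvs-solvep3free}, applied with $V'=V\setminus Z$, completes the partial solution to a minimum one. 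When a large complicated component occurs I would additionally exploit that its centre $r$ is non-adjacent to everything outside $N_G[r]$: thus $G-N_G[r]$ is $((s-1)P_1+P_3)$-free, the induction hypothesis solves \SFVS\ on it, and the leaves of $r$ that are not in $T$ can be kept greedily. Over all guesses I would keep the best solution found and also compare it with $T$ itself, which is always a $T$-feedback vertex set, and output a minimum among them.

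I expect the main obstacle to be making the guessing scheme precisely match the hypotheses of Lemma~\ref{sfvs-solvep3free}: one has to decide exactly which vertices near $T'$ and near a spider centre are forced into the core so that every $T$-cycle of $G[F_T]$ lies either inside the core or inside a single leftover clique together with one boundary vertex, and one has to verify that this can be recovered from a polynomial number of guesses while guaranteeing $|N_{G-S'_T}(Z)|\le 1$ for every leftover clique. Handling the non-clique components of $G[F_T]$ that avoid $T$, and dovetailing the greedy choices with the block-contraction argument, is where the bookkeeping will be heaviest.
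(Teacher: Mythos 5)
Your high-level ingredients resemble the paper's (guess a bounded piece of the surviving forest, use Lemma~\ref{tree-sp1p3} to get spider structure, finish with Lemma~\ref{sfvs-solvep3free}), but the scheme as written has genuine gaps that the paper's proof is specifically engineered to avoid. First, you analyse whole components of $G[F_T]$ and contract their non-trivial blocks, whereas the paper works with $G[F_T\cap T]$, which is already a forest. This matters: a non-trivial block of $G[F_T]$ avoids $T$ but may be arbitrarily large, so a component whose contracted tree is ``small'', or whose spider centre is a contracted block rather than a vertex, can still contain unboundedly many vertices. Hence your ``core'' $V^{\star}$ is not of bounded size, cannot be enumerated in $n^{O(s)}$ time, and Lemma~\ref{tree-sp1p3} applied to the contracted tree need not hand you a genuine vertex $r$ of $G$ to exploit. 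Second, the assertion that $S_T\cap N_G[V^{\star}]$ has bounded size once the core is fixed is unjustified (and in general false: arbitrarily many neighbours of a spider centre, or of the kept $T$-vertices, may belong to $S_T$). The paper never guesses such a set; instead it proves structural claims that make guessing unnecessary. In its ``many components of $G[F_T\cap T]$'' case it shows $|N(A)\cap F_T|\le 2s+1$ and that at most one vertex of $N(A)\cap F_T$ has two or more neighbours in $A$, so that everything in $N[A]$ outside $A$ and a guessed set of at most $2s+1$ vertices may be assumed deleted, and that single exceptional vertex is exactly what guarantees $|N_{G-S'_T}(Z)|\le 1$; in its ``few, small components'' case it guesses only $F_T\cap T$ and bounds $|S_T\setminus T|$ via Lemma~\ref{bound}. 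Your proposal has no analogue of either mechanism, and your claim that each leftover clique sees at most one undeleted vertex is precisely the statement that needs such claims.

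Third, the inductive call on $G-N_G[r]$ does not compose: an optimal solution of that subinstance ignores $T$-cycles passing through kept vertices of $N(r)$ (for instance the leaves of $r$ you keep ``greedily'') together with vertices outside $N_G[r]$, so optimality of the combined solution is not established. The paper avoids induction on $s$ altogether: in its large-component case it argues that all of $N(D'\setminus\{r\})\setminus\{r\}$ may be assumed deleted and lets Lemma~\ref{sfvs-solvep3free} handle everything attached at the single vertex $r$. Finally, your case split (clique versus non-clique components of $G[F_T]$, small versus spider contracted trees) does not cover the paper's pivotal distinction on the number of components of $G[F_T\cap T]$; when that number is at least $2s$, the induced $sP_1$ inside $T$ is what forces $G[Z]$ to be $P_3$-free and yields the $|Y_2|\le 1$ bound. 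As you concede in your closing paragraph, matching the guesses to the hypotheses of Lemma~\ref{sfvs-solvep3free} is the heart of the matter, and that is exactly the part left unproved, so the argument is incomplete as it stands.
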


\begin{proof}
Let $G=(V,E)$ be an $(sP_1+P_3)$-free graph for some $s\geq 0$, and let $T\subseteq V$.
We describe a polynomial-time algorithm for the optimization version of the problem on input $(G,T)$.
Let $S_T \subseteq V$ such that $S_T$ is a minimum $T$-feedback vertex set of $G$, and let $F_T = V \setminus S_T$, so $G[F_T]$ is a maximum $T$-forest.
Note that $G[F_T\cap T]$ is a forest.
We consider three cases: either 
\begin{enumerate}
\item $G[F_T \cap T]$ has at least $2s$ components; 
\item $G[F_T \cap T]$ has fewer than $2s$ components, and each of these components consists of at most $\max\{7,4s-2\}$ vertices; or
\item $G[F_T \cap T]$ has fewer than $2s$ components, one of which consists of at least $\max\{8,4s-1\}$ vertices.
\end{enumerate}
We describe polynomial-time subroutines that find a set $F_T$ such that $G[F_T]$ is a maximum 
$T$-forest in each of these three cases, giving a minimum solution $S_T = V \setminus F_T$ in each case.
We obtain an optimal solution by running each of these subroutines in turn: of the (at most) three potential solutions, we output the one with minimum size.

\smallskip
\noindent
{\bf Case 1:} $G[F_T\cap T]$ has at least $2s$ components. 

\smallskip
\noindent
We begin by proving a sequence of claims that describe properties of a maximum 
$T$-forest~$F_T$, when in Case~1.
Since $G$ is $(sP_1+P_3)$-free, $F_T\cap T$ induces a $P_3$-free forest, so $G[F_T \cap T]$ is a disjoint union of graphs isomorphic to $P_1$ or $P_2$.
Let $A \subseteq F_T \cap T$ such that $G[A]$ consists of precisely $2s$ components.  Note that $|A| \le 4s$.
We also let $Y = N(A) \cap F_T$, and partition~$Y$ into $\{Y_1,Y_2\}$ where $y \in Y_1$ if $y$ has only one neighbour in $A$, whereas $y \in Y_2$ if $y$ has at least two neighbours in $A$.

\medskip
\noindent
{\it Claim~1: $|Y_2|\leq 1$. }

\smallskip
\noindent
{\it Proof of Claim~1.}
Let $v \in Y_2$.  Then $v$ has neighbours in at least $s+1$ of the components of~$G[A]$, otherwise $G[A \cup \{v\}]$ contains an induced $sP_1+P_3$.
Note also that $v$ has at most one neighbour in each component of $G[A]$, otherwise $G[F_T]$ has a $T$-cycle.
Now suppose that~$Y_2$ contains distinct vertices $v_1$ and $v_2$.
Then, of the $2s$ components of $G[A]$, the vertices $v_1$ and $v_2$ each have some neighbour in $s+1$ of these components.  So there are at least two components of $G[A]$ containing both a vertex adjacent to $v_1$, and a vertex adjacent to $v_2$. 
Let $A'$ and $A''$ be the vertex sets of two such components.
Then $A' \cup A'' \cup \{v_1,v_2\} \subseteq F_T$, but $G[A' \cup A'' \cup \{v_1,v_2\}]$ has a $T$-cycle; a contradiction.  \dia

\medskip
\noindent
{\it Claim~2: $|Y|\leq 2s+1$. }

\smallskip
\noindent
{\it Proof of Claim~2.}
By Claim~1, it suffices to prove that $|Y_1| \le 2s$.  We argue that each component of $G[A]$ has at most one neighbour in $Y_1$, implying that $|Y_1| \le 2s$.
Indeed, suppose that there is a component $G[C_A]$ of $G[A]$ having two neighbours in $Y_1$, say $u_1$ and~$u_2$.
Then $G[V(C_A) \cup \{u_1,u_2\}]$ contains an induced $P_3$ that is anti-complete to $A\setminus V(C_A)$, contradicting that $G$ is $(sP_1+P_3)$-free.  \dia

\medskip
\noindent
{\it Claim~3: $Y_1$ is independent, and no component of $G[A]$ of size~$2$ has a neighbour in~$Y_1$. }

\smallskip
\noindent
{\it Proof of Claim~3.}
Suppose that there are adjacent vertices $u_1$ and $u_2$ in $Y_1$.
Let $a_i$ be the unique neighbour of $u_i$ in $A$ for $i \in \{1,2\}$.
Note that $a_1 \neq a_2$, for otherwise $G[F_T]$ has a $T$-cycle.
Then $\{a_1,u_1,u_2\}$ induces a $P_3$, so $G[\{u_1,u_2\} \cup A]$ contains an induced $sP_1+P_3$, which is a contradiction.
We deduce that $Y_1$ is independent.

Now let $\{a_1,a_2\} \subseteq A$ such that $G[\{a_1,a_2\}]$ is a component of $G[A]$, and suppose that $u_1 \in Y_1$ is adjacent to $a_1$. Then $a_1$ is the unique neighbour of $u_1$ in $A$, so $G[\{u_1,a_1,a_2\}] \cong P_3$. Thus $G[\{u_1\}\cup A]$ contains an induced $sP_1+P_3$, which is a contradiction. \dia

\medskip
\noindent
{\it Claim~4: Let $Z = V \setminus N[A]$.  Then $N(Z) \cap F_T \subseteq Y_2$. }

\smallskip
\noindent
{\it Proof of Claim~4.}
Suppose that there exists $y \in Y_1$ that is adjacent to a vertex $c \in Z$.
Let $a$ be the unique neighbour of $y$ in $A$. Then $G[\{c,y\}\cup A]$ contains an induced $sP_1+P_3$, which is a contradiction.  So $Y_1$ is anti-complete to $Z$.
Now, if $c \in Z$ is adjacent to a vertex in $N[A] \cap F_T$, then $c$ is adjacent to $y_2$ where $Y_2 = \{y_2\}$. \dia

\begin{figure}
\begin{center}
\begin{tikzpicture}[scale=0.6]
\draw[dashed] 
(4.5,-0.84)--(0,2)--(4.5,4.84) 
(5.7,-2.02)--(0,2)--(4.1,-3.8); 
\filldraw [black] (-3,-4) circle [radius=3pt] (-3,-3) circle [radius=3pt] (-3,-2) circle [radius=3pt] (-3,-1)  circle [radius=3pt] 
(-3,0) circle [radius=3pt] (-3,1) circle [radius=3pt] (-3,2) circle [radius=3pt] (-3,3) circle [radius=3pt] (-3,4) circle [radius=3pt] 
(0,2) circle [radius=3pt] (0,0) circle [radius=3pt] (0,-1.5) circle [radius=3pt];
\draw (-3,1)--(0,2)--(-3,3) (-3,2)--(0,2) to[out=250,in=110] (0,-1.5)--(-3,-1) (-3,4)--(0,2)--(-3,0)--(0,0) (-3,-2)(-3,-3)--(-3,-4) (-3,-2)--(0,2)--(-3,-3)
(-3.2,-4.3)--(-3.5,-4.3)--(-3.5,4.3)--(-3.2,4.3) (0.2,0.3)--(0.5,0.3)--(0.5,-1.8)--(0.2,-1.8);
\draw[color=black, fill=gray!20] (5,2) ellipse (1.5cm and 3cm);
\draw[color=black, fill=gray!20] (5,-3) ellipse (1.2cm and 1.2cm);
\node[right] at (5.5,2) {$U$};
\node[right] at (0.5,-1) {$Y_1$};
\node[above] at (0,2.2) {$y_2$};
\node[left] at (-3.5,-0.5) {$A$};
\node[right] at (7,-1.5) {$Z$};
\draw[black] (3,5.5) -- (7,5.5) -- (7,-5) -- (3,-5) -- (3,5.5);
\end{tikzpicture}
\caption{An example of the structure obtained in Case~1 when $Y_2 = \{y_2\}$.}
\label{structurefig}
\end{center}
\end{figure}
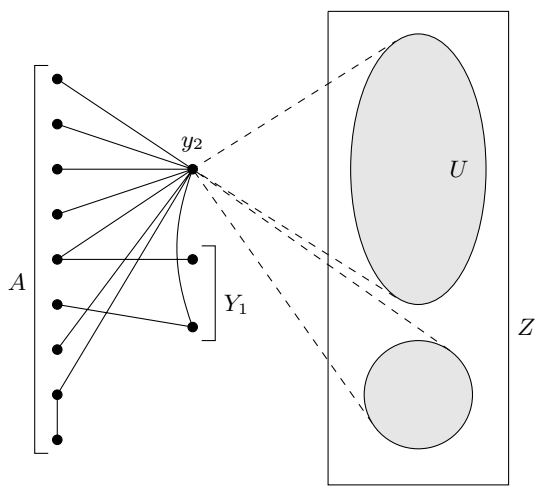

\noindent
We now describe the subroutine that finds an optimal solution in Case 1.
In this case, for any maximum forest $F_T$, there exists some set $A \subseteq T$ of size at most $4s$ such that $A \subseteq F_T$, and $G[A]$ consists of exactly $2s$ components, each isomorphic to either $P_1$ or $P_2$.
Since $G[A]$ consists of components of $G[F_T\cap T]$, there is such an $A$ for which $N(A) \cap T \subseteq S_T$.
Thus we guess a set $A' \subseteq T$ in $O(n^{4s})$ time, discarding those sets that do not induce a forest with exactly $2s$ components, and those that induce a component consisting of more than two vertices.

For any such $F_T$ and $A'$, the set $N(A') \cap F_T$ has size at most $2s+1$, by Claim~2.
Thus, in $O(n^{2s+1})$ time, we guess $Y' \subseteq N(A')$ with $|Y'| \le 2s+1$, and assume that $Y' \subseteq F_T$ whereas $N(A') \setminus Y' \subseteq S_T$.
Let $Y_2'$ be the subset of $Y'$ that contains vertices that have at least two neighbours in $A'$.
We discard any sets $Y'$ that do not satisfy Claims~1 or~3, or those sets for which $G[A'\cup Y']$ has a $T$-cycle on three vertices, one of which is the unique vertex of $Y_2'$. 

Let $Z=V\setminus N[A']$ (for example, see \cref{structurefig}).
Since $G[A']$ contains an induced $sP_1$, the subgraph $G[Z]$ is $P_3$-free.
Now $N(Z) \cap F_T \subseteq Y_2'$ by Claim~4, where $|Y_2'| \le 1$ by Claim~1.
Thus, by \cref{sfvs-solvep3free}, we can extend a partial solution $S'_T = N[A'] \setminus (A' \cup Y')$ of $G[N[A']]$ to a solution $S_T$ of $G$, in polynomial time.

\medskip
\noindent
{\bf Case 2:} $G[F_T\cap T]$ has at most $2s-1$ components, each of size at most $\max\{7,4s-2\}$.

\smallskip
\noindent
We guess sets $F \subseteq T$ and $S \subseteq V\setminus T$ such that $F_T \cap T = F$ and $S_T \setminus T = S$.
Since $F$ has size at most $(2s-1)\max\{7,4s-2\}$ vertices, there are 
$O(n^{\max\{14s-7,8s^2-8s+2\}})$ possibilities for $F$.
By Lemma~\ref{bound}, we may assume that $|S_T \setminus T| \le |F|$.
So for each guessed $F$, there are at most 
$O(n^{\max\{14s-7,8s^2-8s+2\}})$ possibilities for $S$.
For each $S$ and $F$, we set $S_T = (T \setminus F) \cup S$ and check, in $O(n+m)$-time by Lemma~\ref{st-test}, if $G-S_T$ is a $T$-forest. In this way we exhaustively find all solutions satisfying Case~2, in $O(n^{\max^2\{14s-7,8s^2-8s+2\}})$ time; we output the one of minimum size.

\medskip
\noindent
{\bf Case 3:} $G[F_T\cap T]$ has at most $2s-1$ components, one of which has size at least $\max\{8,4s-1\}$. 

\smallskip
\noindent
By Lemma~\ref{tree-sp1p3}, there is some subset $B_T \subseteq F_T\cap T$ such that $|B_T|\ge \max\{8,4s-1\}$, and $G[B_T]$ is a component of $G[F_T \cap T]$ that is a tree satisfying Lemma~\ref{tree-sp1p3}(ii), as illustrated in \cref{figbigtree}.
In particular, there is a unique vertex $r \in B_T$ such that $r$ has degree more than~$2$ in $G[B_T]$.
Moreover, $G[F_T]$ has a component $G[D]$ that contains $B_T$, where $G[D]$ is a tree that also satisfies \cref{tree-sp1p3}(ii).  Note that there are at most $s-1$ vertices in $N_{G[B_T]}(r)$ having a neighbour in $V \setminus T$.

We guess a set $B' \subseteq T$ such that $|B'| = \max\{8,4s-1\}$.
We also guess a set $L' \subseteq V \setminus T$ such that $|L'| \le s-1$.
Let $D' = B' \cup L'$.  We check that $G[D']$ has the following properties:

\begin{itemize}
\item $G[D']$ is a tree,
\item $G[D']$ has a unique vertex $r'$ of degree more than~$2$, with $r' \in B'$,
\item $G[D']$ has at most $s-1$ vertices with distance~$2$ from $r'$, and each of these vertices has degree~$1$, and
\item each vertex $v \in L'$ has degree~$1$ in $G[D']$, and distance~$2$ from $r'$.
\end{itemize}

We assume that $D'$ induces a subtree of the large component $G[D]$, where $r=r'$, and $D'$ contains~$r$, all neighbours of~$r$ with degree~$2$ in $G[D]$, and all vertices at distance~$2$ from~$r$.
In other words, $G[D']$ can be obtained from $G[D]$ by deleting some subset of the leaves of $G[D]$ that are adjacent to~$r$.
In particular, $D' \subseteq F_T$.
We also assume that $L'$ is the set of all vertices of $D \setminus T$ that have distance~$2$ from~$r$.

It follows from these assumptions that $N(D' \setminus \{r\}) \setminus \{r\} \subseteq S_T$.
Let $Z = V \setminus N[D' \setminus \{r\}]$, and observe that each $z \in Z$ has at most one neighbour in $D'$ (if it has such a neighbour, this neighbour is $r$).
So $N(Z) \cap F_T \subseteq \{r\}$.

In order to apply~\cref{sfvs-solvep3free}, it remains to show that $G[Z]$ is $P_3$-free.
Let $B_1 = B' \cap N(r)$.
As $r$ has at least $3s-1$ neighbours in $G[B']$, by Lemma~\ref{tree-sp1p3}, $G[B_1]$ contains an induced $sP_1$.
Moreover, $N(B_1) \cap F_T \subseteq D'$.
Since $G$ is $(sP_1+P_3)$-free, $G[Z]$ is $P_3$-free. 
Thus, by \cref{sfvs-solvep3free}, we can extend a partial solution $S'_T=N(D' \setminus \{r\}) \setminus \{r\}$ of $G[N[D' \setminus \{r\}]]$ to a solution $S_T$ of $G$, in polynomial time. \qed
\end{proof}

We are now ready to prove Theorem~\ref{t-main}.

\medskip
\noindent
{\bf Theorem~\ref{t-main} (restated).}
{\it Let $H$ be a graph with $H\neq sP_1+P_4$ for all $s\geq 1$. 
Then {\sc Subset Feedback Vertex Set} on $H$-free graphs is polynomial-time solvable if 
$H=P_4$ or $H\ssi sP_1+P_3$ for some $s\geq 1$, and is \NP-complete otherwise.}

\begin{proof}
If $H$ has a cycle or claw, we use Theorem~\ref{t-known}. The cases $H=P_4$ and $H=2P_2$ follow from the corresponding results for permutation graphs~\cite{PT19} and split graphs ~\cite{FHKPV14}. The remaining case $H\ssi sP_1+P_3$ follows from Theorem~\ref{sfvs-sp1p3}.\qed
\end{proof}

\section{Subset Odd Cycle Transversal}\label{s-soct}

At the end of this section we prove Theorem~\ref{t-main2}.  We need three new results to combine with existing knowledge. Our first result uses the reduction of~\cite{PT19} which proved the analogous result for {\sc Subset Feedback Vertex Set}. 

\begin{theorem}\label{soct-split}
{\sc Subset Odd Cycle Transversal} is \NP-complete for the class of split graphs (or equivalently, $(C_4,C_5,2P_2)$-free graphs).
\end{theorem}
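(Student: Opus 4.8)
The plan is to reuse, essentially verbatim, the polynomial-time reduction of~\cite{PT19} that proves {\sc Subset Feedback Vertex Set} \NP-complete on split graphs, and to argue that on the instances it outputs the two problems have the same optimal value. Recall that this reduction (for instance, the natural reduction from {\sc Vertex Cover}) produces a split graph $G$ with split partition $(K,I)$, where $K$ is a clique and $I$ is an independent set, together with a set $T$ and an integer $k$, such that the source instance is a yes-instance if and only if $G$ has a $T$-feedback vertex set of size at most~$k$; crucially, in this construction the target set can be taken with $T\subseteq I$. First I would recall the construction and, in particular, check that $T$ lies entirely in the independent side~$I$.

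The heart of the argument is the following structural observation: \emph{if $G$ is a split graph with split partition $(K,I)$ and $T\subseteq I$, then for every $S\subseteq V(G)$, the graph $G-S$ contains a $T$-cycle if and only if it contains a $T$-triangle.} One direction is trivial. For the other, let $Z$ be a $T$-cycle of $G-S$ and pick $t\in T\cap V(Z)$. Since $t\in I$ and $I$ is independent, the two neighbours $a,b$ of $t$ on $Z$ both lie in $K$; as $K$ is a clique, $ab\in E(G)$, and since $t,a,b\notin S$ the set $\{t,a,b\}$ induces a triangle of $G-S$ through $t$, i.e.\ a $T$-triangle. As a triangle is an odd cycle, this yields the chain of equivalences: $G-S$ has a $T$-cycle $\iff$ $G-S$ has a $T$-triangle $\iff$ $G-S$ has an odd $T$-cycle. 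Hence, for such $G$ and $T$, a set $S$ is a $T$-feedback vertex set of $G$ if and only if it is an odd $T$-cycle transversal of $G$; in particular the minimum sizes of the two objects coincide.

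Applying this to the output $(G,T,k)$ of the reduction of~\cite{PT19} shows immediately that the source instance is a yes-instance if and only if $G$ has an odd $T$-cycle transversal of size at most~$k$, so {\sc Subset Odd Cycle Transversal} is \NP-hard on split graphs. Membership in \NP\ is clear: given a candidate $S$ with $|S|\le k$, we can verify in $O(n+m)$ time by Lemma~\ref{st-test} whether $G-S$ is $T$-bipartite. Since the split graphs are exactly the $(C_4,C_5,2P_2)$-free graphs, as recalled in Section~\ref{s-pre}, this proves the theorem. I expect the only point needing care to be confirming that the reduction of~\cite{PT19} (or a cosmetic variant of it) does place $T$ inside~$I$; the rest is the short structural argument above. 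Were the published reduction to use a $T$ meeting~$K$, one fallback is that split graphs are chordal, so any $T$-cycle of length at least~$4$ has a chord and can be shortened, but then controlling the parity of the shortened cycles requires extra bookkeeping, whereas the assumption $T\subseteq I$ avoids this entirely.
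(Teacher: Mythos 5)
Your proposal is correct and takes essentially the same route as the paper: the paper also uses the Vertex Cover reduction behind~\cite{PT19} (make $V$ a clique, keep $E$ as the independent side, set $T=E$), and its correctness argument rests on exactly your observation that any $T$-cycle through a $T$-vertex of the independent side yields a triangle with its two clique neighbours, so the reduction does indeed place $T$ inside the independent part as you hoped. The only difference is presentational: you package this as a general equivalence between $T$-feedback vertex sets and odd $T$-cycle transversals on split graphs with $T\subseteq I$ and invoke the known \SFVS{} hardness as a black box, whereas the paper re-proves both directions directly against {\sc Vertex Cover}, using the replacement trick for the degree-two vertices of $E$ to assume $S_T\cap E=\emptyset$.
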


\begin{proof}
We observe that the problem belongs to \NP~by Lemma~\ref{st-test}. To show \NP-hardness, we reduce from {\sc Vertex Cover}. 
Let a graph $G=(V,E)$ and a positive integer $k$ be an instance of {\sc Vertex Cover}. 
	From $G$, we construct a graph $G'$ as follows.
Let $V(G')=V\cup E$. Add an edge between $e\in E$ and $v\in V$ in $G'$ if and only if $v$ is an end-vertex of $e$ in $G$. 
Add edges so that~$V$ induces a clique of $G'$. Hence, $G'$ is a split graph with independent set $E$ and clique~$V$.
For example, when $G=P_4$, see \cref{constegfig}.
Let $T=E$. We show that $G$ has a vertex cover of size at most $k$ if and only if $G'$ has an odd $T$-cycle transversal of size at most~$k$.

First suppose that $G$ has a vertex cover $S$ of size at most $k$. Then $S$ is an odd $T$-cycle transversal of $G'$.
Now suppose that $G'$ has an odd $T$-cycle transversal~$S_T$ of size at most~$k$. 
As every vertex of $E$ in $G'$ has degree~$2$, we can replace every vertex of $E$ that belongs to $S_T$ by one of its neighbours to obtain an odd $T$-cycle transversal of the same size as $S_T$. Hence we may assume, without loss of generality, that $S_T\cap E=\emptyset$.
As a vertex of $E$ and its two neighbours in~$V$ form a triangle, this means that $S_T$ contains at least one neighbour of every $e\in E$. Hence, $S_T$ is a vertex cover of $G$.
\qed
\end{proof}

\begin{figure}
\begin{center}
\begin{tikzpicture}[xscale=0.5, yscale=0.5]
\draw (-4,3.3)--(-4.2,3.3)--(-4.2,-3.3)--(-4,-3.3) (2.4,2.3)--(2.6,2.3)--(2.6,-2.3)--(2.4,-2.3)
(-3,3)--(-3,-3)--(2,-2)--(-3,-1)--(2,0)--(-3,1)--(2,2)--(-3,3) (-3,-3) to[out=110,in=250] (-3,1) 
(-3,-1) to[out=110,in=250] (-3,3) (-3,-3) to[out=115,in=245] (-3,3);
\draw[fill=black] (-3,3) circle [radius=4pt] (-3,1) circle [radius=4pt] (-3,-1) circle [radius=4pt] (-3,-3) circle [radius=4pt]
(2,-2) circle [radius=4pt] (2,0) circle [radius=4pt] (2,2) circle [radius=4pt];
\node[left] at (-4.2,0) {$V$};
\node[right] at (2.7,0) {$E$};
\end{tikzpicture}
\caption{The graph $P_4'$: an example of the construction in the proof of \cref{soct-split}.}
\label{constegfig}
\end{center}
\end{figure}
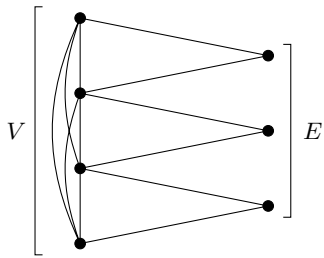

\begin{theorem}\label{soct-p4}
{\sc Subset Odd Cycle Transversal} can be solved in polynomial time for $P_4$-free graphs.
\end{theorem}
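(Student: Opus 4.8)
The plan is to use the fact that the $P_4$-free graphs are exactly the cographs, and to run a bottom-up dynamic program over the modified cotree $T_G'$ of the input cograph $G$, constructed in linear time by \cref{l-cotree}. I would solve the optimization version: for each node $x$ of $T_G'$, with associated graph $G_x$ and set $T_x:=T\cap V(G_x)$, compute $\beta(x)$, the maximum size of a set $F\subseteq V(G_x)$ for which $G_x[F]$ is $(T\cap F)$-bipartite; then a minimum odd $T$-cycle transversal of $G$ has size $|V(G)|-\beta(r)$, where $r$ is the root. Along the way we also record, for each node $x$, the numbers $n_x:=|V(G_x)|$ and $\tau_x:=|T_x|$, the independence number $\alpha_x$ of $G_x$, and $\gamma_x$, the maximum size of a set $F\subseteq V(G_x)$ such that $G_x[F]$ has no edge incident to a vertex of $T_x$; equivalently $\gamma_x = n_x - \mathrm{svc}(G_x,T_x)$, where $\mathrm{svc}$ denotes the size of a minimum $T$-vertex cover, so that the values $\gamma_x$ are available in polynomial time by \cref{svc-p4}.

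The heart of the argument is a structural description of when a join of two cographs is $T$-bipartite. First I would establish two facts about a cograph $H=H_1\otimes H_2$: (a) if $H_1$ has an edge and $|V(H_2)|\ge 2$, then \emph{every} vertex of $H$ lies on an odd cycle --- if $uv$ is an edge of $H_1$, $w\in V(H_2)$, and $w_1,w_2$ are distinct vertices of $H_2$, then $uvw$ is a triangle through each of $u$, $v$, and any vertex of $V(H_2)$, while $xw_1uvw_2x$ is a $5$-cycle through any $x\in V(H_1)\setminus\{u,v\}$; and (b) if $V(H_2)=\{w\}$, then $w$ lies on an odd cycle of $H$ if and only if $H_1$ has an edge, and a vertex $t\in V(H_1)$ lies on an odd cycle of $H$ if and only if $t$ is not isolated in $H_1$ (the forward directions being witnessed by triangles, the backward directions by the absence of any cycle through a degree-one vertex or in a star). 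Combining (a) and (b) gives the characterization: $H_1\otimes H_2$ is $T'$-bipartite if and only if $T'=\emptyset$, or $H_1$ and $H_2$ are both edgeless, or $V(H_2)=\{w\}$ with $w\notin T'$ and every vertex of $T'\cap V(H_1)$ isolated in $H_1$, or the analogous statement with $H_1$ and $H_2$ interchanged.

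With this in hand the recurrences follow. For a leaf, $\beta=\alpha=\gamma=n=1$ and $\tau\in\{0,1\}$. For a $\oplus$-node $x$ with children $y,z$, all of $n_x,\tau_x,\alpha_x,\gamma_x,\beta(x)$ are obtained by summing the corresponding values of $y$ and $z$, since odd cycles, independent sets, and edges live within a single component. For a $\otimes$-node $x$ with children $y,z$, write a candidate surviving set as $F=A\cup B$ with $A\subseteq V(G_y)$ and $B\subseteq V(G_z)$; a short case distinction on whether $A$ or $B$ is empty and, when both are non-empty, on which clause of the characterization applies to $G_y[A]\otimes G_z[B]$, shows that
\[
\beta(x)=\max\bigl\{\ \beta(y),\ \beta(z),\ \alpha_y+\alpha_z,\ \gamma_y+1\ (\text{if }\tau_z<n_z),\ \gamma_z+1\ (\text{if }\tau_y<n_y),\ (n_y-\tau_y)+(n_z-\tau_z)\ \bigr\},
\]
where $\beta(y)$ and $\beta(z)$ correspond to deleting one whole side, $\alpha_y+\alpha_z$ to keeping an independent set on each side (their join is complete bipartite, hence bipartite), $\gamma_y+1$ (resp.\ $\gamma_z+1$) to keeping one vertex of $V(G_z)\setminus T$ (resp.\ $V(G_y)\setminus T$) together with a maximum set inducing no $T$-edge on the other side --- valid precisely by clause (b) of the characterization --- and $(n_y-\tau_y)+(n_z-\tau_z)$ to deleting all of $T_x$. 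Since $T_G'$ has $O(n)$ nodes and, once the values $\gamma_x$ are known, each node is processed in constant time, the overall running time is polynomial.

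I expect the delicate point to be getting the structural characterization right. In particular one should \emph{not} try to reduce the problem to hitting all triangles through $T$: for cographs this is false, since the vertex $v$ of $(K_2+P_1)\otimes 2P_1$ lying in the $P_1$-part has an independent neighbourhood and hence lies on no triangle, yet lies on the $5$-cycle through $v$, the edge of $K_2$, and the two vertices of $2P_1$. So the characterization must track exactly which vertices survive on an odd cycle after deleting vertices, and it is precisely the ``no $T$-edge'' quantity $\gamma_x$ --- i.e.\ {\sc Subset Vertex Cover} via \cref{svc-p4} --- that captures the surviving configuration in the case where one side of a join has been reduced to a single vertex.
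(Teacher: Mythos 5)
Your proposal is correct, and it follows the same skeleton as the paper's proof: a bottom-up dynamic programme over the modified cotree of \cref{l-cotree}, with the {\sc Subset Vertex Cover} algorithm of \cref{svc-p4} invoked to handle join nodes. The substantive difference is in the analysis at a $\otimes$-node, and it matters. The paper distinguishes only three situations for an optimal solution $S_x$ with survivors $B_x$: an edge of $B_x$ inside $V(G_y)$ meeting $T$ (forcing $V(G_z)\subseteq S_x$), the symmetric case, and the remaining case, where it notes that $S_x\cap V(G_y)$ and $S_x\cap V(G_z)$ are subset vertex covers of the two sides and then takes the union of two \emph{minimum} subset vertex covers as a candidate, compared only against $T\cap V(G_x)$, $S_y\cup V(G_z)$ and $S_z\cup V(G_y)$. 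This tacitly assumes that deleting minimum subset vertex covers from both sides always leaves a $T$-bipartite join, which is exactly what your characterization shows to be false: a surviving edge on one side together with a surviving $T$-vertex on the other yields a $T$-triangle, and a surviving edge plus an isolated surviving $T$-vertex on the same side, opposite two survivors, yields your $5$-cycle in $(K_2+P_1)\otimes 2P_1$ (more drastically, on $(K_m+pP_1)\otimes 2P_1$ with the $p$ isolated vertices in $T$, the paper's four candidates do not contain an optimal solution at all, whereas your $\gamma_y+1$ term does). So your extra clauses --- $\alpha_y+\alpha_z$, the terms $\gamma_y+1$ and $\gamma_z+1$ (a maximum ``no $T$-edge'' set on one side plus a single non-$T$ vertex opposite), and $(n_y-\tau_y)+(n_z-\tau_z)$ --- are not redundant caution but precisely what is needed to make the join recurrence exact; your clause-by-clause check that each term is an upper bound within its clause and is attained by a feasible set, and the identity $\gamma_x=n_x-\mathrm{svc}(G_x,T_x)$ linking it to \cref{svc-p4}, close the argument. (One can also verify that the paper's $\gamma_y+\gamma_z$ candidate, whenever it \emph{is} feasible, is dominated by your terms, so omitting it loses nothing.) The only points you leave implicit are the trivial join-node rules for $n_x$, $\tau_x$ and $\alpha_x$ (sum, sum, max --- or obtain $\alpha_x$ from \cref{svc-p4} with $T=V(G_x)$); with those spelled out, your argument is complete and runs in polynomial time.
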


\begin{proof}
Let $G$ be a cograph with $n$ vertices and $m$ edges, and let $T\subseteq V(G)$. First construct the modified cotree~$T_G'$ and then consider each node of $T_G'$ starting at the leaves of $T_G'$ and ending in its root $r$. Let~$x$ be a node of $T_G'$. We let $S_x$ denote a minimum odd $(T\cap V(G_x))$-cycle transversal of $G_x$. 

If $x$ is a leaf, then $G_x$ is a 1-vertex graph. Hence, we can let $S_x=\emptyset$.
Now suppose that~$x$ is a $\oplus$-node.
Let $y$ and $z$ be the two children of $x$. Then, as $G_x$ is the disjoint union of $G_y$ and $G_z$, we let $S_x=S_y\cup S_z$.

Finally suppose that $x$ is a $\otimes$-node.
Let $y$ and $z$ be the two children of $x$. Let $T_y=T\cap V(G_y)$ and $T_z=T\cap V(G_z)$.
Let $B_x=V(G_x)\setminus S_x$. 
As $G_x$ is the join of $G_y$ and $G_z$ we observe the following.
If $B_x\cap V(G_y)$ contains two adjacent vertices, at least one of which belongs to~$T_x$, then $B_x\cap V(G_z)=\emptyset$ (as otherwise $G[B_x]$ has a triangle containing a vertex of $T$) and thus 
$V(G_z)\subseteq S_x$. 
In this case we may assume that $S_x=S_y\cup V(G_z)$.
Similarly, if $B_x\cap V(G_z)$ contains two adjacent vertices, at least one of which belongs to $T_z$, then $B_x\cap V(G_y)=\emptyset$ and thus $V(G_y)\subseteq S_x$. 
In this case we may assume that $S_x=S_z\cup V(G_y)$.
From the two sets $S_y\cup V(G_z)$ and $S_z\cup V(G_y)$ we remember the smallest one.
 
It remains to examine the case where $B_x\cap V(G_y)$ and $B_x\cap V(G_z)$ induce subgraphs of~$G$ in which the vertices of $T_y\cap B_x$ and $T_z\cap B_x$, respectively, are singleton components.

First suppose that  $T_y\cap B_x$ and $T_z\cap B_x$ are both non-empty.
Then $B_x\cap V(G_y)$ and $B_x\cap V(G_z)$ are both independent sets, as otherwise $G[B_x]$ would contain a $T$-triangle.
We examine this situation by computing a largest independent set $I_y$ in $G_y$ and a largest independent set $I_z$ in $G_z$; it is well-known that this can be done in polynomial time (for example, it follows from  Lemma~\ref{svc-p4}).
We remember $V(G_x)\setminus (I_y\cup I_z)$. {Since the union of two independent sets induces a bipartite graph,
this is an odd $T$-cycle transversal}.

Now suppose that  $T_y\cap B_x$ is non-empty, but $T_z\cap B_x$ is empty.
Then $B_x\cap V(G_z)$ must be an independent set, as otherwise we obtain a $T$-triangle by taking a vertex of $T_y\cap B_x$ and two adjacent vertices of $B_x\cap V(G_z)$.
First assume that $B_x\cap V(G_z)$ has size at least~$2$.
We observe that $(B_x\cap V(G_y))\setminus T_y$ is also an independent set; otherwise two adjacent vertices of $(B_x\cap V(G_y))\setminus T_y$, two vertices of $B_x\cap V(G_z)$ and one vertex of $T_y\cap B_x$ would form a $T$-cycle on five vertices. Hence, both $B_x\cap V(G_y)$ and $B_z\cap V(G_z)$ are independent sets, and we already dealt with this case above.

Now assume that $B_x\cap V(G_z)$ has size at most~$1$. 
In this case $B_x\cap V(G_y)$ is a minimum $T_y$-vertex cover of $G_y$. 
We can compute a minimum $T_y$-vertex cover $S$ of $G_y$ in polynomial time by {Lemma~\ref{svc-p4}}. 
We remember $S\cup (V(G_z)\setminus \{{z'}\})$ where ${z'}$ is an arbitrary vertex of $V(G_z)\setminus T_z$ if the latter set is non-empty; otherwise we just remember $S\cup V(G_z)$. 

We deal with the case where $T_z\cap B_x$ is non-empty, but $T_y\cap B_x$ is empty in the same way and remember the output. 
We also consider the possible situation where $T_z\cap B_x=T_y\cap B_x=\emptyset$, in which case we remember 
{$T \cap V(G_x)$}.
Finally, we take as set $S_x$ a set of minimum size over the sets that we remembered.

Constructing $T_G'$ takes $O(n+m)$ time by Lemma~\ref{l-cotree}.
As~$T_{G'}$ has~$O(n)$ nodes and processing a node takes $O(n+m)$ time (due to the application of Lemma~\ref{svc-p4}), the total running time is $O(n^2+mn)$.
\qed
\end{proof}

\noindent
The following result is the main result of this section. 
Its proof uses the same approach as the proof of Theorem~\ref{sfvs-sp1p3} but we need more advanced arguments. 

\begin{theorem} \label{soct-sp1p3}
For every integer $s\geq 0$, {\sc Subset Odd Cycle Transversal} can be solved in polynomial time for $(sP_1+P_3)$-free graphs.
\end{theorem}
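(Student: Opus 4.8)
The plan is to mirror the case analysis of Theorem~\ref{sfvs-sp1p3}, but with ``$T$-forest'' replaced by ``$T$-bipartite'' and with $P_3$-free components of $G[F_T \cap T]$ replaced by a suitable structural description of the $T$-bipartite subgraph that $T$ induces inside a solution. Write $S_T$ for a minimum odd $T$-cycle transversal and $F_T = V \setminus S_T$, so $G[F_T]$ is a maximum $T$-bipartite subgraph; then $G[F_T \cap T]$ is $T$-bipartite as well. Since $G$ is $(sP_1+P_3)$-free, $G[F_T \cap T]$ is $(sP_1+P_3)$-free, and I would first establish an analogue of Lemma~\ref{tree-sp1p3} describing the connected components of $G[F_T \cap T]$: each such component $C$ is $(sP_1+P_3)$-free and induces no odd $T$-cycle, so either it is small (bounded by a function of $s$), or it has a ``star-like'' structure around one high-degree vertex $r$ as in Lemma~\ref{tree-sp1p3}(ii). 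The key difference is that a component can now contain cycles (even cycles, and triangles that avoid $T$—but here every vertex is in $T$, so in fact $G[F_T \cap T]$ has no triangle and is bipartite), yet $(sP_1+P_3)$-freeness still forces each large component to have few ``branch'' vertices; I expect one can show a large component is a complete bipartite graph $K_{1,t}$ or a path or a slightly padded version thereof.

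Next I split into three cases exactly as before: (1) $G[F_T \cap T]$ has at least $2s$ components; (2) it has fewer than $2s$ components, each of bounded size; (3) it has fewer than $2s$ components, one of which is large. In Case~2, the set $F := F_T \cap T$ has bounded size, so guess it together with $S := S_T \setminus T$ (bounded by $|F|$ via Lemma~\ref{bound}), set $S_T = (T \setminus F) \cup S$, and verify $G - S_T$ is $T$-bipartite in linear time using Lemma~\ref{st-test}. In Case~1, choose $A \subseteq F_T \cap T$ meeting $2s$ components with $|A|$ bounded, guess $A$ and the (boundedly many, by a counting argument like Claims~1--2) vertices $Y = N(A) \cap F_T$; since $G[A]$ contains an induced $sP_1$, the graph $G[Z]$ on $Z = V \setminus N[A]$ is $P_3$-free. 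In Case~3, use the star-like structure to peel off the large component, again obtaining a $P_3$-free ``rest'' $G[Z]$. The essential extra ingredient I will need, replacing Lemma~\ref{sfvs-solvep3free}, is a lemma stating: given $V' \subseteq V$, a set $S'_T \subseteq V'$ that is an odd $T$-cycle transversal of $G[V']$, and $Z = V \setminus V'$ with $G[Z]$ being $P_3$-free (a disjoint union of cliques) and $|N_{G - S'_T}(Z)| \le 1$, one can in polynomial time extend $S'_T$ to a minimum odd $T$-cycle transversal $S_T$ of $G$ with $S'_T \subseteq S_T$ and $V' \setminus S'_T \subseteq V \setminus S_T$.

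The proof of that extension lemma is where the real work lies, and it is genuinely harder than its feedback-vertex-set counterpart. For each clique component $G[U]$ of $G[Z]$, possibly attached to a single vertex $y$ of $F'_T := V' \setminus S'_T$, I must find a maximum subset $U' \subseteq U$ such that $G[F'_T \cup U']$ has no odd $T$-cycle. Now a clique on three or more vertices already contains a triangle—an odd cycle—so if $U' \cap T \neq \emptyset$ then $|U'| \le 2$; and even $|U'| = 2$ with both vertices of $U'$ adjacent to $y \in T$ creates a triangle through $y$. Additionally, a subset $U'$ may be excluded not because of a cycle inside $G[U' \cup \{y\}]$ but because $y$ together with a vertex $u \in U'$ and a walk through the rest of $F'_T$ closes an odd cycle of the wrong parity; the presence of the single cut-vertex $y$ means all such cycles pass through $y$, so I need to know the parities of $y$-to-$y$ walks available in $G[F'_T]$. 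I would handle this by first computing, for the cut-vertex $y$, whether $G[F'_T]$ admits an even $y$-rooted closed walk through $T$ and whether it admits an odd one, and then doing a short case analysis on $|U|$, $|U \cap T|$, the partition of $U$ into neighbours and non-neighbours of $y$, and the adjacency of $u$ and $u'$ inside $U$—much as in Lemma~\ref{sfvs-solvep3free} but with a finer parity bookkeeping. Since there are only constantly many subcases per clique and linearly many cliques, the overall algorithm runs in polynomial time; combined with the $O(n^{O(s)})$ guessing in the three main cases, the whole procedure is polynomial for fixed $s$, which is the claim. The main obstacle, to state it plainly, is getting the parity analysis in the extension lemma exactly right so that ``maximum $U'$ avoiding odd $T$-cycles'' is computed correctly in every configuration of $y$, $T$, and $G[F'_T]$.
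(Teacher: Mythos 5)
There is a genuine gap: the two structural pillars you import from Theorem~\ref{sfvs-sp1p3} both fail once only \emph{odd} $T$-cycles are forbidden. First, your proposed analogue of Lemma~\ref{tree-sp1p3} is false. The graph $G[F_T\cap T]$ is now only bipartite and $(sP_1+P_3)$-free, and such graphs need not be star-like or path-like: a complete bipartite graph $K_{t,t}$ is $(sP_1+P_3)$-free (any induced $P_3$ dominates all remaining vertices) and contains no odd cycle at all, so $G[F_T\cap T]$ can have a single huge component that is nothing like Lemma~\ref{tree-sp1p3}(ii); your Case~3 (``peel off a star-like component'') has no foothold. Second, the counting arguments you invoke ``like Claims~1--2'' to bound $Y=N(A)\cap F_T$ also break. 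In the feedback-vertex-set proof, two vertices of $Y_2$ meeting two common components of $A$ close a $T$-cycle, giving $|Y_2|\le 1$; for odd cycle transversal the corresponding cycle is a $C_4$, which is even and perfectly legal, so $Y_2$ (and hence $Y$) can be arbitrarily large. Consequently the hypothesis $|N_{G-S'_T}(Z)|\le 1$ of your extension lemma cannot be arranged: the cliques of $G[Z]$ may attach to many kept vertices, and the ``single cut-vertex $y$ plus parity of $y$-to-$y$ walks'' bookkeeping you sketch is not the right reduction.

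The paper's proof deals with exactly these two failures by abandoning the component-based case split altogether: it splits on whether $|B_T\cap T|$ is at most or at least roughly $4s$, extracts in the large case an independent set $A$ of size $\max\{2,2s-1\}$ directly from the bipartite graph $G[B_T\cap T]$ (which again makes $G[Z]$ with $Z=V\setminus N[A]$ a union of cliques), and then needs a further split according to whether $|Y|$ is small (Case~2a) or large (Case~2b). The large-$Y$ case has no counterpart in the \SFVS{} proof and requires new machinery (even $T$-paths between vertices of $Y_2$ through common neighbours in $A$, Claims~6--11) to show that cliques attached to several vertices of $N(A)$ can still be processed independently and that the per-clique optimisation is a finite parity case analysis. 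Your proposal does correctly anticipate that the per-clique extension step needs parity bookkeeping, but without a correct global structure (bounded $Y$, or the Claims~4--11 substitutes) the decomposition into independent clique subproblems is unjustified, so the argument as proposed does not go through.
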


\begin{proof}
Let $G=(V,E)$ be an $(sP_1+P_3)$-free graph and let $T\subseteq V$.
{Note that it suffices to prove the result for $s \ge 3$, since for every $0\leq s'\leq s$,
the class of  $(s'P_1+P_3)$-free graphs is contained in the class of $(sP_1+P_3)$-free graphs.}
We describe a polynomial-time algorithm to solve the optimization problem on input $(G,T)$.
That is, we describe how to find a smallest odd $T$-cycle transversal~$S_T$. In fact, we will solve the equivalent problem of finding a maximum size 
{set $B_T$ such that $G[B_T]$ is $T$-bipartite; so, of course, $B_T$ is} the complement of a smallest odd $T$-cycle transversal, that is $S_T=V\setminus B_T$.
We separate into two cases that separately seek to find $T$-bipartite subgraphs with complementary constraints on the size of the intersection of this subgraph with~$T$.  
The largest one found overall is the desired output.

\medskip
\noindent
{\bf Case~1:} 
Compute a largest $T$-bipartite subgraph {$G[B_T]$} such that $|B_T \cap T|\leq {4s-3}$. 

\smallskip
\noindent
Note that $B^*=V \setminus T$ is a candidate solution.  We must see if we can find something larger.  Consider each set $B' \subseteq T$ of size at most {$4s-3$}, discarding any set that does not induce a bipartite graph.  There are {$O(n^{4s-3})$} possible sets.  For each choice of $B'$, consider all sets $S \subseteq V \setminus T$ of size less than $|B'|$.
Then $B' \cup {(}(V \setminus T) \setminus S{)}$ is a candidate solution if it induces a $T$-bipartite subgraph, which is checked in $O(n+m)$-time by Lemma~\ref{st-test}.  For each $B'$, there are {$O(n^{4s-3})$} possible choices of $S$ to consider.  Note that we do not need to examine larger {sets} $S$
since then $B' \cup {(}(V \setminus T) \setminus S{)}$ is no larger than $B^*$.

\medskip
\noindent
{\bf Case~2:} 
Compute a largest $T$-bipartite subgraph {$G[B_T]$} such that $|B_T \cap T|\geq {4s-2}$.

\smallskip
\noindent
Note that $B_T$ might not exist in which case the output of Case 1 is our result.  We make some observations about the subgraph 
{$G[B_T]$} that we seek.  As $G[B_T\cap T]$ is a bipartite graph on at least {$4s-2$} vertices, it contains an independent set $A$ of size {$2s-1$}. Let $Y=B_T \cap N(A)$ and consider a partition $\{Y_1,Y_2\}$ of $Y$ where $y$ is in  $Y_1$ if~$y$ has precisely one neighbour in $A$, and otherwise $y$ is in~$Y_2$.  

\medskip
\noindent
{\it Claim~1: $Y_1$ is an independent set, no two vertices of $Y_1$ have a common neighbour in $A$ and $|Y_1| \le |A|$.}

\smallskip
\noindent
{\it Proof of Claim~1.}
Suppose that there are adjacent vertices $y,y'\in Y_1$, and let $a$ be the unique neighbour of $y$ in $A$. Then, according to whether or not $y'$ is adjacent to $a$, either $\{y,y',a\}$ induces an odd $T$-cycle, or $G[A \cup \{y,y'\}]$ contains an induced $sP_1+P_3$ {(recall that $s\geq 3$)}. Both are contradictions.  If there are {two (non-adjacent)}  vertices $y,y'\in Y_1$ that have the same neighbour $a$ in $A$, then, again, $G[A \cup \{y,y'\}]$ contains an induced $sP_1+P_3$, a contradiction. It follows {that no two vertices of $Y_1$ have a common neighbour in $A$, and hence} $|Y_1| \le |A|$.~\dia

\medskip
\noindent
{\it Claim~2: $Y_2$ is an independent set, each $y \in Y_2$ has at least $s$ neighbours in $A$ and any two vertices of $Y_2$ share at least one neighbour in $A$.}

\smallskip
\noindent
{\it Proof of Claim~2.}
{Let $y$ and $y'$ be two distinct vertices in $Y_2$. 
By definition, $y$ and $y'$ each have at least two neighbours in $A$}. 
  Since $G[A \cup \{y\}]$ is $(sP_1+P_3)$-free, $y$ is non-adjacent to at most $s-1$ vertices of $A$.  So $y$ has at least $2s-1-(s-1)=s$ neighbours in $A$. 
{Similarly, $y'$ has at least $s$ neighbours in $A$.
As $|A|=2s-1$, this means that $y$ and $y'$ have a common neighbour $a\in A$. If $y$ and $y'$ are adjacent, then  $\{y,y',a\}$  would induce an odd $T$-cycle, a contradiction.} \dia

\medskip
\noindent
Armed with these definitions and claims{,} we
{show} 
how to find $B_T$ {in polynomial time}.  
We have two subcases.

\medskip
\noindent
{\bf Case~2a:} Compute a largest $T$-bipartite subgraph {$G[B_T]$} such that $|B_T \cap T|\geq {4s-2}$ and for some choice of $A$, we have 
$|Y|\leq {3s}$.

\smallskip
\noindent
Consider each set $A \subseteq T$ of size {$2s-1$} such that $A$ is an independent set. There are {$O(n^{2s-1})$} choices 
{for $A$}.
For each $A$, we consider each set $Y_1 \subseteq N(A)$ of vertices that each have a single neighbour in $A$ such that~$Y_1$ satisfies Claim~1.  As we require that $|Y_1| \leq |A|$, there are  {$O(n^{2s-1})$} choices 
{for $Y_1$}.  
Then consider each set $Y \subseteq N(A)$ of size at most 
{$3s$}
 such that $Y_1 \subseteq Y$ and $Y_2 = Y \setminus Y_1$ is a set of vertices that each have at least two neighbours in $A$ and satisfies Claim~2.  We also require that 
$A \cup Y$
does not {induce} any 
odd $T$-cycles, which is checked in $O(n+m)$-time by Lemma~\ref{st-test}.  There are 
 {$O(n^{3s})$} choices for $Y$.

Note that $G[A  \cup Y]$ is bipartite since $G[Y]$ can contain only even cycles as~$Y_1$ and $Y_2$ are independent sets,
and any odd cycle in 
{$G[A \cup Y]$} 
{must thus contain a vertex of $A$ and so}
is an odd $T$-cycle, since $A\subseteq T$.
By Claim~2, vertices of $Y_2$ all belong to the same component of $G[A \cup Y]$ and, as, by definition and Claim~1, each vertex in $G[A \cup Y_1]$ has degree at most 1, we deduce that every vertex of degree at least~$2$ in $G[A \cup Y]$ belongs to the same component.  
We denote this component by $G[D]$, or we let $D$ be the empty set if there is no such component (which only occurs when $Y_2 = \emptyset$).  
See \cref{figbipart} for an illustration.

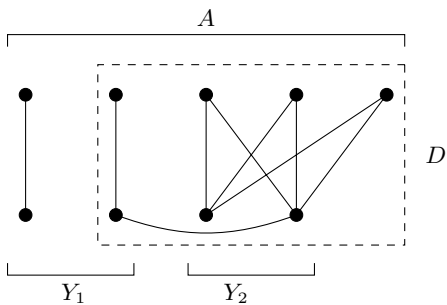
\begin{figure}
\centering
\begin{tikzpicture}[xscale=0.8, yscale=0.8]
\draw (1.5,0)--(3,2)--(0,0) (1.5,0)--(1.5,2)--(0,0) (1.5,0)--(0,2)--(0,0) (-1.5,2)--(-1.5,0) (-3,2)--(-3,0) (1.5,0) to[out=200,in=340] (-1.5,0) (3.3,2.8)--(3.3,3)--(-3.3,3)--(-3.3,2.8) (1.8,-0.8)--(1.8,-1)--(-0.3,-1)--(-0.3,-0.8) (-3.3,-0.8)--(-3.3,-1)--(-1.2,-1)--(-1.2,-0.8); \draw[dashed] (3.3,2.5)--(3.3,-0.5)--(-1.8,-0.5)--(-1.8,2.5)--(3.3,2.5); \draw[color=black, fill=black] (-3,0) circle [radius=3pt] (3,2) circle [radius=3pt] (1.5,2) circle [radius=3pt] (0,2) circle [radius=3pt] (-1.5,2) circle [radius=3pt] (-3,2) circle [radius=3pt] (0,0) circle [radius=3pt] (1.5,0) circle [radius=3pt] (-1.5,0) circle [radius=3pt];\node[above] at (0,3) {$A$};
\node[below] at (0.5,-1) {$Y_2$};
\node[below] at (-2.2,-1) {$Y_1$};
\node[right] at (3.5,1) {$D$};
\end{tikzpicture}
\caption{An example of $G[A\cup Y]$ in Case 2a when $s=3$.}
\label{figbipart}
\end{figure}

\noindent
We now let $Z = V \setminus N[A]$.

\medskip
\noindent
{\it Claim~3:   $N(Z) \cap B_T \subseteq Y_2$.}

\smallskip
\noindent
{\it Proof of Claim~3.}
By definition, $N(Z) \cap B_T \subseteq Y$. Suppose that $z \in Z$ is adjacent to a vertex $y \in Y_1$. Let $a$ be the unique neighbour of $y$ in $A$. Since $|A| = {2s-1} \ge s+1$ {as $s \ge 3$}, it follows that $G[\{z,y\}\cup A]$ contains an induced $sP_1+P_3$, a contradiction.  So $Y_1$ is anti-complete to~$Z$, and the claim follows. \dia

\medskip
\noindent
The aim in the remainder of this subcase is to find the largest possible 
set $B_T$ that induces a $T$-bipartite subgraph and 
that contains $A \cup Y$ and a subset of $Z$.
Since {$G[A]$} contains an induced $sP_1$ subgraph, $G[Z]$ is $P_3$-free, and so is a disjoint union of complete graphs. 
{From Claim~3, it follows that $(A \cup Y) \setminus D$ 
(which is a subset of $A\cup Y_1$)
 is anti-complete to $B_T \setminus (A \cup Y)$. Therefore, we only need to consider odd $T$-cycles in 
 $G[Z \cup D]$ 
 from now on.}
 
For a component $G[U]$ of $G[Z]$,  let $U^+$ be the subset of $U$ that consists of each vertex~$u$ of $U$ such that $G[A \cup Y \cup \{u\}]$ does not contain an odd $T$-cycle through $u$, which is checked in $O(n+m)$-time by Lemma~\ref{st-test}.  
Clearly for each component $G[U]$ in $G[Z]$, any vertex that might be in~$B_T$ must belong to $U^+$.  We shall see later that we can consider each component of $G[Z]$ independently and that it suffices to find for each the maximum size subset of $U^+$ that can be added to $B_T$.
{We note that by Claim~3, the neighbours of $U^+$ in $A \cup Y$ belong to $Y_2$ and, 
as $Y_2\subseteq D$, these neighbours therefore also belong
to $D$.  We first investigate the possible edges between $U^+$ and~$D$.}

\medskip
\noindent
{\it Claim~4:  Either $|N(U^+ \cap B_T) \cap {D}| \le 1$ or $|N(D)\cap (U^+ \cap B_T)| \le 1$.} 

\smallskip
\noindent
{\it Proof of Claim~4.}
We can assume that there are two {distinct} vertices $u_1, u_2$ of $U^+ \cap B_T$ that 
have a
{different} neighbour in $D$ else the claim follows immediately.  
{Let $y_1$ be a neighbour of $u_1$ in $D$ and $y_2$ be a neighbour of $u_2$ in $D$ with $y_1\neq y_2$.}
By Claim~2, $y_1$ and~$y_2$ have a common neighbour $a$ in $A$.  Thus we have a path $u_1y_1ay_2u_2$.  
{Recall that $G[Z]$ is a disjoint union of complete graphs. Hence,}
$U^+$ is a clique, {which means that the $5$-vertex path $u_1y_1ay_2u_2$} can be extended to a cycle by the edge $u_1u_2$, but, as $A \subseteq T$, this is an odd $T$-cycle, a contradiction.~\dia

\medskip
\noindent
{\it Claim~5: 
For each component $G[U]$ of $G[Z]$, let $U^{++}$ be a subset of $U^+$; and
let $Z^{++}$ be the union of each $U^{++}$ over all components $G[U]$ of $G[Z]$.
If $G[A\cup Y\cup Z^{++}]$ contains an odd $T$-cycle, then $G[A \cup Y \cup U^{++}]$ contains an odd $T$-cycle for some component $G[U]$ of $G[Z]$.}

\smallskip
\noindent
{\it Proof of Claim~5.}
Suppose that $C$ is an odd $T$-cycle of $G[A\cup Y\cup Z^{++}]$.
First we show that $C$ contains two vertices of some $U^{++}$.
Towards a contradiction, suppose $C$ is a subgraph of $G[A\cup Y\cup Z^*]$, where $Z^*$ is a subset of $Z^{++}$ with at most one vertex from each component of $G[Z]$.
Recall that $D$ is a bipartite graph that (if non-empty) is a component of $G[A \cup Y]$.
By Claim~3, all neighbours of $Z^*$ are contained in $Y_2$, which, in turn, is contained in one side of the bipartition of $D$.
Hence $G[A\cup Y\cup Z^*]$ {is bipartite and} has no odd $T$-cycles and, in particular, $C$ is not an odd $T$-cycle.
From this contradiction we deduce that there is some component $G[U]$ of $G[Z]$ such that $C$ contains two 
vertices of $U^{++}$.

{We can assume that $C$ is not contained in $G[A\cup Y\cup U^{++}]$ else the claim follows immediately.
Then there are distinct vertices $u_1\in U^{++}$ and $u_2\in U^{++}$ and distinct vertices  $y_1 \in N_C(u_1) \cap Y_2$ and $y_2 \in N_C(u_2) \cap Y_2$.}
By Claim~2, $y_1$ and $y_2$ have a common neighbour $a \in A$.
Then, 
{as $U$, and thus $U^{++}$, is a clique}, $u_1y_1ay_2u_2u_1$ is an odd $T$-cycle contained in $G[A \cup Y \cup U^{++}]$, 
{a contradiction}. \dia

\medskip
\noindent
{By Claim~5,}
to extend $A \cup Y$ to the largest possible $T$-bipartite graph, for each component~$G[U]$ of $G[Z]$, we must find 
{in polynomial time} 
a maximum subset $U^{++}$ of $U^{+}$ such that $G[A \cup Y \cup U^{++}]$ has no odd $T$-cycle.

{We find a set~$U^{++}$ as follows.}  
We first suppose that for the set we seek $|U^{++}| \ge 3$.
Note that in this case we have $U^{++}\cap T=\emptyset$, since $U^{++}$ is a clique.
Partition $U^{+} \setminus T$ into $\{U^{+}_0, U^{+}_1, U^{+}_2\}$ where $u\in U^{+}_0$ if $u \in U^{+} \setminus T$ has
no neighbours in ${D}$, $u\in U^{+}_1$ if $u$ has exactly one neighbour in ${D}$, and otherwise $u\in U^{+}_2$. 
If $U^{+}_1$ is not empty, then let $N(U^{+}_1) \cap {D} = \{d_1,\dotsc,d_m\}$, for some $m\geq 1$.
We partition $U^{+}_1$ into classes $\{Q_1,\dotsc,Q_m\}$ such that $u \in Q_i$ if $N(u) \cap {D} = \{d_i\}$.
Using Claim~4, either $U^{++} \cap U^{+}_1 = \emptyset$ or $U^{++} \cap U^{+}_2 = \emptyset$. Moreover, when $U^{++} \cap U^{+}_1 \neq \emptyset$, then $U^{++} \cap U^{+}_1 \subseteq Q_i$ for some $i \in \{1, \dotsc, m\}$; and when $U^{++} \cap U^{+}_2 \neq \emptyset$, then $|U^{++} \cap U^{+}_2| = 1$.
So, if there exists some $d_i \notin T$,
then we choose an 
{$i$ with $d_i\notin T$} 
that maximises $|Q_i|$, and set $U^{++} = U^{+}_0 \cup Q_i$.
If $U^{+}_1 \neq \emptyset$ but $d_i \in T$ for all $i \in \{1,\dotsc,m\}$, then set $U^{++} = U^{+}_0 \cup \{u\}$ for an arbitrarily chosen $u \in U^{+}_1$.
Now suppose $U^{+}_1$ is empty, and recall that in this case $|U^{+}_2 \cap U^{++}| \leq 1$.
If $U^{+}_2$ is non-empty, then set $U^{++} = U_0^{+} \cup \{u_2\}$ for some
{arbitrarily chosen}
 $u_2 \in U^{+}_2$.
Finally, if $U^+_2$ is also empty, then set $U^{++}=U_0^{+}$.
This process finds a maximum $U^{++}$ of size at least~$3$ if such a set exists.
{We note that $G[A \cup Y \cup U^{++}]$ does not contain an odd $T$-cycle since either the only neighbour of $U^{++}$ in $A \cup Y$ is some $d_i \not \in T$, or the only vertex in $U^{++}$ adjacent to $A \cup Y$ 
does not belong to $T$.}

Now consider the case where $|U^{++}| \le 2$.
We exhaustively check all pairs of vertices in $U^{+}$, of which there are $O(n^2)$.
Let $u_1,u_2$ be such a pair of distinct vertices.
We
check that $G[A \cup Y \cup \{u_1,u_2\}]$ is $T$-bipartite; if it is, then we set $U^{++}=\{u_1,u_2\}$.
Recall that this check runs in polynomial time, by \cref{st-test}.
If no pair is found, we set $U^{++}$ to be the singleton set consisting of any arbitrarily chosen vertex of $U^{+}$.
{If $U^+$ is empty, then we set $U^{++}=\emptyset$. 
We observe that in all cases we obtained $U^{++}$ in polynomial time, as needed.}

\medskip
\noindent
{\bf Case~2b:} Compute a largest $T$-bipartite subgraph {$G[B_T]$} such that $|B_T \cap T|\geq {4s-2}$ and for some choice of $A$, we have 
$|Y|\geq {3s+1}$.

\smallskip
\noindent
Note that as $A$ has size ${2s-1}$ and $|Y_1| \leq |A|$, we have that 
$|Y_2|\geq s+2$.
{Let $Y'_2$ be a subset} of $Y_2$ with $|Y'_2|=s+2$.  
Let $A_0=N(Y'_2)\cap A$.
{By Claim~2 it follows that $|A_0|\geq s$. As $A_0\subseteq A$, we also find that}
$|A_0|\leq {2s-1}$. 

\medskip
\noindent
{\it Claim~6: Let $y$ and $y'$ be distinct vertices of {$N(A_0)$}. Then there is an even $T$-path in 
$G[A_0\cup Y'_2 \cup \{y,y'\}]$ between $y$ and $y'$.
}

\smallskip
\noindent
{\it Proof of Claim~6.}
Assume that $y$ and $y'$ have no common neighbour in $A_0$ else the claim is immediate. 
First let us assume at least one 
{of}
$y$ and $y'$ is  in $Y'_2$. Without loss of generality, $y\in Y'_2$. 
{Then $y' \in N(A_0) \setminus Y'_2$ by the assumption that $y$ and $y'$ have no common neighbour and Claim~2.}
{Let $a' \in A_0$ be a neighbour of $y'$.  By definition, $a'$ has a neighbour $y''$ in $Y'_2$ and $y \neq y''$ else $a'$ is a common neighbour of $y$ and $y'$.  By Claim~2 again, $y$ and $y''$ have a common neighbour $a'' \in A_0$.}
Thus $ya''y''a'y'$ is an even $T$-path in $G[A_0\cup Y'_2 \cup \{y,y'\}]$ 
{(in particular, note that $a'$ and $a''$ belong to $T$, as $A_0\subseteq A$ belongs to $T$).}

Now we consider the case where $y,y'\in {N(A_0)}\setminus Y'_2$. Let $y^{*}$ be a vertex of $Y'_2$. By the previous case there are even $T$-paths in $G[A_0\cup Y'_2\cup \{y,y^{*}\}]$ between 
{$y$ and $y^{*}$ and between $y'$ and $y^{*}$. 
Consider the subgraph induced by the vertices of these two paths.
After removing the edge $yy'$ (if it exists) this subgraph is bipartite, since $G[A_0 \cup Y_2']$ is bipartite and the neighbours of $y$ and $y'$ are both in $A_0$.  We then discard edges from (even) cycles until a path between $y$ and $y'$ is obtained. This path is even, and moreover passes through $A_0\subseteq T$. Hence, it is an even $T$-path between $y$ and $y'$ in $G[A_0\cup Y_2'\cup \{y,y'\}]$.}\dia

\medskip
\noindent
{\it Claim~7: $N(A_0)\cap N(Y'_2)\cap B_T=\emptyset$, {or equivalently, $N(A_0)\cap N(Y'_2)\subseteq S_T$}.}

\smallskip
\noindent
{\it Proof of Claim~7.}
{For contradiction}, assume there is a vertex $v \in N(A_0)\cap N(Y'_2)\cap B_T$. By assumption there are vertices $a\in A_0\cap N(v)$ and $y\in Y'_2\cap N(v)$.
By definition of $A_0$, there is a vertex $y'\in Y'_2\cap N(a)$.  
{If $y'=y$ then $\{a,v,y\}$ induces an odd $T$-cycle, a contradiction.}
Suppose now that $y'\neq y$. Then by Claim~6 there is an even $T$-path~{$P$}
in
$G[A_0\cup Y'_2\cup \{y,y'\}]$ {$(= G[A_0\cup Y'_2])$} between $y$ and $y'$. 
{As $Y'_2\subseteq Y_2$ is an independent set by Claim~2 and $A_0$ is independent by definition,  
$v$ does not belong to $A_0\cup Y'_2$ and hence, $v$ is not on $P$.}
{If $a$ does not belong to $P$, then the cycle formed by $P$ and $yvay'$ {is} an odd $T$-cycle of $B_T$.
If $a$ does belong to $P$, then we can divide $P$ into a path $P_1$ from $y$ to $a$ and a path $P_2$ from $y'$ to $a$.  As $P_1$ and $P_2$ must have the same parity, one of the two cycles formed by $P_1$ and $avy$ and by $P_2$ and $ay'$ is an odd $T$-cycle of $B_T$.  In either case, we have a contradiction.}
\dia

\medskip
\noindent
By Claim~7, $N(A_0)\cap N(Y'_2)\subseteq S_T$.
We let $Y'_0= N(A_0) \setminus N(Y'_2)$. 
{Note that $Y_2'\subseteq B_T$ and $Y_2'\subseteq N(A_0)$. Hence, $Y_2'\subseteq Y_0'$. Note that $Y_0'$ contains no neighbours of $Y_2'$ by definition.
Moreover, $Y_2'\subseteq Y_2$, and $Y_2$ is an independent set by Claim~2. It follows that the vertices of $Y_2'$ are isolated in $G[Y_0']$.
Finally, as $|Y_2'|=s+2$, we observe that $G[Y_0'\setminus Y_2']$ is a disjoint union of complete graphs; see also \cref{setrelationships}.}

Let $Z=V \setminus N[A_0]$; {see also \cref{setrelationships}.}
{Recall that $A_0$ is a subset of $A$ of size at least~$s$} and that $A$ is an independent set. Hence, $G[A_0]$ has an induced $sP_1$.
Consequently, $G[Z]$ is $P_3$-free, so $G[Z]$ is a disjoint union of complete graphs. 
Let $G[U]$ be a component of $G[Z]$; 
{note that $U$ is a clique}.
Partition the vertices of $U$ into $\{U_0,U_1,U_2\}$, where $u\in U_0$ if $u$ has no neighbours in $Y'_0$, whereas $u\in U_1$ if all neighbours of $u$ in $Y'_0$ are in one component of $G[Y'_0]$, and otherwise $u\in U_2$ ({that is,} when $u$ has neighbours in distinct components of $G[Y'_0]$). 

\begin{figure}
\center
\begin{tikzpicture}[scale=0.8]
\draw[color=black, fill=black] (-3,4) circle [radius=2pt] (-3,3) circle [radius=2pt] (-3,2) circle [radius=2pt] (-3,0) circle [radius=2pt] (-3,-1) circle [radius=2pt] (-3,-2) circle [radius=2pt];
\draw (-3.5,4.5)--(-2.5,4.5)--(-2.5,1.5)--(-3.5,1.5)--(-3.5,4.5) (-3.5,0.5)--(-2.5,0.5)--(-2.5,-2.5)--(-3.5,-2.5)--(-3.5,0.5) (0,7)--(3,7)--(3,2)--(0,2)--(0,7) (5,3)--(8,3)--(8,-2)--(5,-2)--(5,3) (-3,1.5)--(-3,0.5) (-2.5,3)--(0,4) (-2.5,-1.5)--(5,0) (3,3.5)--(5,1.5);
\draw[dotted, very thick] (-2.5,-0.5)--(0,2.3) (-2.5,2)--(5,0.75);
\draw[color=black,fill=gray!30!white] (1.5,6) ellipse (0.5cm and 0.5cm);
\draw[color=black,fill=gray!30!white] (1.5,3) ellipse (0.5cm and 0.5cm);
\draw[color=black,fill=gray!30!white] (1.5,4.5) ellipse (0.5cm and 0.5cm);
\draw[color=black,fill=gray!30!white] (6.5,2) ellipse (0.5cm and 0.5cm);
\draw[color=black,fill=gray!30!white] (6.5,0.5) ellipse (0.5cm and 0.5cm);
\draw[color=black,fill=gray!30!white] (6.5,-1) ellipse (0.5cm and 0.5cm);
\node[above] at (-3,4.5) {$A_0$};
\node[right] at (-2.5,-2) {$Y'_2$};
\node[right] at (3,5) {$Y'_0\setminus Y'_2$};
\node[above] at (6.5,3) {$Z=V\setminus N[A_0]$};
\end{tikzpicture}
\caption{The decomposition of $G$ in Case~2b excluding the set $N(A_0)\cap N(Y'_2)$ (as this set will belong to~$S_T$). Recall that $A_0=N(Y_2')\cap A$ and $Y_2'$ are independent sets with $s\leq |A_0|\leq 2s-1$ and $|Y_2'|=s+2$.
Moreover, $Y_0'=N(A_0)\setminus N(Y_2')$, so $Y_0'\setminus Y_2'$ is anti-complete to $Y_2'\subseteq Y_0'$. The graphs $G[Y_0']$ and $G[Z]$ are disjoint unions of complete graphs (in particular, the vertices of 
$Y_2'$ 
are isolated in $G[Y_0']$). Finally, $A_0$ is anti-complete to $Z=V\setminus N[A_0]$ by definition.}
\label{setrelationships}
\end{figure}
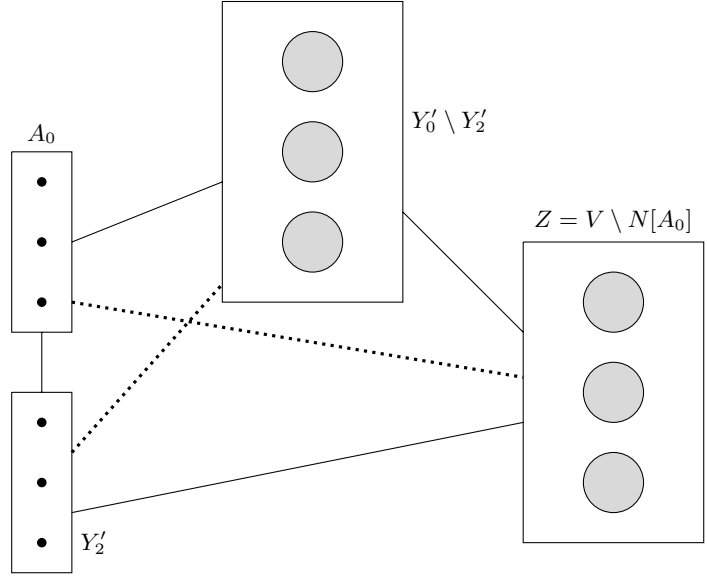

\medskip
\noindent
{\it Claim~8: If $u \in U_2$, then $u$ has at least two neighbours in $Y_2'$.}

\smallskip
\noindent
{\it Proof of Claim~8.}
Suppose that $u \in U_2$. 
{By definition, $u$ has neighbours $x$ and $y$ in distinct components of $G[Y'_0]$. By definition, $Y_0'$ contains no neighbours of $Y_2'$. Recall that the vertices of $Y_2'$ are isolated in $G[Y_0']$. Hence, $Y'_2\cup \{x,y\}$ is independent. As $G$ is $(sP_1+P_3)$-free and  $|Y_2'|=s+2$, the claim follows.} \dia

\medskip
\noindent
{{\it Claim~9:  Either $U_0 = \emptyset$ or $U_1 = \emptyset$. Moreover, if $U_1 \neq \emptyset$, then $U_1\cup (N(U_1) \cap Y'_0)$ is a clique.}}

\smallskip
\noindent
{\it Proof of Claim~9.}
{For contradiction, assume that $U_0$ and $U_1$ are both non-empty. Let $u_0\in U_0$, $u_1\in U_1$ and $y\in N(u_1)\cap Y'_0$.  Recall that $U$ is a clique, so $u_0$ and $u_1$ are adjacent and thus $\{u_0,u_1,y\}$ induces a $P_3$.
Recall also that $Y_0'$ contains no neighbours of $Y_2'$ and that the vertices of $Y_2'$ form an independent set of size $s+2\geq s$ in $G[Y_0']$. Then $G[\{u_0,u_1,y\}\cup Y_2']$ contains an induced $sP_1+P_3$; a contradiction.}

{Now let  $u_1$ and $u'_1$ be two distinct vertices in $U_1$. Let 
$y_1 \in N(u_1)\cap Y'_0$ and $y'_1 \in N(u'_1)\cap Y'_0$.  If $y_1$ is not adjacent to $u'_1$, then $\{y_1,u_1,u'_1\}$ induces a $P_3$. Then, by the same reason as above, $G[\{u_1,u_1',y_1\}\cup Y_2']$ contains an induced $sP_1+P_3$. Hence, $y_1$ is adjacent to $u_1'$, and by the same arguments, $y_1'$ is adjacent to $u_1$.
If $y_1 \neq y'_1$ and $y_1$ and $y'_1$ are not adjacent, then $\{y_1,u_1,y'_1\}$ induces a $P_3$ and thus
$G[\{u_1,y_1,y_1'\}\cup Y_2']$ contains an induced $sP_1+P_3$; a contradiction. We conclude that $U_1\cup (N(U_1) \cap Y'_0)$ is a clique.}
\dia

\medskip
\noindent
{\it Claim~10: $|U_2 \cap B_T| \le 1$.}

\smallskip
\noindent
{\it Proof of Claim~10.}
{For contradiction}, assume there exist {vertices} $u,u' \in U_2 \cap B_T$ with $u \neq u'$. By Claim~8, {we find that} $u$ and $u'$ each have at least two neighbours in $Y'_2$. Hence, there exist vertices $y,y' \in Y'_2$ such that $y \in N(u)$, $y' \in N(u')$ and $y \neq y'$. 
By Claim~6, there is an even $T$-path~$P$ in 
$G[A'_0 \cup Y_2']$ {$(=G[A_0'\cup Y_2' \cup \{y,y'\}])$} 
between $y$ and $y'$. {By} using the path $yuu'y'$, {we can extend} $P$  to an odd $T$-cycle; a contradiction. \dia

\medskip
\noindent
{
{\it Claim~11: Suppose that $u_1,u_2 \in B_T$ for some $u_1 \in U_1$ and $u_2 \in U_2$. Let $y$ be a vertex in $N(U_1) \cap Y'_0$, which is a clique. Then $y \in Y'_0 \setminus Y_2'$ and $y \notin B_T$.}}

\smallskip
\noindent
{\it Proof of Claim~11.}
{By Claim~8, we find that} $u_2$ has at least two neighbours in $Y'_2$. {Hence,} $u_2$ has a neighbour $y' \in Y'_2$ such that $y' \neq y$. By Claim~6, there is an even $T$-path~$P$ in $G[A_0 \cup Y_2' \cup \{y\}]$ between $y$ and $y'$.
{As, by Claim 9, $U_1\cup (N(U_1) \cap Y'_0)$ is a clique, $y$ and $u_1$ are adjacent. 
As $U$ is a clique, $u_1$ and $u_2$ are adjacent.}
{By} using the path $yu_1u_2y'$, {we can extend} $P$ to an odd $T$-cycle. Since $V(P) \setminus \{y\} \subseteq A_0 \cup Y_2' \subseteq B_T$ and $u_1,u_2 \in B_T$, {this implies} that $y \notin B_T$ {and thus $y\notin Y_2'$}. \dia

\medskip
\noindent
{\it Claim~12:
{$B_T$ contains at most one vertex of each component of $G[Y_0']$.}}

\medskip
\noindent
{\it Proof of Claim~12.}
{Recall that $G[Y_0']$ is the disjoint union of complete graphs.  For contradiction, assume that $y$ and $y'$ are vertices in $Y_0' \cap B_T\subseteq N(A_0)\cap B_T$  that belong to the same complete graph.  By Claim~6, there is an even $T$-path between $y$ and $y'$ that uses only vertices of $A_0\cup Y_2'\cup \{y,y'\} \subseteq B_T$.  This can be extended to an odd $T$-cycle by adding the edge between $y$ and $y'$, a contradiction.}  \dia  

\medskip
\noindent
{\it Claim~13:
{If the graph $G[A_0\cup (Y_0'\cap B_T)\cup Z]$  contains an odd $T$-cycle $C$, then there is a component~$G[U]$ of $G[Z]$ such that the graph $G[A_0\cup (Y_0'\cap B_T)\cup U]$ contains an odd $T$-cycle~$C'$.}}

\medskip
\noindent
{\it Proof of Claim~13.}
{Recall that $A_0$ is a subset of $B_T$.
Let $C$ be an odd $T$-cycle in $G[A_0\cup (Y_0'\cap B_T)\cup Z]$.
We can assume that $C$ contains vertices from at least two components of $G[Z]$ else we are done.
Since $B_T \subseteq A_0 \cup Y_0'\cup Z$, and $Z$ is anti-complete to $A_0$,
the cycle $C$ is the concatenation of a number of the following two types of paths on at least two vertices: 
a path is of \emph{type~1} if it starts and ends in $Y_0'\cap B_T$ and is contained in $G[A_0\cup (Y_0'\cap B_T)]$; 
a path is of \emph{type~2} if it starts and ends in $Y_0'\cap B_T$ and all its internal vertices are contained in a component of $G[Z]$.
Since $A_0$ and, by Claim 12, $Y_0' \cap B_T$ are independent sets, the graph $G[A_0\cup (Y_0' \cap B_T)]$ is bipartite and all the subpaths of $C$ of type~1 are even. Moreover, since $C$ is an odd cycle, there is a subpath~$P$ of $C$
of type~2 that is odd. Let $y$ and $y'$ be the two end-vertices of $P$, so $y$ and $y'$ both belong to $Y_0'\cap B_T$, which is a subset of $N(A_0)$.
By Claim~6, there exists an even $T$-path $Q$ in $G[A_0\cup Y_2'\cup \{y,y'\}]$, which is an induced subgraph of
$G[A_0\cup (Y_0'\cap B_T)]$
as $Y_2'\cup \{y,y'\}\subseteq Y_0'\cap B_T$.
Let $G[U]$ be the component of $G[Z]$ that contains the internal vertices of~$P$.
Joining $P$ and $Q$ yields the desired 
odd $T$-cycle~$C'$ in $G[A_0\cup (Y_0'\cap B_T)\cup U]$.}
\dia

\medskip
\noindent
{
Let us now describe how we find the largest possible $B_T$.
Our approach is to first guess a constant number of vertices to be in $B_T$. To be more precise, we consider each possible pair of sets $A_0$, with $s \leq |A_0| \leq 2s-1$, and $Y'_2$, of size $s+2$, that conform with the definitions of this subcase and Claim~6; note that the latter can be verified in polynomial time as the graph $G[A_0\cup Y_2'\cup \{y,y'\}]$ with $y,y'\in N(A_0)$ has constant size.  The total number of choices is $O(n^{(2s-1)(s+2)})$. For each choice of these two sets we want to find the largest possible $B_T$ that contains $A_0\cup Y_2'$. Recall again that $N(A_0)\cap N(Y'_2)\subseteq S_T$ by Claim~7. Thus we want to include in $B_T$ as many vertices as possible from the set of vertices that we have not yet assigned to $B_T$ or $S_T$, that is, from
the set $(Y_0' \setminus Y_2')\cup Z$.}

{
So our goal is to find a subset $X \subseteq Y_0' \setminus Y_2'$ 
and subsets $U' \subseteq U$ for each component $G[U]$ of $G[Z]$ to add to $B_T$.  We will first find each $U'$ and then $X$.   In each case, we will choose the largest possible subset such that the vertices added to $B_T$ so far do not induce an odd $T$-cycle.  We must ask whether this greedy approach will lead us to find the largest possible $B_T$ (given the initial choice of $A_0$ and $Y_2'$); that is, does the choice of some $U'$ not affect the later choice of another $U'$, or of $X$?}
 
{By Claim~13, we may consider the components of $G[Z]$ independently. Hence, we must just show when we choose a subset $U'$ that this does not affect our later choice of $X$ in a way that will prevent us claiming we have found a maximum size~$B_T$.
We will see that sometimes as we choose the largest possible $U'$, this forces us to put some vertices from 
$Y_0' \setminus Y_2'$ 
into $S_T$ (to avoid $G[B_T]$ containing an odd $T$-cycle); that is, we remove the possibility that such vertices can later be chosen to be in $X$.  However, when this happens, we will show that in choosing~$U'$, all the vertices of  
$Y_0' \setminus Y_2'$ 
forced into $S_T$ belong to the same component of $G[Y_0' \setminus Y_2']$. By Claim~12, any choice of $X$ will contain at most one vertex from this component. Hence, the \emph{cost} of our choice of $U'$ is that $X$ might contain one fewer vertex~$x$ than is otherwise possible. 
We can allow this as long as putting $x$ in $B_T$ prevents us
from putting more than one vertex of 
$U$ in $B_T$, 
and we show that this 
will always be the case.}

{Before we show how to find $U'$, let  us make one more comment.  When $U'$ is found, we must check that, for any choice of $X$,
it holds that $G[A_0\cup Y_2' \cup X\cup U']$ does not contain an odd $T$-cycle. First suppose that $U'\cap U_2\neq \emptyset$;
say $U'$ contains a vertex $u \in U_2$. By Claim~10, we have that $U'$ contains no other vertex of $U_2$. By Claim~11, no vertex of $U_1 \cap U'$ has neighbours in $B_T \setminus U'$.  Thus $u$ is a cutvertex of $G[B_T]$. Hence, we need only check that each of $G[A_0\cup Y_2' \cup X\cup \{u\}]$ and $G[U']$ contains no odd $T$-cycle.  As the former is bipartite, all that we need to confirm is that $G[U']$ does not contain an odd $T$-cycle, or, equivalently, since $U'$ is a clique, that $|U'| \leq 2$ or $U' \cap T = \emptyset$.}

{Now suppose that $U' \cap U_2 = \emptyset$. Then $U'$ is a subset of either $U_0$ or $U_1$, as at most one of the latter two sets is non-empty by Claim~9.
First suppose that $U'\subseteq U_0$.
Then there are no edges from $U'$ to the rest of $B_T$. 
So, again, we need only check that
$|U'| \leq 2$ or $U' \cap T = \emptyset$. 
Now suppose that $U'\subseteq U_1$.
Then the vertices of $U'$ might have neighbours in $B_T \setminus U'$.  These neighbours must lie in $Y_0'$, and, by Claim~9, they must form a clique so, by Claim~12, there can only be one such neighbour which we denote $y$.    Thus $y$ is a cutvertex so we need only check that the complete graph $G[U' \cup \{y\}]$ contains no odd $T$-cycle, or, equivalently, that either $U' \cup \{y\}$ contains only two vertices or no vertex of $T$.}

{We now show how to find $U'$ by
distinguishing three cases; note that we can check in polynomial time which case applies. We use again that $|U' \cap U_2| \le 1$ by Claim~10 and at most one of $U_0$ and $U_1$ is non-empty by Claim~9.  When we have found $U'$, we must make the checks outlined above.}

\medskip
\noindent
{{\bf Case~2bi.} $U_1 = \emptyset$.}\\ 
{We build $U'$ by first letting it contain all vertices of $U_0 \setminus T$ and, if $U_2 \setminus T$ is non-empty, one arbitrary vertex from the set $U_2 \setminus T$.  If $U'$ now has at least two vertices, then we are done since the only other vertices we could add are in $T$, but then $U'$ would induce an odd $T$-cycle.  If, however, $U'$ has fewer than two vertices,
then we add up to two vertices from $U \cap T$ subject to the constraints that $U'$ can only contain one vertex of $U_2$ and cannot contain more than two vertices (else, again, we obtain an odd $T$-cycle).  }
{In this way we have made $U'$ as large as possible. As $U' \cap U_1 = \emptyset$, we need only note that $U'$ contains no vertex of $T$, or has at most two vertices, to confirm that $G[B_T]$ will contain no odd $T$-cycle using a vertex of $U'$.}

\medskip
\noindent
{{\bf Case 2bii.} $U_1 \neq \emptyset$ and $N(U_1) \cap Y_0'$ is a subset of $Y_2'$.}\\
{By Claim 9, $N(U_1) \cap Y_0'$ is a clique. Hence, in this case, all the vertices of $N(U_1) \cap Y_0'$ belong to the same component of $G[Y_2']$.   As the components of $G[Y_2']$ are isolated vertices, we have that $N(U_1) \cap Y_0'=\{y\}$ for some $y \in Y_2'$.  As $Y_2' \subseteq B_T$, we cannot, by Claim 11, add vertices from both $U_1$ and $U_2$ to~$U'$. As we can only ever add one vertex from $U_2$, we will instead add only vertices from~$U_1$.  If $y \in T$ or $U_1 \subseteq T$, then we must let $U'$ contain only one vertex of $U_1$ else $U' \cup \{y\}$ will induce an odd $T$-cycle. Otherwise, we let $U'$ contain all vertices of $U_1 \setminus T$; there is at least one and so we cannot also add any vertex of $U_1 \cap T$ without $U' \cup \{y\}$ inducing an odd $T$-cycle.  We note that we have made $U'$ as large as possible and that $U' \cup \{y\}$ either contains no vertex of $T$ or contains only two vertices.}

\medskip
\noindent
{{\bf Case 2biii.} $U_1 \neq \emptyset$ and $N(U_1) \cap Y_0'$ is not a subset of $Y_2'$.}\\
{Again, by Claim 9, $N(U_1) \cap Y_0'$ is a clique. Hence, in this case, all the vertices of $N(U_1) \cap Y_0'$ belong to the same component of $G[Y_0'\setminus Y_2']$.
We build $U'$ by first letting it contain all vertices of $U_1 \setminus T$ and, if it is non-empty, one vertex from the set $U_2 \setminus T$.  If $U'$ now has at least two vertices, then we are done since we cannot add vertices of $T$.   If $U'$  has exactly two vertices and one is from $U_2$, then, if possible, we exchange that vertex for a vertex in $U_1 \cap T$.
  If $U'$ has fewer than two vertices, then we add vertices from $U \cap T$ noting again that $U'$ can only contain one vertex of $U_2$ and cannot now contain more than two vertices.  When we add these vertices, we give preference to vertices from $U_1$ and so only add a vertex from $U_2$ if $|U_1| = 1$.}
  
{Again we have made $U'$ as large as possible.  
If $U'$ contains vertices from each of $U_1$ and $U_2$, then note that, by Claim 11, $X$ will contain no vertices from $N(U_1) \cap Y_0'$.  Thus these vertices are added to $S_T$ and we will never add one of them to $X$.  Similarly, suppose that $U' \subseteq U_1$ and $|U'| \geq 2$: then $X$ can contain no vertex from 
$N(U_1) \cap Y_0'$ that is in $T$
and, moreover, can contain none at all from $N(U_1) \cap Y_0'$ if one of the vertices of $U'$ is in $T$.  Again we add vertices to $S_T$ as needed.
Note that, considering how $U'$ was built, we cannot avoid this without choosing a smaller $U'$: if it was possible to find a $U'$ of the same size without using a vertex of $U_2$ or in $T$ we would have done so. As at most one of the vertices we move to $S_T$ could ever have been added to $B_T$, there is no cost to this.}

{It only remains to note that if $U'$ contains a vertex of $U_2$, then it contains no vertex of $T$ unless it contains only two vertices, and that if $U' \cap U_2 = \emptyset$, then, for each neighbour $y \in Y_0'$ of $U'$ that has not been placed in $S_T$,
the set $\{y\} \cap U'$ contains no vertex of $T$ unless it has only two vertices.}

\medskip
\noindent
{We have completed the choice of each $U'$.  It now only remains to note that we can build $X$ by choosing one 
arbitrary
vertex from each component of $G[Y_0']$ excluding, of course, those vertices that we moved to $S_T$ in the process of finding each $U'$ 
(recall that we cannot add two vertices from a component of $G[Y_0']$ due to Claim~12).}

{This completes our proof. We conclude that we found a largest $T$-bipartite subgraph~{$G[B_T]$} in polynomial time. Hence, $S_T=V\setminus B_T$ is a smallest odd $T$-cycle transversal of $G$.} \qed
\end{proof}

We are now ready to prove our almost-complete classification.

\medskip
\noindent
{\bf Theorem~\ref{t-main2} (restated).}
{\it Let $H$ be a graph with $H\neq sP_1+P_4$ for all $s\geq 1$. 
Then {\sc Subset Odd Cycle Transversal} on $H$-free graphs is polynomial-time solvable if 
$H=P_4$ or $H\ssi sP_1+P_3$ for some $s\geq 1$, and is \NP-complete~otherwise.}

\begin{proof}
If $H$ has a cycle or claw, we use Theorem~\ref{t-known2}. The cases $H=P_4$ and $H=2P_2$ follow from
Theorems~\ref{soct-split} and~\ref{soct-p4}, respectively. The remaining case, where $H\ssi sP_1+P_3$, follows from Theorem~\ref{soct-sp1p3}.\qed
\end{proof}

\section{Conclusions}\label{s-con}

We gave almost-complete classifications of the complexity of {\sc Subset Feedback Vertex Set} and {\sc Subset Odd Cycle Transversal} for $H$-free graphs. 
The only open case in each classification is when $H=sP_1+P_4$ for some $s\geq 1$, which is also open for {\sc Feedback Vertex Set} and {\sc Odd Cycle Transversal} for $H$-free graphs.
Our proof techniques for $H=sP_1+P_3$ do not carry over and new structural insights are needed in order to solve the missing cases.

\begin{open}\label{o-1}
Determine the complexity of {\sc (Subset) Feedback Vertex Set} and {\sc (Subset) Odd Cycle Transversal} for $(sP_1+P_4)$-free graphs, when $s \ge 1$.
\end{open}

\noindent
One of the main obstacles to solving Open Problem~\ref{o-1} is the case where there is a solution $S$ such that $G-S$ is a forest that contains (many) arbitrarily large stars. In particular, 
\cref{tree-sp1p3} no longer holds.

\medskip
\noindent
The vertex-weighted version of {\sc Subset Feedback Vertex Set} has also been studied for $H$-free graphs. Papadopoulos and Tzimas~\cite{PT20} proved that {\sc Weighted Subset Feedback Vertex Set} is polynomial-time solvable for $4P_1$-free graphs but \NP-complete for $5P_1$-free graphs (in contrast to the unweighted version). 
Bergougnoux et al.~\cite{BPT19} recently proved that {\sc Weighted Subset Feedback Vertex Set} is polynomial-time solvable for every class of bounded mim-width and thus for $P_4$-free graphs. Combining these results with Theorem~\ref{t-main} 
and two recent results, polynomial-time solvable cases when $H\in \{P_1+P_3,3P_1+P_2\}$~\cite{BJP}, leaves three gaps (see also~\cite{BJP}).

\begin{open}\label{o-2}
Determine the complexity of {\sc Weighted Subset Feedback Vertex Set} for $H$-free graphs when $H\in \{2P_1+P_3,P_1+P_4, 2P_1+P_4\}$.
\end{open}

\noindent
For the weighted variant, a vertex in $T$ may have a large weight that prevents it from being deleted in any solution; in particular, \cref{bound}, 
which plays a crucial role in our proofs, 
no longer holds.

\medskip
\noindent
As shown in~\cite{BJP}, the \NP-completeness proof given by Papadopoulos and Tzimas for {\sc Weighted Subset Feedback Vertex Set} on $5P_1$-free graphs~\cite{PT20} can also be used to show that the weighted version of {\sc Subset Odd Cycle Transversal} is \NP-complete for $5P_1$-free graphs. 
In the same paper~\cite{BJP} it is proven that {\sc Weighted Subset Odd Cycle Transversal} is polynomial-time solvable for $H$-free graphs if $H\in\{P_1+P_3,3P_1+P_2,P_4\}$.
Combining these results with Theorem~\ref{t-main2} leads to the same three open cases as listed in Open Problem~\ref{o-2}.

\begin{open}\label{o-3}
Determine the complexity of {\sc Weighted Subset Odd Cycle Transversal} for  $H$-free graphs when $H\in \{2P_1+P_3,P_1+P_4, 2P_1+P_4\}$.
\end{open} 

\medskip
\noindent
As previously mentioned, Bergougnoux et al.~\cite{BPT19} gave an \XP\ algorithm for {\sc Weighted Subset Feedback Vertex Set} parameterized by mim-width.

\begin{open}\label{o-4}
Does there exist an \XP\ algorithm for {\sc Subset Odd Cycle Transversal} and {\sc Weighted Subset Odd Cycle Transversal} parameterized by mim-width?
\end{open}

\noindent
We also  introduced the {\sc Subset Vertex Cover} problem and showed that this problem is polynomial-time solvable on $(sP_1+P_4)$-free graphs for every $s\geq 0$.
Lokshtanov et al.~\cite{LVV14} proved that {\sc Vertex Cover} is polynomial-time solvable for $P_5$-free graphs. Grzesik et al.~\cite{GKPP19} extended this result to $P_6$-free graphs.

\begin{open}\label{o-5}
Determine the complexity of {\sc Subset Vertex Cover} for $P_5$-free graphs.
\end{open}

\begin{open}\label{o-6}
Determine whether there exists an integer~$r\geq 5$ such that {\sc Subset Vertex Cover} is \NP-complete for $P_r$-free graphs.
\end{open}

\noindent
By Poljak's construction~\cite{Po74}, {\sc Vertex Cover} is \NP-complete for $H$-free graphs if~$H$ has a cycle.
However, {\sc Vertex Cover} becomes polynomial-time solvable on $K_{1,3}$-free graphs~\cite{Mi80,Sh80}. 
We did not research the complexity of {\sc Subset Vertex Cover} on $K_{1,3}$-free graphs and also leave this as an open problem for future work.

\begin{open}\label{o-7}
Determine the complexity of {\sc Subset Vertex Cover} for $K_{1,3}$-free graphs.
\end{open}

\noindent
Finally, several related transversal problems have been studied but not yet for $H$-free graphs. For example, the parameterized complexity of
{\sc Even Cycle Transversal} and {\sc Subset Even Cycle Transversal} has been addressed in~\cite{MRRS12} and~\cite{KKK12}, respectively.
Moreover, several variants for transversal problems have been studied for $H$-free graphs, 
but not the subset version:
for example,  
{\sc Connected Vertex Cover}, {\sc Connected Feedback Vertex Set} and {\sc Connected Odd Cycle Transversal}, and also for {\sc Independent Feedback Vertex Set} and {\sc Independent Odd Cycle Transversal}; see \cite{BDFJP19,CHJMP18,DJPPZ18,JPP20} for a number of recent results.
It would be interesting to solve the subset versions of these transversal problems for $H$-free graphs and to determine the connections amongst all these problems in a more general framework.

\medskip
\noindent
{\it Acknowledgments.} We thank three anonymous reviewers for helpful comments  that improved the presentation of several proofs.

\end{document}